% Options for packages loaded elsewhere
\PassOptionsToPackage{unicode}{hyperref}
\PassOptionsToPackage{hyphens}{url}
\PassOptionsToPackage{dvipsnames,svgnames,x11names}{xcolor}
\documentclass[
  12pt]{article}

\usepackage{amsmath,amssymb,amsthm}
\usepackage{iftex}
\ifPDFTeX
  \usepackage[T1]{fontenc}
  \usepackage[utf8]{inputenc}
  \usepackage{textcomp} % provide euro and other symbols
\else % if luatex or xetex
  \usepackage{unicode-math}
  \defaultfontfeatures{Scale=MatchLowercase}
  \defaultfontfeatures[\rmfamily]{Ligatures=TeX,Scale=1}
\fi
\usepackage{lmodern}
\ifPDFTeX\else  
    % xetex/luatex font selection
\fi
% Use upquote if available, for straight quotes in verbatim environments
\IfFileExists{upquote.sty}{\usepackage{upquote}}{}
\IfFileExists{microtype.sty}{% use microtype if available
  \usepackage[]{microtype}
  \UseMicrotypeSet[protrusion]{basicmath} % disable protrusion for tt fonts
}{}
\makeatletter
\@ifundefined{KOMAClassName}{% if non-KOMA class
  \IfFileExists{parskip.sty}{%
    \usepackage{parskip}
  }{% else
    \setlength{\parindent}{0pt}
    \setlength{\parskip}{6pt plus 2pt minus 1pt}}
}{% if KOMA class
  \KOMAoptions{parskip=half}}
\makeatother
\usepackage{xcolor}
\setlength{\emergencystretch}{3em} % prevent overfull lines
\setcounter{secnumdepth}{5}
% Make \paragraph and \subparagraph free-standing
\makeatletter
\ifx\paragraph\undefined\else
  \let\oldparagraph\paragraph
  \renewcommand{\paragraph}{
    \@ifstar
      \xxxParagraphStar
      \xxxParagraphNoStar
  }
  \newcommand{\xxxParagraphStar}[1]{\oldparagraph*{#1}\mbox{}}
  \newcommand{\xxxParagraphNoStar}[1]{\oldparagraph{#1}\mbox{}}
\fi
\ifx\subparagraph\undefined\else
  \let\oldsubparagraph\subparagraph
  \renewcommand{\subparagraph}{
    \@ifstar
      \xxxSubParagraphStar
      \xxxSubParagraphNoStar
  }
  \newcommand{\xxxSubParagraphStar}[1]{\oldsubparagraph*{#1}\mbox{}}
  \newcommand{\xxxSubParagraphNoStar}[1]{\oldsubparagraph{#1}\mbox{}}
\fi
\makeatother

\usepackage{longtable,booktabs,array}
\usepackage{calc} % for calculating minipage widths
% Correct order of tables after \paragraph or \subparagraph
\usepackage{etoolbox}
\makeatletter
\patchcmd\longtable{\par}{\if@noskipsec\mbox{}\fi\par}{}{}
\makeatother
% Allow footnotes in longtable head/foot
\IfFileExists{footnotehyper.sty}{\usepackage{footnotehyper}}{\usepackage{footnote}}
\makesavenoteenv{longtable}
\usepackage{graphicx}
\makeatletter
\def\maxwidth{\ifdim\Gin@nat@width>\linewidth\linewidth\else\Gin@nat@width\fi}
\def\maxheight{\ifdim\Gin@nat@height>\textheight\textheight\else\Gin@nat@height\fi}
\makeatother
% Scale images if necessary, so that they will not overflow the page
% margins by default, and it is still possible to overwrite the defaults
% using explicit options in \includegraphics[width, height, ...]{}
\setkeys{Gin}{width=\maxwidth,height=\maxheight,keepaspectratio}
% Set default figure placement to htbp
\makeatletter
\def\fps@figure{htbp}
\makeatother

\addtolength{\oddsidemargin}{-.5in}%
\addtolength{\evensidemargin}{-.1in}%
\addtolength{\textwidth}{1in}%
\addtolength{\textheight}{1.7in}%
\addtolength{\topmargin}{-1in}
\makeatletter
\@ifpackageloaded{caption}{}{\usepackage{caption}}
\AtBeginDocument{%
\ifdefined\contentsname
  \renewcommand*\contentsname{Table of contents}
\else
  \newcommand\contentsname{Table of contents}
\fi
\ifdefined\listfigurename
  \renewcommand*\listfigurename{List of Figures}
\else
  \newcommand\listfigurename{List of Figures}
\fi
\ifdefined\listtablename
  \renewcommand*\listtablename{List of Tables}
\else
  \newcommand\listtablename{List of Tables}
\fi
\ifdefined\figurename
  \renewcommand*\figurename{Figure}
\else
  \newcommand\figurename{Figure}
\fi
\ifdefined\tablename
  \renewcommand*\tablename{Table}
\else
  \newcommand\tablename{Table}
\fi
}
\@ifpackageloaded{float}{}{\usepackage{float}}
\floatstyle{ruled}
\@ifundefined{c@chapter}{\newfloat{codelisting}{h}{lop}}{\newfloat{codelisting}{h}{lop}[chapter]}
\floatname{codelisting}{Listing}

\makeatother
\makeatletter
\makeatother
\makeatletter
\@ifpackageloaded{caption}{}{\usepackage{caption}}
\@ifpackageloaded{subcaption}{}{\usepackage{subcaption}}
\makeatother

\ifLuaTeX
  \usepackage{selnolig}  % disable illegal ligatures
\fi
\usepackage[]{natbib}
\bibliographystyle{apalike}
\usepackage{bookmark}

\IfFileExists{xurl.sty}{\usepackage{xurl}}{} % add URL line breaks if available
\urlstyle{same} % disable monospaced font for URLs
\hypersetup{
  pdftitle={Title},
  pdfauthor={Author 1; Author 2},
  pdfkeywords={3 to 6 keywords, that do not appear in the title},
  colorlinks=true,
  linkcolor={blue},
  filecolor={Maroon},
  citecolor={Blue},
  urlcolor={Blue},
  pdfcreator={LaTeX via pandoc}}

\usepackage{mathtools}
\usepackage{bbm,bm}
\usepackage{booktabs}
\usepackage{subdepth}
\usepackage{url}
\usepackage{ulem}
\usepackage{setspace}
\allowdisplaybreaks
\graphicspath{{Fig/}}
\usepackage{ulem}

\newcommand{\RR}{\mathbb{R}}
\newcommand{\vc}[1]{\bm{#1}}
\newcommand{\diff}{\, \textnormal{d}}
\newcommand{\PP}{\mathbb{P}}
\newcommand{\EE}{\mathbb{E}}

\newcommand{\summ}{\textnormal{sum}}
\newcommand{\dto}{\overset{\textnormal{d}}{\rightarrow}}

\newcommand{\vto}{\overset{v}{\rightarrow}}
\newcommand{\norm}[1]{\lVert {#1} \rVert}

\newcommand{\dm}{d}
\newcommand{\fac}{q}
\newcommand{\vX}{\vc{X}}
\newcommand{\vY}{\vc{Y}}
\newcommand{\vx}{\vc{x}}
\newcommand{\vy}{\vc{y}}
\newcommand{\siid}{\overset{\textnormal{iid}}{\sim}}

\newtheorem{theorem}{Theorem}[section]
\newtheorem{prop}[theorem]{Proposition}%
\newtheorem{lem}[theorem]{Lemma}%

\theoremstyle{definition}

\newtheorem{exa}[theorem]{Example}%
\newtheorem{rem}[theorem]{Remark}%

\setcounter{MaxMatrixCols}{20}
\graphicspath{{Fig/}}

\newcommand{\anon}{1}

\DeclareRobustCommand{\change}[1]{{\color{black}#1}}

%set the key \texttt{anon} to ``0'' to hide the authors and acknowledgements,
%  producing the required anonymized version. 
%Set the key \texttt{anon} to ``1'' to produce the manuscript with author details and
% acknowledgments. 

\pdfminorversion=4

\begin{document}

\def\spacingset#1{\renewcommand{\baselinestretch}%
{#1}\small\normalsize} \spacingset{1}

%%%%%%%%%%%%%%%%%%%%%%%%%%%%%%%%%%%%%%%%%%%%%%%%%%%%%%%%%%%%%%%%%%%%%%%%%%%%%%

\if1\anon
{
  \title{\bf Estimating probabilities of multivariate failure sets based on pairwise tail dependence coefficients}
  \author{Anna Kiriliouk \thanks{
    Anna Kiriliouk gratefully acknowledges funding from the Fonds de la Recherche Scientifique - FNRS (grant number J.0124.20)}\hspace{.2cm}\\
    LIDAM/ISBA, UCLouvain \\
    and \\
    Chen Zhou \\
    Econometric Institute, Erasmus School of Economics, \\
    Erasmus University Rotterdam}
  \maketitle
} \fi

\if0\anon
{
  \bigskip
  \bigskip
  \bigskip
  \begin{center}
    {\LARGE\bf Title}
\end{center}
  \medskip
} \fi

\bigskip
\begin{abstract}
Estimating probabilities of extreme events involving multiple risk factors is a critical challenge in fields such as finance and climate science. This paper proposes a  parametric approach to estimate the probability that a multivariate random vector falls into an extreme failure set, based on the information in the tail pairwise dependence matrix (TPDM) only. The TPDM provides a summary of tail dependence for all pairs of components of the random vector. We propose an efficient algorithm to obtain approximate completely positive decompositions of the TPDM, enabling the construction of a max-linear model whose TPDM approximates that of the original random vector. We also provide conditions under which the approximation turns out to be exact. Based on the decompositions, we can construct max-linear random vectors to estimate failure probabilities, exploiting their computational simplicity. We apply the proposed method to estimate probabilities of extreme events for real-world datasets, including industry portfolio returns and maximal wind speeds, demonstrating its practical utility for risk assessment.\end{abstract}

\noindent%
{\it Keywords:} Failure probability, Max-linear model, Multivariate regular variation, Tail dependence, Tail pairwise dependence matrix.
\vfill

\newpage
\spacingset{1.8} % DON'T change the spacing!

\section{Introduction}

\change{
Extreme value statistics deal with the characterization of extreme events, such as stock market crashes, that occur with low frequency but with a potentially severe impact. A typical applied question is to estimate the (low) probability of such an extreme event. When handling compound extreme events, we are interested in estimating the probability that a $d$-dimensional random vector $\vc{X}$ falls into some ``extreme region'' $C$ (also called \emph{failure region}). For example, in the context of portfolio risk, the components of $\vc{X}$ represent negative log-returns of $d$ stocks in a portfolio. Another example is in catastrophic climate events, where $\vc{X}$ may represent maximum daily rainfall amounts at $d$ monitoring stations. To estimate such a probability $\PP[\vc{X} \in C]$, it is important to take the tail dependence of $\vX$ into account, i.e., the tendency of extremes to occur in several components simultaneously. 
%Note that, even though our method is motivated by and formulated in an extreme-value framework, it could be easily adapted to tackle the estimation of failure probabilities {for different regions $C$}, as long as $\PP[\vc{X} \in C]$ is small. 

In this study, we particularly focus on failure regions containing few or even no observed data points. This lack of observations makes it challenging to estimate the associated failure probability non-parametrically.
%This resembles the so-called ``curse of dimensionality''. 
One potential solution is to employ appropriate parametric models for multivariate extremes and to estimate the model parameters using intermediate level observations. However, as the dimension of $\vc{X}$ increases, such a parametric approach becomes increasingly complex, which leads to the following dilemma. On the one hand, models with a small number of parameters tend to oversimplify the tail dependence structure; on the other hand, models with many parameters are difficult to estimate, especially without additional information (e.g. geographical locations of weather stations).

This paper provides a parametric model for the case where the dimension $d$ of the underlying random vector $\vX$ is moderate to high. The model can capture the so-called \emph{tail dependence coefficients} for all pairs of components of $\vX$. Despite the large number of parameters (up to $d(d+1)/2$), estimation is based on a simple and fast algorithm, allowing one to approximate the failure probability using the estimated model. The parametric model we use can be summarized via bivariate dependence coefficients, similar to covariances in the (centered) multivariate normal distribution. 

In the extreme-value literature, tail dependence of $\vc{X}$ is characterized in various ways: by the exponent measure, the angular measure, and the stable tail dependence function, among others \citep{beirlant2004, dehaanferreira2006}.  While such objects fully characterize the extremal behavior of a random vector, as the dimension increases, their estimation suffers from the well-known \emph{curse of dimensionality}. Instead,
one may consider a partial summary of tail dependence by focusing on  the tail dependence coefficients $\chi$ or the extremal coefficients $\theta$ for all pairs, see e.g. \citet{coles1999}. This is similar to summarizing the dependence structure of a random vector by the covariance or correlation matrix. Most existing parametric approaches for multivariate extremes also use pairwise tail dependence coefficients as model diagnostics and for assessing goodness-of-fit; see e.g., \citet{huser2022advances} and \citet{hentschel2025statistical}.

In this paper, we focus on a generalization of the so-called \emph{tail pairwise dependence matrix} (TPDM), which collects all pairwise tail dependence coefficients. The TPDM was originally proposed by \citet{larsson2012} in the bivariate case, under the assumption of multivariate regular variation, and has gained in popularity since the work of \citet{cooley2019}; for example, \citet{kluppelberg2019} use it to estimate the parameters of an extreme Bayesian network, \citet{fomichov2020} propose a clustering algorithm inspired by spherical $k$-means, where the means are replaced by the Perron-Frobenius eigenvectors of the TPDM, and \citet{kim2022} study the pairwise tail dependence between vectors of scores of functional data. \change{We show that our generalized definition has desirable invariance properties, and is applicable for regularly varying vectors with different tail indices. In addition, our TPDM can be directly related to the so-called \emph{angular measure}, extending its applicability beyond the standard multivariate regularly varying context to the broader class of distributions belonging to the max-domain of attraction of multivariate generalized extreme value distributions.}  %describing the tail dependence. By relating it to the angular measure, the definition of TPDM extends beyond the multivariate regular variation model and is applicable to 

The tail pairwise dependence matrix is closely related to the \emph{max-linear model}, a simple but rather flexible multivariate extreme-value model. In such a model, each component of a $d$-dimensional vector can be interpreted as the maximum shock among a set of $q$ independent heavy-tailed factors. The max-linear model has $dq$ parameters (or $d(q-1)$ when margins are standardized), usually stocked in a parameter matrix $A \in \RR^{d \times q}$. 
Max-linear models are, among other applications, used to model extremes on directed acyclic graphs \citep{gissibl2015} or as latent factor structures for financial returns \citep{cui2018}. If the number of parameters is limited, one can use minimum distance estimation  \citep{einmahl2018} or spherical $k$-means \citep{janssen2020} to estimate a max-linear model. \citet{kiriliouk2020} presents hypothesis tests aiding in deciding the size of $q$. Theoretically, \citet{fougeres2013} proved that the max-linear model is dense in the class of $d$-dimensional multivariate extreme-value distributions. 

\citet{cooley2019} showed  that there exists a finite $q \in \mathbb{N}$ such that the tail pairwise dependence matrix $\Sigma_{\vX}$ of any multivariate regularly varying random vector $\vc{X}$ with $\alpha = 2$ is equal to that of a max-linear model with $q$ factors via the relation $\Sigma_{\vX}=AA^T$. 
In practice, the parameter matrix $A$ of the max-linear model can be obtained through a completely positive decomposition of $\Sigma_{\vX}$. 
A matrix $S$ is called \emph{completely positive} if there exists a non-negative matrix $A$ such that $S = A A^T$; finding this matrix $A$ amounts to finding a \emph{completely positive decomposition} of $S$. Note that the completely positive decomposition is not unique, i.e., multiple max-linear models may share the same TPDM. Positive decompositions are well studied in the linear algebra literature \citep{barioli2003,groetzner2020,bot2021}, where it is shown that the maximum number of columns of $A$ is $d(d+1)/2 - 4$. Algorithms exists to obtain positive decompositions, but they are computationally intensive.

In this paper, we propose an algorithm to obtain an \emph{approximate} completely positive decomposition $\Sigma_{\vX} \approx A A^T$, where $A$ is a $(d \times d)$-dimensional matrix with non-negative elements. In other words, we aim at choosing $q=d$, which is in line with classical principal component analysis. The algorithm is easy to compute and applicable for large $d$. We show that the decomposition is exact when $\vc{X}$ follows a max-linear model with a triangular coefficient matrix $A$. 
%A triangular max-linear model is related to the causal extremal dependence between components of $\vX$, supported on directed acyclic graphs \ak{Should we still mention this? We don't do anything with causality/DAGs ourselves}. %
%Achieving $q=d$ with a triangular coefficient matrix $A$ in a max-linear model is related to the causal extremal dependence between components of $\vX$, supported on directed acyclic graphs. %; see Remark \ref{rem:dag}. 
 %and is more efficient in simulating the constructed max-linear model. 
When the decomposition is not exact, it can still be used to construct a max-linear model whose TPDM matches $\Sigma_{\vX}$ for all off-diagonal entries but possibly overestimates its diagonal values.  %By contrast, the TPDM of the constructed model may have higher diagonal values than the original $\Sigma_{\vX}$. 
In that case, for a certain class of failure regions, the corresponding probability of a failure set based on the constructed model can serve as a conservative estimate for the one based on the original random vector $\vX$. In addition, the constructed max-linear model is simple to simulate from, allowing for efficient estimation of failure probabilities via Monte Carlo methods whenever analytical expressions are not available. The simulation-based method can also be used when the underlying distribution is not multivariate regularly varying, but belongs to the max-domain of attraction of multivariate generalized extreme value distributions.
%\change{The resulting max-linear model is simple to simulate from, allowing for efficient estimation of failure probabilities via Monte Carlo methods. The simulation based method can be extended to the case when the underly distribution is not multivariate regularly varying.}
%\ak{\sout{In addition, it is fast enough to obtain various decompositions corresponding to a single TPDM and hence to characterize the model uncertainty.}} 
%\ak{We shouldn't talk about model uncertainty anymore. But then there is no longer an advantage in obtaining multiple decompositions?}

Other estimators of multivariate failure probabilities are proposed in \citet{drees2015} and \citet{valk2016}. While these estimators are based on fewer assumptions than ours, their computational complexity does not allow estimation of failure probabilities for, say, $d \geq 5$. Other related literature \citep{cai2011,einmahl2013,he2017} focuses on the semi- or non-parametric estimation of multivariate extreme quantile regions (also limited to low dimensions), thus employing a narrower definition of a failure region. To the best of our knowledge, a general approach for estimating $\PP[\vc{X} \in C]$ where the dimension of $\vc{X}$ is moderate to high and $C$ is user-defined does not exist to date.

The outline of the paper is as follows. Section~\ref{sec:background} introduces our generalization of the TPDM and shows its theoretical properties. Section~\ref{sec:alg} presents the decomposition algorithm and assesses its accuracy in approximating failure probabilities. Statistical estimation of the TPDM and the associated failure probabilities is discussed and numerically assessed in Section~\ref{sec:estim}. We apply the proposed decomposition algorithm to real data of industry portfolio returns and maximal wind speeds in Section~\ref{sec:appl}. Finally, Section~\ref{sec:disc} concludes. The supplement contains all proofs and an additional application for heavy rainfall in Switzerland.}
%All proofs are deferred to Appendix~\ref{appA}. An additional application, for heavy rainfall in Switzerland, can be found in Appendix~\ref{appB}. }

\section{The tail pairwise dependence matrix}\label{sec:background}

We start by introducing our generalization of the tail pairwise dependence matrix (TPDM) for a multivariate regularly varying random vector in Section \ref{sec:mrv}. Then, in Section \ref{sec:maxlin}, we present the max-linear model and its relation to the TPDM. In Section \ref{sec:failure}, we discuss how the probability of a failure region can be approximated under a max-linear model. Finally, in Section \ref{sec:general}, we discuss how to go beyond multivariate regular variation, assuming only a general multivariate domain of attraction condition. %handle the general domain of attraction model, which goes beyond multivariate regular variation. 
%\cz{I added this entire introductory paragraph for this section. In such a way, existing subsections can stay as they were. The last subsection is really new, and the goal is to clarify the discussion of ``focusing on multivariate regular variation only''.}

\subsection{Tail dependence and multivariate regular variation}\label{sec:mrv}
\change{Let $\vc{X} = (X_1, \ldots,X_d)^T$ be a random vector on $[0,\infty)^d$} with joint cumulative distribution function $F$ and continuous marginal distributions $F_1, \ldots, F_d$. 
%Let $F^{\leftarrow}  (y) = \inf \{x: F(x) \geq y\}$ denote its quantile function and 
Write $\EE_0 = [0,\infty]^{\dm} \setminus \{\vc{0}\}$.
We \change{start by assuming} that $\vc{X}$ is \emph{multivariate regularly varying}, i.e., there exists a sequence $b_n \rightarrow \infty$ and a {non-degenerate} limit measure $\nu_{\vc{X}}$ such that
\begin{equation}\label{eq:regvar}
n \, \PP \left[ b_n^{-1} \vc{X}  \in \,\cdot \, \right] \vto \nu_{\vX} (\, \cdot \,), \qquad \text{as } n \rightarrow \infty,
\end{equation}
where $\vto$ denotes vague convergence in the space of non-negative Radon measures on $\EE_0$ and $\nu_{\vX}$ is called the \emph{exponent measure} \citep[Chapter 6]{resnick2007}.  Then there exists an $\alpha > 0$, called the \emph{tail index} of $\vX$, such that the
limit measure $\nu_{\vX}$ satisfies the homogeneity property of order $\alpha$, i.e., $\nu_{\vX}  (tC) = t^{-\alpha} \nu_{\vX} (C)$ for any constant $t > 0$ and any ($\nu_{\vX}$-measurable) Borel set $C \subset \EE_0$. The sequence $b_n$ can be shown to satisfy $b_n \sim L(n) n^{1/\alpha}$, where $L$ denotes a slowly varying function. 

The homogeneity property suggests that it is possible to express $\vc{X}$ and correspondingly the limit relation in \eqref{eq:regvar} in polar coordinates.
Let $\norm{\, \cdot \,}$ denote any norm on $\RR^d$ and consider the transformation $T: 	\EE_0 \mapsto (0, \infty) \times \mathbb{S}_{\norm{\cdot}}$, where $\mathbb{S}_{\norm{\cdot}} = \{\vc{w} \in \EE_0: \norm{ \vc{w}} = 1 \}$, and
\begin{equation}\label{eq:polar0}
T(\vc{X}) = \left( \norm{ \vc{X} }, \frac{\vc{X}}{ \norm{ \vc{X}}} \right) =: (R, \vc{W}).
\end{equation}
Then \eqref{eq:regvar} is equivalent to the existence of a finite measure $H_{\vX,\norm{\cdot}}$ (called the \emph{angular measure}) on $\mathbb{S}_{\norm{\cdot}}$ and a sequence $b_{n} \rightarrow \infty$ such that 
\begin{equation}\label{eq:polar}
n \, \PP \left[ \left( b_{n}^{-1} R, \vc{W} \right) \in \,  \cdot \, \right]  \vto \mu_{\alpha} \times H_{\vX,\norm{\cdot}}  (\, \cdot \,) , \qquad 
\text{as } n \rightarrow \infty,
\end{equation}
in the space of non-negative Radon measures on $(0,\infty] \times \mathbb{S}_{\norm{\cdot}}$, where $\mu_{\alpha}$ is a measure on $[0, \infty)$ given by
$\mu_{\alpha} \left( (x, \infty] \right) = x^{-\alpha}$ for $x > 0$ \citep[Theorem 6.1]{resnick2007}. {We will sometimes use the $L_{\alpha}$ norm, denoted $\norm{\, \cdot \,}_{\alpha}$, i.e., for $\vx \in [0, \infty)^d$, $\norm{\vx}_{\alpha} = \left( \sum_{j=1}^d x_j^{\alpha} \right)^{1/\alpha}$ for $\alpha \geq 1$. Whenever the choice of norm is obvious or does not matter, we will write $\mathbb{S}$ and $H_{\vX}$ without the subscript ${\norm{\, \cdot \,}}$.}

Let $B \subset \mathbb{S}$ be $H_{\vc{X}}$-measurable and consider the set 
$C_{s, B} := \{\vc{x} \in \EE_0: \norm{\vx}  > s,  \vx/\norm{\vx} \in B \}$.
Equation \eqref{eq:polar} implies $\nu_{\vX} (C_{s,B}) = s^{-\alpha} H_{\vX}(B)$ and $\nu_{\vX} (\diff r \times \diff \vc{w}) = \alpha r^{-\alpha - 1}\diff r \diff H_{\vX} (\vc{w})$. {Denote $m := \nu (C_{1,\mathbb{S}})= H_{\vc{X}} (\mathbb{S}) > 0$. Then immediately we have that 
\begin{equation} \label{eq:tail of R}
    \lim_{n\to\infty} n \, \PP[R > b_n x] = m x^{-\alpha }.
\end{equation}
 }
This relation implies that the choice of $b_n$ is not unique. Nevertheless, it must be a regularly varying function of $n$ with index $1/\alpha$. For simplicity, we choose $b_n=n^{1/\alpha}$ in the rest of the paper and assume that \eqref{eq:tail of R} holds. %which describes the tail of $R$. 
% \ak{The next sentence sounds quite negative: I would remove it because I think the choice $b_n=n^{1/\alpha}$ is very standard.} \cz{I agree with the "negativity", but it is really needed. If $R$ is regularly varying but not in the form of Pareto, one cannot choose $n^{1/\alpha}$!} 
 Such a choice slightly limits the potential distributions of $R$ but is nevertheless in line with the max-linear models presented below.

{We remark that the angular measure $H_{\vc{X}}$ is related to the probability measure $S_{\vc{X}}$ in \citet[Chapter 6]{resnick2007} through $H_{\vc{X}} = m S_{\vc{X}}$} 
% for
% $m := \nu (C_{1,\mathbb{S}})= H_{\vc{X}} (\mathbb{S}) > 0$. The sequence $b_n$ in \eqref{eq:polar} is determined by 
% \[
% n \, \PP[R > b_n x] \rightarrow m x^{-\alpha }, \qquad \text{ as } n \rightarrow \infty,
% \] 
%Hence $\PP[R > x] \sim m x^{-\alpha} (b_n^{\alpha}/n)$ for $x \rightarrow \infty$.
and that the marginals of $\vX$ satisfy
\begin{equation}\label{eq:scale}
\lim_{n \rightarrow \infty }n \, \mathbb{P} \left[  X_j > n^{1/\alpha} x \right] =    x^{- \alpha} \int_{\mathbb{S}} w_j^{\alpha} \diff H_{\vX} (\vc{w}).
\end{equation}
%\cz{I changed $b_n$ in the relation above to $n^{1/\alpha}$. If needed I can do that for the rest of the paper.}
%Hence $\PP[X_j > x] \sim \left( \int w_j^{\alpha} \diff H_{\vX}(\vc{w}) \right) x^{-\alpha} (b_n^{\alpha}/n)$ for $x \rightarrow \infty$.
%In the following, we set $b_n = n^{1/\alpha}$, so that all scaling information is contained in the angular measure $H_{\vX}$.

To summarize the dependence of a multivariate regularly varying vector $\vX$, we define the \emph{tail pairwise dependence matrix (TPDM)} $\Sigma_{\vc{X}}$ as 
\begin{equation}\label{eq:tpdm}
\Sigma_{\vX} = (\sigma_{\vX_{jk}})_{j,k=1,\ldots,\dm}, \qquad \text{with   } \,\,\, \sigma_{\vX_{jk}} = \int_{\mathbb{S} } w_j^{\alpha/2} w_k^{\alpha/2} \diff H_{\vX} (\vc{w}).
\end{equation}
%\ak{Chen, can you please check the below remark? I was actually trying to show something else: I noticed that no matter to which value of $\alpha$ I standardize the data, the TPDM estimates are identical. This is actually great news since it means that the value of $\alpha$ does not matter! However, when I wanted to show this formally, I ended op showing that the choice of norm doesn't matter....but then what is the point of us choosing the $L_{\alpha}$ norm as below? }
%\cz{Interestingly, I was reading \citet{larsson2012} as well. I think your remark is correct. The remark is quite essential: it says that when estimating the TPDM, we do not need to estimate alpha first. Actually there is no estimation uncertainty coming from marginals.} \ak{I'm not sure - the remark shows that the choice of norm doesn't matter (so we can chooce anything else besides $L_{\alpha}$ below). However, what I observed in data applications is something different: when the data is on the original scale, then of course, the TPDM estimate will be different for different values of $\hat{\alpha}$. However, when data has been standardised as in Section 2.3, it doesn't matter which $\alpha$ we choose ($1$ or $2$ or anything else), the TPDM estimate will be the same.}
We will write $\vc{X} \sim \Sigma_{\vX}$ if $\Sigma_{\vX}$ is the TPDM of $\vc{X}$.
When no confusion can arise, we will write the elements of $\Sigma_{\vc{X}}$ as $\sigma_{jk}$. Two variables $X_j,X_k$ are \emph{tail dependent} if and only if $\sigma_{jk} > 0$.  Recall that \emph{tail independence} corresponds to the scenario where %the components of $\vX$ are dependent but 
all mass of the exponent measure $\nu_{\vX}$ is concentrated on the axes. %assigns no mass off the axes.}
%\ak{Need to be more precise?} 
From \eqref{eq:scale}, we observe that 
\begin{equation}\label{eq:scale2}
\lim_{n \rightarrow \infty }n \, \mathbb{P} \left[  X_j > n^{1/\alpha} x \right] = x^{-\alpha} \sigma_{{jj}}. 
\end{equation}
Hence, the (limiting) scale of $X_j$ is $\left( \sigma_{{jj}} \right)^{1/\alpha}$, and we say that the margins of $\vX$ are \emph{standardized} with index $\alpha$ whenever $\sigma_{jj} = 1$ for all $j=1,\ldots,d$.

Note that our definition of a TPDM is different from the one in \citet{larsson2012}, whose integrand does not depend on $\alpha$ (i.e. the two definitions coincide only when the margins of $\vX$ have been standardized to $\alpha = 2$); see also \citet{samorodnitsky1994}. The following proposition illustrates some \change{theoretical} properties of the TPDM. 
\begin{prop}\label{prop:properties}
Let $\Sigma_{\vX}$ be as defined in \eqref{eq:tpdm} and let $R$ and $\bm{W}$ be as defined in \eqref{eq:polar0}.
\begin{enumerate}
\item The elements of the TPDM can be written as 
$
\sigma_{\vX_{jk}} = m \lim_{x \rightarrow \infty} \EE \left[ W_j^{\alpha/2} W_k^{\alpha/2} \mid R > x \right].
$
\item The TPDM does not depend on the choice of norm; i.e., for any two norms $\norm{\, \cdot \,}$ and $\norm{\, \cdot \,}'$, we have 
%Write the angular measure and the unit sphere as $H_{\vX, \norm{\cdot}}$ and $\mathbb{S}_{\norm{\cdot}}$ to emphasize the choice of norm. 
\[
\sigma_{\vX_{jk}} =
\int_{\mathbb{S}_{\norm{\cdot}}} w_j^{\alpha/2}  w_k^{\alpha/2}  H_{\vX, \norm{\cdot}}  (\diff \vc{w}) = \int_{\mathbb{S}_{\norm{\cdot}'}}  (w_j')^{\alpha/2} (w_k')^{\alpha/2}  H_{\vX, \norm{\cdot}'}  (\diff \vc{w}').
\]
\item Let $\vX$ be multivariate regularly varying with tail index $\alpha$ and angular measure $H_{\vX}$. If 
$\vX^* = (X^*_1,\ldots,X^*_d)^T = (X_1^{\alpha/\alpha^*},\ldots,X_d^{\alpha/\alpha^*})^T$,
then $\vX^*$ is multivariate regularly varying with tail index $\alpha^* > 0$. In addition, if $H_{\vX^*}$ is the angular measure of $\vX^*$ obtained using $b_n^*=n^{1/\alpha^*}$, we have  
$\Sigma_{\vX} = \Sigma_{\vX^*}$. 
\item Let $\vX$ and $\vY$ be two independent multivariate regularly varying vectors with the same tail index $\alpha$. The corresponding angular measures are  $H_{\vX}$ and $H_{\vY}$ respectively, when choosing $b_n=n^{1/\alpha}$. Then the component-wise maximum $\max(\vX, \vY)$ is multivariate regularly varying with the same tail index $\alpha$ and with angular measure $H_{\vX}+H_{\vY}$. Moreover, the TPDM of $\max(\vX, \vY)$ is 
$\Sigma_{\max(\vX, \vY)}=\Sigma_{\vX}+\Sigma_{\vY}$.
\end{enumerate}
\end{prop}
%\cz{I added 4. For 3. It was not correct. We need to add the choice of $b_n$. In addition, I formulate it a bit stronger regarding multivariate regular variation, not only marginal. This is what you asked in [CHECK: $\vX^*$ is still multivariate regularly varying but with a different angular measure than $H_{\vX}$?]}
Proposition~\ref{prop:properties} shows that the TPDM is invariant both to the choice of norm and to the choice of the (marginal) tail index. Using the $L_{\alpha}$ norm, we see that the total mass of the angular measure equals
$m = H_{\vX}(\mathbb{S}) = \int_{\mathbb{S} }\norm{\vc{w}}_{\alpha}^{\alpha} \diff H_{\vX} (\vc{w})  = \sum_{j=1}^{\dm}  \sigma_{{jj}}$.
%{Hence, for standardized margins, we have $m = d$.}
 %In the following, we will always use the $L_{\alpha}$ norm in the definition of the angular measure $H_{\vX}$.  

% \ak{The fact that $\Sigma_{\vX}$ is positive semi-definite and elementwise non-negative follows from the fact that it is completely positive. Hence, we can remove the proposition below (it is out of place anyway). I added this to Remark 1 but we can make it more prominent if needed.}
% \begin{prop}[Generalization of \citet{cooley2019}]\label{prop:semidefinite}
% \ak{\sout{The tail pairwise dependence matrix $\Sigma_{\vX}$, as defined in \eqref{eq:tpdm}, is positive semi-definite.}}
% \end{prop}
%i.e., if the margins of $\vc{X}$ have scale 1, then the total mass is $\dm$. 

\subsection{Max-linear model}\label{sec:maxlin}
We are interested in calculating the probability that $\vX$ falls into some extreme failure region  $C \subset \EE_0$. In practice, to estimate such probabilities $\PP[\vX \in C]$, we need to assume a parametric model for the exponent measure $\nu$, or, equivalently, for the angular measure $H$. A simple yet effective parametric model for the goal of estimating failure probabilities is the \emph{max-linear model}. 

Let $A$ denote a $\dm \times \fac$ matrix with non-negative entries, with columns $\vc{a}_{1},\ldots,\vc{a}_{\fac}$, and suppose that $\max(\vc{a}_l) = \max(a_{1l}, \ldots, a_{dl}) > 0$ for all $l \in \{1,\ldots,\fac\}$. Let $Z_1,\ldots,Z_q$ be independent random variables whose distribution is Fr\'echet with scale 1 and shape $\alpha$, \change{i.e., $Z_1, \ldots, Z_q \siid \textnormal{Fréchet}(\alpha)$ and $\PP[Z_l \leq z] = \exp \left\{ -z^{-\alpha}\right\}$,} and define
$$\vc{Y} = A \times_{\max} \vc{Z}  := \left( \max_{l=1,\ldots,q} a_{1l} Z_l, \ldots,  \max_{l=1,\ldots,q} a_{dl} Z_l \right)^T,$$
for $\vc{Z} = (Z_1,\ldots,Z_q)$. Then $\vc{Y}$ is multivariate regularly varying with an index $\alpha$ and the angular measure of $\vc{Y}$ is
\begin{equation}\label{eq:maxlinH}
 H_{\vc{Y}} (\, \cdot \,) = \sum_{l=1}^{\fac} \norm{\vc{a}_{l}}_{\alpha}^{\alpha} \, \delta_{\vc{a}_{l}/\norm{\vc{a}_{l}}_{\alpha}} ( \, \cdot \,).
\end{equation}
Note that the order of the columns of $A$ does not matter.
The marginal distribution of $Y_j$ is Fr\'echet with scale $\left( \sum_{l=1}^q a_{jl}^{\alpha} \right)^{1/\alpha}$ and shape $\alpha$ for $j = 1,\ldots,d$.  The TPDM of $\vc{Y}$ has elements 
\[
\sigma_{\vc{Y}_{jk}} = \sum_{l=1}^q \norm{\vc{a}_l}_{\alpha}^{\alpha} \left( \frac{a_{jl}}{\norm{\vc{a}_l}_{\alpha }}\right)^{\alpha/2}  \left( \frac{a_{kl}}{\norm{\vc{a}_l}_{\alpha }}\right)^{\alpha/2}  = \sum_{l=1}^q a_{jl}^{\alpha/2} a_{kl}^{\alpha/2},
\]
and hence $\Sigma_{\vY} =  A_* A^T_*$, where  $A_* := \left( a_{jl}^{\alpha/2} \right)_{j=1,\ldots,d, l = 1,\ldots,q}$.
%Hence, the diagonal elements of the corresponding tail pairwise dependence matrix are equal to 
%\begin{equation}
%\sigma_{\vc{Y}^*_{jj}} = \left( \sum_{l=1}^q a_{jl}^{\alpha} \right)
%\end{equation}

\citet{cooley2019} show that the class of max-linear angular measures, as defined in \eqref{eq:maxlinH}, is dense in the class of possible angular measures (for any norm),  i.e., any multivariate regularly varying vector $\vX$ can be approximated by a max-linear model with a $(d \times q)$ parameter matrix as $q \rightarrow \infty$. 
Moreover, if attention is restricted to the TPDM, a max-linear model with \emph{finite} $q$ is sufficient to exactly match $\Sigma_{\vX}$; this result has been proven in \citet{cooley2019} for $\alpha = 2$, but can be easily generalized to any $\alpha>0$.
\begin{prop}[\citet{cooley2019}, Propositions 4 \& 5]
\label{prop:maxlin}
\text{}
\begin{enumerate}
\item \change{Let $Z_1, \ldots, Z_q \siid \textnormal{Fréchet}(\alpha)$}. Given any angular measure $H$, there exists a sequence of non-negative $(d \times q)$ matrices $\{ A_q \}$, for $q = 1, 2, \ldots$ such that $H_{A_q \times_{\max} \vc{Z}} \rightarrow H$ weakly as $q \rightarrow \infty$.
\item If $\vX \in \RR^d$ has TPDM $\Sigma_{\vX}$, there exists a non-negative $(d \times \widetilde{q})$ matrix $A_*$ with $\widetilde{q} < \infty$, such that $\Sigma_{\vX} = A_* A_*^T$. 
\end{enumerate}
\end{prop}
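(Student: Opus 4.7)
The plan is to handle the two parts separately, with part (1) by a direct discretization of $H$ and part (2) by recognizing $\Sigma_{\vX}$ as an integral of rank-one non-negative matrices, to which Carath\'eodory's theorem applies.

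For part (1), I would exploit the explicit form \eqref{eq:maxlinH}: the angular measure of $A_q \times_{\max} \vc Z$ is a sum of point masses at $\vc a_l/\norm{\vc a_l}_\alpha$ with weights $\norm{\vc a_l}_\alpha^\alpha$. Since $\mathbb S_{d-1}$ is compact, any finite Borel measure $H$ on it is the weak limit of finitely supported measures. Concretely, fix a sequence of partitions $\{B_1^{(q)},\ldots,B_q^{(q)}\}$ of $\mathbb S_{d-1}$ whose mesh tends to zero, pick $\vc w_l^{(q)}\in B_l^{(q)}$, set $p_l^{(q)}=H(B_l^{(q)})$, and define the columns $\vc a_l^{(q)}:=(p_l^{(q)})^{1/\alpha}\,\vc w_l^{(q)}$. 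Then $\norm{\vc a_l^{(q)}}_\alpha^\alpha = p_l^{(q)}$ and $\vc a_l^{(q)}/\norm{\vc a_l^{(q)}}_\alpha = \vc w_l^{(q)}$, so the corresponding max-linear angular measure is $\sum_{l=1}^q p_l^{(q)}\delta_{\vc w_l^{(q)}}$. A standard Portmanteau argument, using continuity of bounded continuous test functions on the compact set $\mathbb S_{d-1}$, shows this converges weakly to $H$ as $q\to\infty$.

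For part (2), I would rewrite \eqref{eq:tpdm} as a matrix-valued integral
\[
\Sigma_{\vX} \;=\; \int_{\mathbb S_{d-1}} \vc f(\vc w)\,\vc f(\vc w)^T \,\diff H_{\vX}(\vc w), \qquad \vc f(\vc w):=\bigl(w_1^{\alpha/2},\ldots,w_d^{\alpha/2}\bigr)^T.
\]
Since $\vc f(\vc w)\ge 0$ pointwise, the integrand is, at each $\vc w$, a rank-one matrix of the form $\vc a\vc a^T$ with $\vc a\ge 0$. Thus $\Sigma_{\vX}$ lies in the closed convex cone $\mathcal C_d$ of completely positive $d\times d$ matrices, viewed as a subset of the $d(d+1)/2$-dimensional space of symmetric matrices. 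By Carath\'eodory's theorem applied to this cone (whose extreme rays are exactly the rank-one matrices $\vc b\vc b^T$, $\vc b\ge 0$), there exist $\widetilde q\le d(d+1)/2$ non-negative vectors $\vc b_1,\ldots,\vc b_{\widetilde q}\in\RR^d$ with $\Sigma_{\vX}=\sum_{l=1}^{\widetilde q}\vc b_l\vc b_l^T$. Setting $A_*:=[\vc b_1,\ldots,\vc b_{\widetilde q}]$ yields $\Sigma_{\vX}=A_*A_*^T$, and defining the max-linear coefficient matrix by $a_{jl}:=b_{jl}^{2/\alpha}$ gives $A_*=(a_{jl}^{\alpha/2})$, so by the computation preceding Proposition~\ref{prop:maxlin}, $\Sigma_{A\times_{\max}\vc Z}=A_*A_*^T=\Sigma_{\vX}$, as required.

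The main obstacle is the key step of part (2): not every non-negative positive semi-definite matrix is completely positive when $d\ge 5$, so Proposition~\ref{prop:semidefinite} alone is not enough. What makes the proof work is the integral representation in terms of the non-negative vector field $\vc f(\vc w)$, which forces $\Sigma_{\vX}$ into $\mathcal C_d$ before Carath\'eodory is invoked. Part (1) is comparatively routine once one notices that max-linear angular measures are exactly the finitely supported ones, so the only care needed there is in controlling the weights $p_l^{(q)}$ so that the total mass $m=\sum_j\sigma_{jj}$ is preserved in the limit.
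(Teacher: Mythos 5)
Your proof is correct, but note that the paper itself offers no proof of this proposition: it is imported verbatim from \citet{cooley2019} (Propositions 4 and 5, stated there for $\alpha=2$) with only the remark that the generalization to arbitrary $\alpha>0$ is straightforward, so there is no in-paper argument to compare against line by line. Your part (1) is the standard discretization argument and matches the cited proof in spirit; the only cosmetic point is that cells with $H(B_l^{(q)})=0$ produce zero columns, which violate the standing assumption $\max(\vc{a}_l)>0$ and should be discarded. For part (2) you take a slightly more direct route than the original: Cooley and Thibaud deduce complete positivity of $\Sigma_{\vX}$ by exhibiting it as the limit of the TPDMs of the discretized models from part (1) (each a finite sum $\sum_l \vc{b}_l\vc{b}_l^T$ with $\vc{b}_l\ge 0$) and invoking closedness of the completely positive cone, whereas you read off membership in that cone directly from the integral representation $\Sigma_{\vX}=\int \vc{f}(\vc{w})\vc{f}(\vc{w})^T\diff H_{\vX}(\vc{w})$ and then apply Carath\'eodory's theorem for conic hulls to get the finite bound $\widetilde q\le d(d+1)/2$. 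Both arguments rest on the same two facts (non-negativity of the rank-one integrand and closedness of the completely positive cone); yours has the advantage of not routing through part (1), and your explicit bound on $\widetilde q$ is consistent with, though slightly weaker than, the sharp bound $d(d+1)/2-4$ that the paper quotes from \citet{barioli2003}. Your closing observation--that positive semi-definiteness plus entrywise non-negativity would not suffice for $d\ge 5$, so the integral representation is doing real work--is exactly the right thing to flag.
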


Whenever a matrix $\Sigma$ can be decomposed as $\Sigma=AA^T$, and the matrix $A$ has non-negative entries, we say that $\Sigma$ is \emph{completely positive}. {The max-linear model construction implies that any completely positive matrix can be a TPDM. Conversely, the second part of Proposition \ref{prop:maxlin} shows that any TPDM is completely positive. In fact, for any vector $\vX$, there exists a max-linear vector $\vY$ whose TPDM is equal to that of $\vX$.} This result is key for the estimation of failure probabilities.
%\ak{\sout{Actually, Proposition 2 shows that the set of all TPDMs is a subset of the set of completely positive matrices, but it also seems true that the set of completely positive matrices is a subset of the set of TPDMs..? Given a completely positive matrix $\Sigma$, it can be decomposed as $\Sigma=AA^T$ where $A$ has non-negative entries, so the max-linear vector $\vY$ with coefficient matrix $A$ has TPDM $\Sigma_{\vY} = \Sigma$ and hence $\Sigma$ is a valid TPDM. That would mean that the sets of TPDMs and completely positive matrices are the same? But the associate editor seems to mention that TPDMs are a strict subset of completely positive matrices? What am I missing?}}
{
\begin{rem}
The class of completely positive matrices coincides with the class of TPDMs. In particular, this implies that a TPDM is always positive semi-definite and element-wise non-negative \citep[Chapter 3]{shaked2021copositive}. Note that \cite{embrechts2016bernoulli} characterized the class of matrices of \emph{tail dependence coefficients} $(\chi_{jk})_{j,k=1,\ldots,d}$, defined by
$\chi_{jk} = \lim_{t \downarrow 0} t^{-1} \PP[F_j(X_j) > 1-t, F_k(X_j) > 1-t]$, 
and found that it is a strict subset of the class of completely positive matrices. \change{A related problem focuses on the sub-class of max-stable processes that is characterized through \emph{extremal coefficients} %in the spatial setting 
\citep{strokorb2015,fiebig2017,shyamalkumar2020tail,janssen2023tail}}. 
\end{rem}}

\subsection{The probability of a failure region under the max-linear model}\label{sec:failure}
%Suppose that $\vc{Y}$ follows a max-linear model, so that $\Sigma_{\vc{Y}}$ is completely positive. 
\change{Given a multivariate regularly varying random vector $\vX$, via a suitable completely positive decomposition of its TPDM, one can construct a max-linear vector $\vY$, parameterized by the matrix $A$, whose TPDM is equal to that of $\vX$. Then one can approximate the probability of the failure region $\PP[\vc{X} \in C]$ by the failure probability $\PP[\vc{Y} \in C]$}. Indeed, whenever the set $C \subset \EE_0$ is ``extreme enough'', we have  
\begin{equation}\label{eq:failure}
\mathbb{P} \left[ \vc{Y} \in C \right] = \frac{1}{n} \left\{ n \,  \mathbb{P} \left[ n^{-1/\alpha} \vc{Y} \in n^{-1/\alpha} C \right]  \right\} \approx \frac{1}{n} \nu_{\vY} \left( n^{-1/\alpha} C \right).
\end{equation}
\change{Here the failure set $C$ is scaled back to $n^{-1/\alpha} C$, which is no longer an extreme region, and the corresponding exponent measure $\nu_{\vY}(n^{-1/\alpha} C)$ can be derived from the parameters of the max-linear model.} 

The choice of the region $C$ in \eqref{eq:failure} depends on the application of
interest. For some practically relevant failure regions, the \change{exponent measure can be calculated analytically}. %directly from the parameters of the max-linear model. 
For environmental data, e.g., if $\vc{Y}$ represents a vector of maximum daily wind speeds, rainfall amounts, or temperatures in $d$ locations, the region 
$C_{(\max)} (\vx)  = \{ \vy \in \EE_0: y_1 > x_1 \text{ or } \ldots \text{ or } y_d > x_d \}$,
is a common choice if an extreme event at at least one location can lead to a climatological catastrophe.
%Under the assumption of a max-linear model, combining \eqref{eq:failure} and \eqref{eq:maxlinH}, we find, for large enough $n$, 
%\PP \left[\vc{X} \in C_{(\max)} (\vx)  \right] & \approx \int_{\mathbb{S}_{d-1} } \max_{j=1,\ldots,\dm} \left( \frac{w_j^{\alpha}}{x_j^{\alpha}} \right) \diff H_{\vc{Y}} (\vc{w}) = 
The exponent measure for this region equals
\begin{align*}
\nu( C_{(\max)} (\vx) ) = \sum_{l=1}^q \max_{j=1,\ldots,d} \left( \frac{a_{jl}}{x_j} \right)^{\alpha}.
\end{align*} 
\change{A second example is in portfolio risk management. If} $\vc{Y}$ represents for example the negative log-returns of $d$ stocks in a portfolio, one might be interested in estimating the probability of all stocks crashing together, 
$ C_{(\min)} (\vx)  = \{ \vy \in \EE_0: y_1 > x_1 \text{ and } \ldots \text{ and } y_d > x_d \}$.
Similarly to $C_{(\max)}$, we can calculate
\begin{align*}
\nu( C_{(\min)} (\vx) ) = \sum_{l=1}^q \min_{j=1,\ldots,d} \left( \frac{a_{jl}}{x_j} \right)^{\alpha}.
\end{align*} 

\change{More generally, m}any simple regions (including the ones above when $x_1 = \ldots = x_d$) can be written as follows. Let $C_f (x) = \{ \vc{y} \in \EE_0: f(\vc{y}) > x\}$ denote a failure region where $f: \RR^d \rightarrow \RR$ is {a measurable function and is non-decreasing in all dimensions}. For fixed $\bm{w}$, let $r_* = r_*(\vc{w})$ be such that $f(r_* \vc{w}) = x$. Then
\begin{equation*}%\label{eq:failure2}
\nu (C_f (x)) = \sum_{l=1}^q \norm{\vc{a}_l}^{\alpha}_{\alpha} \left\{ r_*( \vc{a}_l / \norm{\vc{a}_l}_{\alpha}) \right\}^{-\alpha}.
\end{equation*}
Hence, we can obtain an explicit expression for the exponent measure of any failure region that can be written in the form of $C_f$.  For example, if $f(\vy)= \vc{v}^T \vc{y}$ for $\vc{v} \in \EE_0$ such that $v_1 + \ldots + v_d = 1$, we get 
$ C_{(\summ)} (\vc{v},x)  = \{ \vc{y} \in \EE_0: v_1y_1 + \ldots + v_d y_d > x \}$.
%\change{For example, in the context of portfolio risk, if one intends to estimate the Value-at-Risk or Expected Shortfall of a portfolio loss $L = \sum_{j=1}^d v_j X_j$, where $v_1,\ldots,v_d$ represent weights of the stocks \citep[see, e.g.,][]{mainik2013}, the failure region is of the form $\{ \sum_j v_j X_j > x \}$ with $x$ large. Such failure regions are also of interest in climate applications, where $\sum_{j=1}^d v_j X_j$ represents the spatial aggregation of an environmental variable \citep[see, e.g.][]{engelke2019nr2}. }
\change{Such a region is relevant for climate applications, where $\sum_j v_j X_j$ represents a spatially aggregated
environmental variable \citep{kiriliouk2020nr2,engelke2019nr2}, and in a financial context, if one intends to estimate the Value-at-Risk or Expected Shortfall of a portfolio loss $L = \sum_{j=1}^d v_j X_j$, where $v_1,\ldots,v_d$ represent weights of the stocks \citep[see, e.g.,][]{mainik2013}. We have that %Approximating the probability $\PP \left[\vX \in C_{(\summ)}  \right] = \PP \left[ \vc{v}^T \vc{X} > x \right]$ based on \eqref{eq:failure} shows that the vector $\vc{v}^T \vc{X}$ follows an approximate Pareto distribution for large enough $x$,
\begin{align*}
\nu \left( C_{(\summ)} (\vc{v}, x)\right) =  x^{- \alpha} \sum_{l=1}^q \left( \vc{v}^T \vc{a}_l  \right)^{\alpha} 
=
\begin{dcases} 
  x^{-1}  \sum_{j=1}^d v_j \sigma_{jj} & \text{if } \alpha = 1, \\
  x^{-2} \left( \vc{v}^T \Sigma_{\vc{Y}} \vc{v} \right) & \text{if } \alpha = 2.
 \end{dcases}
\end{align*} 
%Moreover, we find
%\[
%\nu \left( C_{(\summ)} (\vc{v}, x)\right) = 
%\begin{dcases} 
%  x^{-1}  \sum_{j=1}^d v_j \sigma_{jj} & \text{if } \alpha = 1, \\
%  x^{-2} \left( \vc{v}^T \Sigma_{\vc{Y}} \vc{v} \right) & \text{if } \alpha = 2.
% \end{dcases}
%\]
Hence, for $\alpha \in \{1,2\}$, we can calculate} $\nu \left( C_{(\summ)} (\vc{v}, x) \right)$ based on the TPDM only. In other words, the random variable $\vc{v}^T \vY$ approximately follows a Pareto distribution whose scale is a function of the TPDM. By contrast, when $\alpha \notin \{1,2\}$, the parameter matrix $A$ needs to be recovered first. 

\change{For more complex failure regions $C$ we may be unable to derive an analytic expression for $\nu(C)$, but we can still calculate the exponent measure by simulating observations from the associated max-linear model. For such failure regions, recovering the parameter matrix $A$ is essential for estimating $\PP(\vX \in C)$.}

\subsection{Multivariate domain of attraction}\label{sec:general}

\change{So far we have assumed that the random vector $\vX$ is multivariate regularly varying with index $\alpha$. This assumption is equivalent to $\vX$ belonging to the \emph{(max-)domain of attraction} of a multivariate generalized extreme value (GEV) distribution $G$ with Fr\'echet$(\alpha)$ margins $G_j(x) = \exp \{ - x^{-\alpha} \}$ for $j = 1,\ldots,d$ \citep[Proposition 5.17]{resnick1987}. In other words, there exist normalizing sequences $\vc{b}_n \in \RR^d$, $\vc{a}_n \in [0,\infty)^d$ such that
$F^n \left( \vc{a}_n \vx + \vc{b}_n \right) \rightarrow G(\vx)$ for all $\vx$ in the support of $F$, in which case we write $F \in \textnormal{MDA}(G)$. It follows that 
%$F_j$ is in the univariate domain of attraction of a Fréchet$(\alpha)$ distribution, which is equivalent to univariate regular variation of 
$X_1,\ldots,X_d$ are univariate regularly varying with common tail index $\alpha$ for all $j = 1,\ldots,d$.

%In this section, we briefly discuss how to use our extrapolation method to approximate failure probabilities under the . 
%, as it implies that all marginal distributions of $\vX$ are heavy-tailed with a common tail index $\alpha$. By contrast, the \change{multivariate} domain of attraction condition is more general: the marginal distributions $F_1,\ldots,F_d$ belong to the \change{univariate} domain of attraction of some extreme value distribution, potentially having different tail indices.

%This assumption implies that $\vX$ belongs to the \emph{(max-)domain of attraction} of a multivariate generalized extreme value distribution \citep[Corollary 5.18]{resnick1987}. 
%Suppose $\vX = (X_1,\ldots,X_d)$ belongs to the max-domain of attraction of a $d-$dimensional generalized extreme value distribution. 

In practice, the assumption of multivariate regular variation assumption may be restrictive; marginal distributions of $\vX$ may have unequal tail indices or may not even be regularly varying. A weaker assumption, under which our method of estimating failure probabilities is still applicable, is the general \emph{multivariate domain of attraction condition}, i.e., $F \in \textnormal{MDA}(\tilde{G})$ where $\tilde{G}$ is a multivariate GEV distribution with arbitrary margins $\tilde{G}_1,\ldots,\tilde{G}_d$. 
%Since marginal standardization does not affect dependence modeling, it is common practice to separate marginal and dependence modeling by first standardizing each marginal to a common distribution, and then modeling the dependence structure only. %For instance, denote $F_j$ as the marginal distribution of $X_j$. 
Setting  
\begin{align}\label{transformation}
  X_{j}^* = \left(- \frac{1}{\log F_j(X_{j})} \right)^{1/\alpha}, \quad j = 1,\dots, d,
\end{align}
we obtain a vector $\vX^* = (X_1^*,\ldots,X_d^*)^T$ with Fréchet$(\alpha)$ margins. By Proposition 5.10 in \cite{resnick1987}, such a marginal standardization does not affect the dependence structure, and $F \in \textnormal{MDA}(\tilde{G})$ implies that %$F \in \textnormal{MDA}(\tilde{G})$, i.e., that 
$\vX^*$ is multivariate regularly varying with index $\alpha$. 
Hence, if the assumption of multivariate regular variation is not satisfied for a given vector $\vX$, our methodology can be applied to the standardized vector $\vX^*$. 

In practice, one needs to estimate the marginal distributions $F_1,\ldots,F_d$ from the data first; see Section~\ref{sec:estim}. Then one can construct a max-linear model $\vY^*$ whose TPDM is equal to that of $\vX^*$. Recall that our actual goal is to estimate $\PP[\vX \in C]$. %by the failure probability $\PP[\vY \in C]$. 
For some failure regions, such as $C_{(\max)}$ or $C_{(\min)}$, it is straightforward to find $C^*$ such that $\PP[\vX \in C] = \PP[\vX^* \in C^*]$ since transformation~\eqref{transformation} can be applied component-wise.  We can then proceed as in Section~\ref{sec:failure}, approximating $\PP[\vX^* \in C^*]$ by $\PP[\vY^* \in C^*]$. Other regions such as $C_{(\textnormal{sum})}$ are untractable after marginal transformation, but we can still apply the inverse of \eqref{transformation} to $\vY^*$ to obtain a random vector $\vY$ that approximates the original data vector $\vX$.
Based on a large number of simulated observations from $\vY$, we can then estimate $\PP[\vX \in C]$ by the empirical frequency of the simulated observations falling into $C$. This approach has been adopted in various applications; see, e.g., \citet{cai2013environmental}. We use it in Section \ref{sec:wind}.
%and marginal transformations can often be suboptimal, leading to untractable failure regions. 
Of course, if multivariate regular variation is a reasonable assumption for the random vector $\vX$, it is preferred to use the direct approach without standardizing a priori.}

%In practice $F_j$ is unknown and needs to be estimated from data. A common choice is to use the empirical distribution function 
%$$\hat F_j(x) = \frac{1}{n+1} \sum_{i=1}^n \mathbb{I} (X_{ij} \leq x),$$
%where $X_{1j},\ldots,X_{nj}$ are $n$ observations of the variable $X_j$. The transformation in \eqref{transformation} is then applied with $\hat F_j$ replacing $F_j$.

%Modeling the tail dependence structure of $\vX^*$ by an easy-to-simulate parametric model allows to easily estimate the probability of a failure set $\PP(\vX\in C)$. % using a simulation-based approach. 
%First, one can define and estimate the TPDM of $\vX^*$. Next, by following similar steps as in Section \ref{sec:failure}, one can construct a max-linear model $\vY^*$ such that it has approximately the same TPDM as $\vX^*$. 

%We have a final remark comparing the two approaches.  If one intends to have a direct estimate for the probability of a failure set without simulation, 

\section{Decomposing the pairwise tail dependence matrix}\label{sec:alg}

Proposition \ref{prop:maxlin} shows that, for any random vector $\vX $, a max-linear vector $\vc{Y}$ exists such that $\Sigma_{\vX} = \Sigma_{\vY}$. Obtaining such a model requires a \emph{completely positive decomposition} of $\Sigma_{\vX}$, i.e., finding a $(d \times q)$ matrix $A_*$ with non-negative entries such that $\Sigma_{\vX}   = A_* A_*^T$ (see also Section \ref{sec:maxlin}). 
%{\citet{barioli2003} show that the number of columns of such matrices $A_*$ is at most $d (d+1)/2 - 4$ for $d \geq 5$. Current algorithms performing completely positive decompositions are computationally heavy \citep{groetzner2020,bot2021}. 
%Hence, they cannot be applied repeatedly to obtain multiple factorizations when the dimension $d$ is high. Nevertheless, decomposing $\Sigma_{\vX}$ repeatedly could be useful to assess the ability of the resulting max-linear models to provide an adequate characterization of the tail dependence of $\vX$. } %quantify the model uncertainty regarding estimations of the probability of falling into a failure region.
%they cannot be applied repeatedly to obtain multiple factorizations when the dimension $d$ is high. Notice that the factorization of a completely positive matrix is not unique. \ak{Why do we want multiple factorizations?} 
%\ak{\sout{Obtaining multiple different decompositions could be useful to quantify the model uncertainty regarding estimations of the probability of falling into a failure region.}}
In this section, we propose a simple algorithm to construct a $(d \times d)$ matrix $A_*$, which leads to an \emph{approximate} completely positive decomposition in the following sense. Firstly, we define the notation ``$\preceq$'': for two TPDMs, we write $\Sigma_{\vX} \preceq \Sigma_{\vY}$ if
\[
\sigma_{\vc{X}_{jk}} = \sigma_{\vc{Y}_{jk}} \,\,\,\,\, \text{for all } j,k \in \{1, \ldots, d \}, \, j \neq k, \quad \text{and } \, \sigma_{\vc{X}_{jj}} \leq \sigma_{\vc{Y}_{jj}} \,\,\,\,\,   \text{for all } j \in \{1, \ldots, d \}.
\]
Then, for any random vector $\vX$ with TPDM $\Sigma_{\vX}$, we aim at constructing a max-linear vector $\vc{Y}$ with a $(d \times d)$ coefficient matrix $A_*$, such that $\Sigma_{\vX} \preceq \Sigma_{\vY}$. The max-linear model can be viewed as an approximate model for the original random vector $\vX$, ``matching'' the pairwise dependence structure of $\vX$. 

\subsection{The algorithm}\label{sec:algorithm}
Define, for any TPDM $\Sigma_{\vX} = (\sigma_{{jk}})_{j,k=1,\ldots,\dm}$ and any $i \in \{1,\ldots,\dm\}$,
\begin{equation*}
D_i (\Sigma_{\vX}) :=   \max \left\{j,k \in \{1,\ldots,\dm\} \setminus \{i\}: \, \frac{\sigma_{ji} \sigma_{ki}}{\sigma_{jk} \sigma_{ii}} \right\}.
\end{equation*}
We write $D_i = D_i (\Sigma_{\vX})$ when no confusion can arise. Let $\vc{\tau}_i = (\tau_{1,i},\ldots,\tau_{\dm,i})^T$ with
\begin{equation} \label{eq:tau}
\tau_{j,i}  = 
\begin{dcases}
\sigma_{ji} \left(\sigma_{ii} \max(D_i,1)\right)^{-1/2} & \text{ if } j \neq i,\\
\left(\sigma_{ii} \max(D_i,1)\right)^{1/2} & \text{ if } j = i.
\end{dcases}
\end{equation}
%Let $\sigma_i = (\sigma_{1i}, \ldots, \sigma_{di})^T $ denote the $i$-th column of $\Sigma$. 
%If $D_i \leq 1$, the vector $\tau_i$ is simply equal to $\tau_i = \sigma_i/\sqrt{\sigma_{ii}}$. 
%For instance, when $d = 2$, then $D_i \leq 1$.
Let $\vc{\tau}_{-i} := (\tau_{1,i},\ldots,\tau_{i-1,i},\tau_{i+1,i},\ldots,\tau_{\dm,i})^T  \in \RR^{\dm-1}$ and define
%Let $i_0 \in \{1,\ldots,\dm\}$. We will see later why it is a good idea to choose $i_0$ as the index corresponding to the smallest value among $D_1,\ldots,D_p$. 
%Finally, define
\begin{equation}\label{eq:sigmai}
\Sigma^{(i)}_{\vX} := \Sigma^{(-i,-i)}_{\vX} - \vc{\tau}_{-i} \vc{\tau}_{-i}^T \quad \in \RR^{(\dm-1) \times (\dm - 1)},
\end{equation}
where $\Sigma^{(-i,-i)}_{\vX} \in \RR^{(\dm-1) \times (\dm - 1)}$ is the matrix obtained by removing the $i$-th row and column from $\Sigma_{\vX}$. 
%The following lemma shows that $\Sigma^{(i)}_{\vX}$ is still a valid TPDM. 
\begin{lem}\label{lem1}
For all $i \in \{1,\ldots,\dm\}$ with $D_i \neq 1$, the matrix $\Sigma^{(i)}_{\vX}$ in \eqref{eq:sigmai} {satisfies that}
\begin{enumerate}
\item $\Sigma^{(i)}_{\vX} \geq 0$ componentwise;
\item $\Sigma^{(i)}_{\vX}$ is positive semi-definite.
\end{enumerate}
\end{lem}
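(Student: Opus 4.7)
The plan is to reduce both statements to a single algebraic rewriting of $\Sigma^{(i)}_{\vX}$ in terms of the $i$-th column of $\Sigma_{\vX}$. Writing $\vc{\sigma}_{-i,i}=(\sigma_{1i},\dots,\sigma_{i-1,i},\sigma_{i+1,i},\dots,\sigma_{di})^T$ for the $i$-th column of $\Sigma_{\vX}$ with its $i$-th entry removed, the definition in \eqref{eq:tau} yields at once $\vc{\tau}_{-i}=(\sigma_{ii}\max(D_i,1))^{-1/2}\,\vc{\sigma}_{-i,i}$, so that
\[
\Sigma^{(i)}_{\vX} \;=\; \Sigma^{(-i,-i)}_{\vX} \;-\; \frac{1}{\sigma_{ii}\max(D_i,1)}\,\vc{\sigma}_{-i,i}\vc{\sigma}_{-i,i}^T.
\]
The bound $\max(D_i,1)\ge 1$ will be the single fact doing all the work in both parts.

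For part (1), I would write the $(j,k)$ entry ($j,k\neq i$) as $\sigma_{jk}-\sigma_{ji}\sigma_{ki}/(\sigma_{ii}\max(D_i,1))$ and split on the value of $D_i$. When $D_i>1$, the definition of $D_i$ as a maximum over pairs in $\{1,\dots,d\}\setminus\{i\}$ gives $\sigma_{jk}\sigma_{ii}D_i\ge\sigma_{ji}\sigma_{ki}$ for every such $(j,k)$, and $\max(D_i,1)=D_i$, so the entry is non-negative. When $D_i<1$, the denominator collapses to $\sigma_{ii}$, while each ratio $\sigma_{ji}\sigma_{ki}/(\sigma_{jk}\sigma_{ii})$ is at most $D_i<1$, hence the entry is strictly positive. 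Degenerate pairs with $\sigma_{jk}=0$ force $\sigma_{ji}\sigma_{ki}=0$ by the convention implicit in the definition of $D_i$ and are handled by continuity.

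For part (2), I would invoke the Schur complement. By Proposition~\ref{prop:semidefinite}, $\Sigma_{\vX}$ is positive semi-definite, so the standard block decomposition along the $i$-th row and column gives that $S:=\Sigma^{(-i,-i)}_{\vX}-\sigma_{ii}^{-1}\vc{\sigma}_{-i,i}\vc{\sigma}_{-i,i}^T$ is positive semi-definite (assuming $\sigma_{ii}>0$, which must hold for $D_i$ to be well defined and finite). The identity
\[
\Sigma^{(i)}_{\vX} \;=\; S \;+\; \left(\frac{1}{\sigma_{ii}}-\frac{1}{\sigma_{ii}\max(D_i,1)}\right)\vc{\sigma}_{-i,i}\vc{\sigma}_{-i,i}^T
\]
then expresses $\Sigma^{(i)}_{\vX}$ as the sum of the positive semi-definite matrix $S$ and a non-negative multiple of the rank-one positive semi-definite matrix $\vc{\sigma}_{-i,i}\vc{\sigma}_{-i,i}^T$; the scalar is non-negative precisely because $\max(D_i,1)\ge 1$.

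The main obstacle is conceptual rather than computational: one has to notice that the scaling $(\sigma_{ii}\max(D_i,1))^{-1/2}$ built into $\vc{\tau}_{-i}$ is calibrated to satisfy two competing requirements simultaneously, namely that it dominates every ratio $\sigma_{ji}\sigma_{ki}/(\sigma_{jk}\sigma_{ii})$ (to secure componentwise non-negativity) while staying no larger than $1/\sigma_{ii}$ (so that the subtracted rank-one correction remains weaker than the one already absorbed by the Schur complement of $\Sigma_{\vX}$). Once this is noticed, both halves of the lemma reduce to a one-line verification.
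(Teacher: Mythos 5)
Your proof is correct, and part (1) coincides with the paper's argument: the componentwise inequality is exactly the statement $\max(D_i,1)\,\sigma_{ii}\,\sigma_{jk}\geq\sigma_{ji}\sigma_{ki}$ for all $j,k\neq i$, which is the definition of $D_i$ combined with $\max(D_i,1)\geq D_i$. You are in fact more careful than the paper here: when $D_i>1$, the pair attaining the maximum in $D_i$ yields an entry of $\Sigma^{(i)}_{\vX}$ equal to zero, so only componentwise non-negativity holds, whereas the paper's proof asserts a strict inequality ``by definition of $D_i$'' that fails for that maximizing pair; this does not affect the conclusion that $\Sigma^{(i)}_{\vX}$ is a valid TPDM, since zero entries merely encode tail independence. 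For part (2) you take a genuinely different, though closely related, route. The paper extends $\vc{x}_{-i}$ to a full vector $\vc{y}$ with $y_i=-\vc{x}_{-i}^T\vc{\sigma}_{-i}/(\sigma_{ii}\max(D_i,1))$, expands $0\leq\vc{y}^T\Sigma_{\vX}\vc{y}$, and bounds the resulting factor $1/\max(D_i,1)-2$ by $-1$. You instead quote the Schur-complement criterion for the block $\sigma_{ii}$ and write $\Sigma^{(i)}_{\vX}$ as the Schur complement plus the non-negative rank-one correction $\bigl(\sigma_{ii}^{-1}-(\sigma_{ii}\max(D_i,1))^{-1}\bigr)\vc{\sigma}_{-i}\vc{\sigma}_{-i}^T$. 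Since the Schur-complement fact is itself proved by the paper's type of quadratic-form computation (with the optimal choice of $y_i$), the two arguments rest on the same underlying inequality; your version is more structural, makes the role of $\max(D_i,1)\geq1$ transparent, and isolates exactly where the diagonal slack in Proposition~\ref{prop:peel} originates, at the cost of invoking an external linear-algebra lemma and of needing $\sigma_{ii}>0$ explicitly (which, as you note, is already implicit in the definition of $\vc{\tau}_i$).
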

The next proposition shows that if we can construct a $(d-1)$-dimensional random vector with TPDM ${\tilde\Sigma^{(i)}_{\vY}\succeq} \Sigma^{(i)}_{\vX}$, we can then use this vector to construct $\vY$ such that \change{$\Sigma_{\vY} \succeq \Sigma_{\vX}$}. In addition, the proposition illustrates the circumstances under which \change{$\Sigma_{\vY}=\Sigma_{\vX}$} is achieved. 
%The proofs of Lemma \ref{lem1} and of Proposition \ref{prop:peel} are postponed to the appendix.

\begin{prop}\label{prop:peel}
Let $\Sigma_{\vX} = (\sigma_{jk})_{j,k=1,\ldots,d} \in \RR^{d \times d}$ be a TPDM. {Let $i \in \{1,\ldots,d\}$. Consider a random vector $\vc{\tilde Y}_{-i} = (\tilde Y_1,\ldots,\tilde Y_{i-1},\tilde Y_{i+1},\ldots,\tilde Y_{\dm})^T$ with TPDM $\tilde\Sigma^{(i)}_{\vY}\succeq \Sigma^{(i)}_{\vX}$, where $\Sigma^{(i)}_{\vX}$ is defined as in \eqref{eq:sigmai}. \change{Let $Z_i \sim \textnormal{Fréchet}(\alpha)$ be independent of} $\tilde \vY_{-i}$. Defining a new vector $\vY$ via}
\begin{equation*}
Y_j  := \begin{dcases}
\max \left( \tau_{j,i}^{2/\alpha} Z_i , \tilde Y_j \right) & \text{ if } j \in \{1,\ldots,\dm\} \setminus \{i\}, \\
\tau_{i,i}^{2/\alpha} Z_i & \text{ if } j = i,
\end{dcases}
\end{equation*}
{with $\tau_{j,i}$ as in \eqref{eq:tau},
we have that \change{$\Sigma_{\vY} \succeq \Sigma_{\vX}$}. Moreover,  $\Sigma_{\vY} = \Sigma_{\vc{X}}$ if and only if $D_i (\Sigma_{\vX}) \leq 1$  and $\tilde\Sigma^{(i)}_{\vY}=\tilde\Sigma^{(i)}_{\vX}$.}
\end{prop}
% \ak{Useful result: for $d \leq 4$, a matrix is positive semi-definite and element-wise non-negative if and only if it is completely positive; for $d \geq 5$, this is not true.}
% \ak{To do: check the proof of Proposition 3 again: I think we mix up $\vY$, $\tilde{\vY}$, and $\vX$.}
{Based on Lemma \ref{lem1} and Proposition \ref{prop:peel},} we can repeatedly apply the procedure in expression \eqref{eq:sigmai} to $\Sigma^{(i)}_{\vX}$ to further decrease its dimension. This leads to an iterative algorithm, throughout which we obtain the matrix $A_*$. First, let $i_1 \mapsto i_2 \mapsto \ldots \mapsto i_d$ denote a \emph{path}, where $(i_1,\ldots,i_d)$ is a permutation of $(1,\ldots,d)$. The order determines which column of $\Sigma_{\vX}$ will be treated first. For a given path, the iterative algorithm is as follows.
\begin{itemize}
\item In the first step, we obtain the vector $\vc{\tau}_{i_1}$ from \eqref{eq:tau} by taking $i=i_1$. We fill in the first column of the matrix $A_*$ with the obtained vector $\vc{\tau}_{i_1}$ (as the order of the columns of $A_*$  does not matter, we do not necessarily need to fill the $i_1$-th column). The targeted TPDM is then reduced to a $(d-1)\times (d-1)$ matrix {$\Sigma_{\vc{X}}^{(i_1)}$} as defined in \eqref{eq:sigmai}.
\item In the second step, we apply expressions \eqref{eq:tau} and \eqref{eq:sigmai} to {$\Sigma_{\vc{X}}^{(i_1)}$}, with the focal dimension now indexed by $i_2$. This step results in a $(d-1)$-dimensional vector. For the second column of the matrix $A_*$, we set the $i_1$-th row to zero and fill in the rest with the obtained $(d-1)$-dimensional vector. After the second step, the targeted TPDM is reduced to a $(d-2)\times(d-2)$ matrix.
\item The previous step is iterated as follows. In each step $j$, \change{we aim to fill the $j$-th column of $A_*$}. First, we set the elements in the $i_1,i_2,\cdots,i_{j-1}$-th row to zero. Then, we fill in the other $(d-j+1)$ elements by the obtained $(d-j+1)$-dimensional vector of this iteration. After each step, the dimension of the targeted TPDM is reduced by one.
\item At step $d$, the targeted TPDM \change{is reduced to a single} positive number. \change{We simply fill element $(i_d,d)$ of $A_*$ with the square root of that value and set all other elements in the $d$-th column to zero.}
\end{itemize}
Proposition \ref{prop:peel} guarantees that the matrix $A_*$ obtained with this algorithm has only non-negative elements and satisfies $\Sigma_{\vX}\preceq A_*A_*^T$. In addition, \change{define $A=A_*^{2/\alpha}$ element-wise
%(i.e. each element of the matrix $A_*$ is raised to the power $2/\alpha$) 
and let $\vc{Z}=(Z_1,\ldots,Z_d)^T$ where 
$Z_1, \ldots, Z_d \siid \textnormal{Fréchet}(\alpha)$.
%where $\cdot^{2/\alpha}$ indicates taking the power function element wise. 
The max-linear vector $\vY=A \times_{\max} \vc{Z}$ has a TPDM $\Sigma_{\vY} = A_* A_*^T$ that satisfies $\Sigma_{\vX}\preceq \Sigma_{\vY}$.} In other words, we can construct a max-linear vector $\vY$ based on only $d$ independent factors to approximate the TPDM of $\vX${, with at least all off-diagonal elements exactly matched.}
%The approximation is in the sense that all off-diagonal elements are matched exactly while all diagonal elements are either matched exactly or higher than the corresponding element in $\Sigma_{\vX}$.\cz{Check whether the last sentence was repetitive. We should somewhat repeat this in Intro, but not too often.}

Below is an example for this procedure when $d=3$ and the chosen path is $1 \mapsto 2 \mapsto 3$.
\begin{exa}[$d = 3, \alpha = 2$, path $1 \mapsto 2 \mapsto 3$]\label{example:3d}
Let $\Sigma_{\vX} = (\sigma_{jk})_{j,k = 1,2,3}$ be a known (or estimated) TPDM. We want to find a coefficient matrix $A$ (since $\alpha=2$, we have $A=A_*$) such that the corresponding max-linear model $\vY =  A \times_{\max} (Z_1,Z_2,Z_3)^T$ has TPDM $\Sigma_{\vY}$ satisfying $\Sigma_{\vX} \preceq \Sigma_{\vY}$. Starting with $i  =1 $, we obtain from \eqref{eq:tau} that
\[
%\vc{X} = \begin{pmatrix} \tau_{1,1} Z_1 \\ \max \left( \tau_{2,1} Z_1, Y_2 \right) \\ \max \left( \tau_{3,1} Z_1 , Y_3 \right) \end{pmatrix}, \qquad \text{for}   \quad 
\begin{pmatrix} \tau_{1,1} \\ \tau_{2,1} \\ \tau_{3,1} \end{pmatrix} = \frac{1}{\sqrt{\sigma_{11} \max(D_1,1)}} \begin{pmatrix} \sigma_{11} \max(D_1,1) \\ \sigma_{12} \\ \sigma_{13} \end{pmatrix}
\]
and %$(Y_2,Y_3) \sim \Sigma^{(1)}$ with 
\begin{equation*}
\Sigma_{\vX}^{(-1)} := \begin{pmatrix}
s_{22} & s_{23} \\
s_{23} & s_{33} \\
\end{pmatrix}
:=
\begin{pmatrix}
\sigma_{22} - \frac{\sigma_{12}^2}{\sigma_{11} \max(D_1,1)} & \sigma_{23} - \frac{\sigma_{12} \sigma_{13}}{\sigma_{11} \max(D_1,1)} \\
\sigma_{23} - \frac{\sigma_{12} \sigma_{13}}{\sigma_{11} \max(D_1,1)} & \sigma_{33} - \frac{\sigma_{13}^2}{\sigma_{11} \max(D_1,1)} \\
\end{pmatrix}.
\end{equation*}
Hence, the first column of $A$ is given by $(\tau_{1,1}, \tau_{2,1}, \tau_{3,1})^T$.

To find the second column of $A$, we iterate the same algorithm to the newly obtained TPDM $\Sigma_{\vX}^{(-1)}$ to get that %With a slight abuse of notation, proceeding with $i = 2$ (in terms of the original matrix $\Sigma$), and leaving out the index $i = 1$ we already used, we find
\[
%\begin{pmatrix} 
%Y_2 \\
%Y_3 \\
%\end{pmatrix} = \begin{pmatrix}  \tau_{2,2} Z_2 \\ \max \left( \tau_{3,2} Z_2, \widetilde{Y}_3 \right) \end{pmatrix}, \qquad \text{for}   \quad 
\begin{pmatrix} \tau_{2,2} \\ \tau_{3,2} \end{pmatrix} = \frac{1}{\sqrt{s_{22} }} \begin{pmatrix} s_{22}  \\ s_{23} \end{pmatrix}.
\]
Note that since $\Sigma_{\vX}^{(-1)}$ is positive semi-definite, in this case we have $D_2(\Sigma_{\vX}^{(-1)})=\frac{s_{23}^2}{s_{33}s_{22}}\leq 1$. After this iteration, the targeted TPDM is reduced to a single value $s_{33} - s_{23}^2/ s_{22}\geq 0$. Hence, we can define $\tau_{3,3}:= \sqrt{s_{33} - s_{23}^2/ s_{22}}$. 

\change{Finally,} we can construct the max-linear model $\vc{Y} = A \times_{\max} (Z_1,Z_2,Z_3)$ where
\begin{equation*}
A = \begin{pmatrix}
\tau_{1,1} & 0 & 0 \\
\tau_{2,1} & \tau_{2,2} & 0 \\
\tau_{3,1} & \tau_{3,2} & \tau_{3,3} \\
\end{pmatrix}, \qquad \text{and} \qquad \Sigma_{\vc{Y}} = A A^T = 
 \begin{pmatrix}
\sigma_{11} \max(D_1,1) & \sigma_{12} & \sigma_{13} \\
\sigma_{12} & \sigma_{22} & \sigma_{23} \\
\sigma_{13} & \sigma_{23} & \sigma_{33}
\end{pmatrix}.
\end{equation*}
%We have that {$\Sigma_{\vX} \preceq \Sigma_{\vc{Y}} $, and $\Sigma_{\vY}= \Sigma_{\vc{X}} $ if $D_1 \leq 1$.}
\end{exa}

In Example \ref{example:3d}, we applied the iterative algorithm implied by Proposition \ref{prop:peel} with the path $1 \mapsto 2 \mapsto 3$. The resulting matrix $A_*$ (and eventually $A$) is then lower triangular. By considering other paths we will obtain different matrices $A_*$. Hence, in general, for different choices of paths, a completely positive decomposition (approximate or not) is not unique. It is also possible that one path leads to an exact decomposition while another path leads to an approximation in the sense of $\preceq$. Proposition \ref{prop:unique} shows that if the vector $\vX$ indeed follows a max-linear model constructed from a lower-triangular parameter matrix $A$, by choosing the path \emph{appropriately}, it is possible to recover the exact max-linear model. Given that the order of columns of $A$ does not matter, the proposition can of course be generalized to parameter matrices $A$ in which each column $j$ has at least one more zero than column $(j-1)$, for $j=2,\ldots,d$.
%Suppose now that we choose the canonical path, i.e., $1 \mapsto \ldots \mapsto d$. If $\Sigma$ is positive definite, we know that there exists a \emph{unique} lower triangular matrix $L$ with positive diagonal entries such that $\Sigma = L L^T$, also called the Cholesky decomposition. However, this matrix does not need to be non-negative. If $\Sigma$ is only positive semi-definite, $L$ could have zeroes on the diagonal and is not longer unique. Hence, if the true matrix $A$ does not have zeroes on the diagonal (i.e. if it is invertible), then $\Sigma$ is positive definite, and we can retrieve the true decomposition.
\begin{prop}\label{prop:unique}
Suppose that $\vc{X} = A \times_{\max} \vc{Z}$ follows a max-linear model with lower-triangular parameter matrix $A \in \RR^{d \times d}$ and \change{$\vc{Z} = (Z_1,\ldots,Z_q)^T$, where $Z_1, \ldots, Z_q \siid \textnormal{Fréchet}(\alpha)$}. If we choose the canonical path, i.e., $1 \mapsto \ldots \mapsto d$ in the decomposition of $\Sigma_{\vc{X}}$, {then we obtain the matrix $A_* = A^{\alpha/2}$, with $A_*{A_*}^T=\Sigma_{\vc{X}}$.}
\end{prop}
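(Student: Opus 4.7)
The plan is to prove this by induction on the dimension $d$, exploiting the fact that lower-triangularity of $A$ forces $D_1 \leq 1$ at every iteration along the canonical path, so that Proposition~\ref{prop:peel} delivers an \emph{exact} (rather than merely $\preceq$) decomposition at each step. Note first that by the computation in Section~\ref{sec:ML} one already has $\Sigma_{\vc{X}} = A_* A_*^T$ with $A_* = (a_{jk}^{\alpha/2})$, which is itself non-negative and lower-triangular; so the task is simply to check that the algorithm reconstructs this particular $A_*$ column by column.

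For the first step, I would exploit $a_{1l} = 0$ for $l \geq 2$ to get $\sigma_{11} = a_{11}^\alpha$ and $\sigma_{j1} = a_{j1}^{\alpha/2} a_{11}^{\alpha/2}$ for $j \geq 2$. For $j, k \geq 2$, the identity $\sigma_{jk} = \sum_{l=1}^{\min(j,k)} a_{jl}^{\alpha/2} a_{kl}^{\alpha/2}$ dominates its $l=1$ term, so $\sigma_{jk} \geq a_{j1}^{\alpha/2} a_{k1}^{\alpha/2}$ and therefore $\sigma_{j1}\sigma_{k1}/(\sigma_{jk}\sigma_{11}) \leq 1$. This gives $D_1 \leq 1$, so $\max(D_1,1) = 1$ and formula~\eqref{eq:tau} produces $\tau_{1,1} = a_{11}^{\alpha/2}$ and $\tau_{j,1} = a_{j1}^{\alpha/2}$ for $j \geq 2$ — precisely the first column of $A_*$.

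Next I would identify the residual matrix $\Sigma_{\vc{X}}^{(-1)}$ with the TPDM of the smaller max-linear model. Subtracting $\vc{\tau}_{-1}\vc{\tau}_{-1}^T$ amounts to removing the $l=1$ summand from each $\sigma_{jk}$, giving $\sum_{l=2}^{\min(j,k)} a_{jl}^{\alpha/2} a_{kl}^{\alpha/2}$, which is exactly $\Sigma_{\vc{\tilde X}}$ for the $(d-1)$-dimensional max-linear model with lower-triangular parameter matrix $(a_{jk})_{j,k=2,\ldots,d}$. Applying the induction hypothesis along the canonical path $2 \mapsto \ldots \mapsto d$ reconstructs $(a_{jk}^{\alpha/2})_{j,k \geq 2}$ exactly, which, combined with the first column already recovered, yields $A_* = (a_{jk}^{\alpha/2})$ and hence $A_* A_*^T = \Sigma_{\vc{X}}$.

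The step I expect to require the most care is handling potential degeneracies: if some $a_{ii} = 0$, then at the corresponding stage of the recursion $\sigma_{ii} = 0$ and the denominator in~\eqref{eq:tau} collapses. I would deal with this either by restricting to the generic case $a_{ii} > 0$ for all $i$ (so every $X_i$ carries a factor $Z_i$ not shared with $X_1,\ldots,X_{i-1}$), or by observing that a zero pivot forces the associated row of $A_*$ to vanish altogether, so that coordinate can simply be skipped in the recursion without affecting the remaining argument. Apart from this, the proof is essentially index bookkeeping.
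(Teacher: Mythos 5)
Your proof is correct and follows essentially the same route as the paper's: induction on $d$, using lower-triangularity to show $D_1 \leq 1$ so that the first column of $A_*$ equals $(a_{j1}^{\alpha/2})_j$, and identifying $\Sigma_{\vc{X}}^{(1)}$ with the TPDM of the $(d-1)$-dimensional lower-triangular max-linear submodel before invoking the induction hypothesis. Your attention to the degenerate cases ($D_1=1$ and vanishing diagonal pivots $a_{ii}=0$) is a point the paper's proof glosses over, but it does not change the substance of the argument.
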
 
\begin{rem} \label{rem:dag}
\change{When the matrix $A$ is lower-triangular, the max-linear model $\vc{X} = A \times_{\max} \vc{Z}$ corresponds to a fully connected directed acyclic graph (DAG), where each node $i$ points to all nodes $j>i$; see e.g. \cite{krali2023heavy}. In other words, with a known DAG structure, our algorithm provides the edge weights of a given TPDM. When all edge weights are non-zero, the DAG is a complete graph. Each zero element in the lower-triangular part of $A$ indicates the absence of an edge in the DAG.
When the graphical structure is unknown, if the proposed algorithm leads to an exact decomposition, %together with an appropriate path, 
it reveals one possible DAG structure (i.e. a potential causal structure) for the components of $\vX$ that matches the TPDM of $\vX$. } 
\end{rem}
\begin{rem}\label{rem:zerosigma}
{
   If the matrix $\Sigma_{\vX}$ has a zero eigenvalue, it means that the components of $\vX$ are linearly related when the radius $R$ is high. In practice, an eigenvalue of $\Sigma_{\vX}$ can indeed be close to zero when the variables are strongly tail dependent. This is comparable with the concept of multicolinearity in multivariate statistics. In that case, we will encounter a zero element in $\Sigma_{\vX}^{(i)}$ after a certain number of steps (say, $d'$) of the iterative algorithm. As a consequence, the matrix $A_*$ will have $d'$ columns, where $d'$ equals the rank of $\Sigma_{\vX}$.}  %column of $A_*$ will be zero: the column corresponding to the first variable encountered that is close to a linear combination of previous variables, given that the radius is high. We can remove that column in $A_*$ and correspondingly $A$. Eventually, the number of non-zero columns of $A_*$ corresponds to the rank of $\Sigma_{\vX}$.}
\end{rem}
%Recall that the definition of a path depends on the permutation of the $d$ dimensions. Therefore, there are $d!$ possible paths. 

We conclude this subsection by discussing the usefulness of a non-exact decompositions in downstream tasks such as estimating the probability of a failure set. Consider two TPDMs, $\Sigma_{\vY}\succeq \Sigma_{\vX}$, with $\Sigma_{\vY} \neq \Sigma_{\vX}$. Denote $\vc{w}=(\text{diag}(\Sigma_{\vY})-\text{diag}(\Sigma_{\vX}))^{1/\alpha}$, where the $\text{diag}$ operator is defined by taking the diagonal elements of a square matrix. Then $\vc{w}$ is a non-negative $d-$dimensional vector with at least one positive element. \change{Suppose that $\vX$ is multivariate regularly varying with index $\alpha$ and TPDM $\Sigma_{\vX}$. Let $Z_1, \ldots, Z_d \siid \textnormal{Fréchet}(\alpha)$ 
%Consider $d$ i.i.d. independent Fr\'{e}chet distributed random variables $\vc{Z}=(Z_1,\ldots,Z_d)^T$ with scale $1$ and shape $\alpha$. 
and define $\vY= \left(\max(X_1, w_1 Z_1), \ldots, \max(X_d, w_d Z_d) \right)^T$. From Proposition \ref{prop:properties}, Statement 4, we obtain $\Sigma_{\vY} = \Sigma_{\vX}+\text{Diag}(\vc{w}^{\alpha})$, where the $\text{Diag}$ operator constructs a diagonal matrix using the components of $\vc{w}^{\alpha}$. In other words, there exists a random vector $\vY$ with TPDM $\Sigma_{\vY}$ such that $\vY\geq \vX$ element-wise. Consider a failure set $C_f(x)=\{\vx \in \RR^d : f(\vx)> x\}$ where $f$ is measurable and non-decreasing in all dimensions. We immediately get $f(\vY)\geq f(\vX)$, which implies $\PP[\vY \in C_f(x) ]\geq \PP[\vX \in C_f(x) ]$, i.e., using the constructed vector $\vY$, we can only overestimate the intended failure probability. Hence, for regions of the form $C_f(x)$, using a non-exact decomposition can only lead to a conservative failure probability estimate, which can be desirable from the perspective of risk management.}

\subsection{Applying the algorithm to estimate failure probabilities}\label{sec:practical}
\change{The decomposition algorithm in Section \ref{sec:algorithm} requires a specification of the path a priori. With a suitable choice of the path, one may obtain a decomposition leading to a TPDM that is close to, or even exactly equal to, the original TPDM.} When handling a TPDM $\Sigma_{\vX}$ with a large $d$, it is not feasible to exhaust all $d!$ possible paths in search for an exact decomposition. 
%Here, ``optimal'' can be interpreted in the sense that $\norm{\Sigma_{\vX} - \Sigma_{\vY}}$ is as small as possible for a max-linear vector $\vY$ with coefficient matrix $A$.  
In practice, we need an efficient searching algorithm to find \change{such} a path, if possible. We consider three potential algorithms to achieve this goal. 
%Their difference lies in the choice of paths.
\begin{enumerate}
\item A simple approach: in step $j$ we pick the next dimension $i_j$ corresponding to the lowest value of $D_i$ among all remaining dimensions. 
\item An exhaustive approach: in step $j$, we let $T_j$ denote the \emph{set} of remaining dimensions corresponding to a value of $D_i$ below 1, i.e. $T_j := \{ i : D_i \leq 1, i\neq i_1,i_2,\cdots,i_{j-1}\}$. For each $i\in T_j$, we proceed with the iterative process for $\Sigma^{(-i)}_{\vX}$. This will result in a tree of possibilities after each step. For some ``branches'' of the tree, the algorithm may end at a ``leaf'' where no $D_i$ lies below 1, i.e. $T_j=\varnothing$. If some ``branch'' can iterate $d$ steps, it leads to a path that corresponds to an exact decomposition. We retain all paths (and hence all possible matrices $A_*$) that lead to an exact decomposition. 
\item A pragmatic approach: in step $j$, we \change{let $T_j := \{ i : D_i < 1, i\neq i_1,i_2,\cdots,i_{j-1}\}$ denote the \emph{set} of remaining dimensions corresponding to a value of $D_i$ below 1}. We randomly choose an element from this set and use it to proceed with the iteration. If the procedure stops after less than $d$ iterations and ends with a scenario where no $D_i$ lies below 1, we restart the entire procedure from the first iteration. We repeat the algorithm until finding an exact decomposition. 
\end{enumerate}
Note that the simple approach may not lead to an exact decomposition because there is no guarantee that the minimum of the $D_i$'s is below $1$ in all steps. The exhaustive approach will find all exact decompositions if there exists at least one. However, it is limited to moderate dimensions $d$ because of its computational intensity. \change{The pragmatic approach does not guarantee finding an exact decomposition either, but for the real data examples in Section \ref{sec:appl}, an exact decomposition is quickly found after a few ``restarts'', and hence it is the approach that we use in practice.}

%\sout{The TPDM forms an incomplete characterization of tail dependence; hence,  we would like to assess by how much the exponent measure of a given failure region varies if it is computed from repeated decompositions. Such an investigation will give us insight into how much information is lost when using pairwise tail dependence coefficients instead of a full $d$-dimensional dependence model. We proceed} 
\change{To investigate the accuracy of our algorithm in approximating failure probabilities, we design a numerical experiment} as follows, inspired by Section 2 of the supplementary material of \cite{cooley2019}. {Define
\footnotesize
\begin{equation*}
A_1 = \begin{pmatrix}
1.00 & 0.50 & 0.00 & 0.25 & 1.75 & 0.50 & 0.75 & 1.00 & 1.00 & 0.25 & 1.75 & 0.25 & 1.50 & 0.50 & 0.25 & 1.50 \\ 
2.00 & 0.00 & 1.50 & 1.00 & 1.00 & 0.25 & 1.00 & 1.00 & 1.75 & 0.25 & 0.00 & 2.00 & 0.25 & 0.50 & 0.25 & 0.75 \\ 
1.75 & 1.25 & 0.75 & 0.25 & 0.50 & 2.00 & 1.75 & 0.25 & 0.75 & 0.25 & 1.25 & 0.25 & 1.00 & 0.75 & 1.00 & 0.25 \\ 
1.25 & 0.25 & 2.00 & 0.25 & 1.25 & 2.00 & 0.50 & 0.25 & 0.50 & 0.50 & 0.00 & 0.75 & 0.50 & 0.25 & 1.75 & 0.50 \\ 
1.75 & 0.50 & 0.75 & 1.25 & 0.25 & 0.50 & 1.75 & 0.00 & 2.00 & 1.00 & 1.50 & 0.50 & 0.00 & 0.50 & 1.25 & 1.25 \\ 
\end{pmatrix},
\end{equation*}
\normalsize
and let $A_2$ consist of the first eight columns of $A_1$, and $A_3$ of the first four columns of $A_1$. Note that matrix $A_2$ is the same as the one in \cite{cooley2019}.} Let $\nu_i$ denote the exponent measure of a max-linear random vector with coefficient matrix $A_i$, for $i=1,2,3$ and $\alpha = 4$. Let $\Sigma_i = A_i A_i^T$; we obtain the $d! = 5! = 120$ decompositions of  $\Sigma_i$ and consider the 120 max-linear vectors with corresponding coefficient matrices for each $i=1,2,3$.  From the 120 possible paths,  we use {98 paths for $A_1$,  94 for $A_2$,  and 80 for $A_3$}, {since the others paths lead to (near) zero eigenvalues in $\Sigma^{(i)}$; see Remark~\ref{rem:zerosigma}. }

\begin{figure}[h!]
\centering
\includegraphics[width=0.49\textwidth]{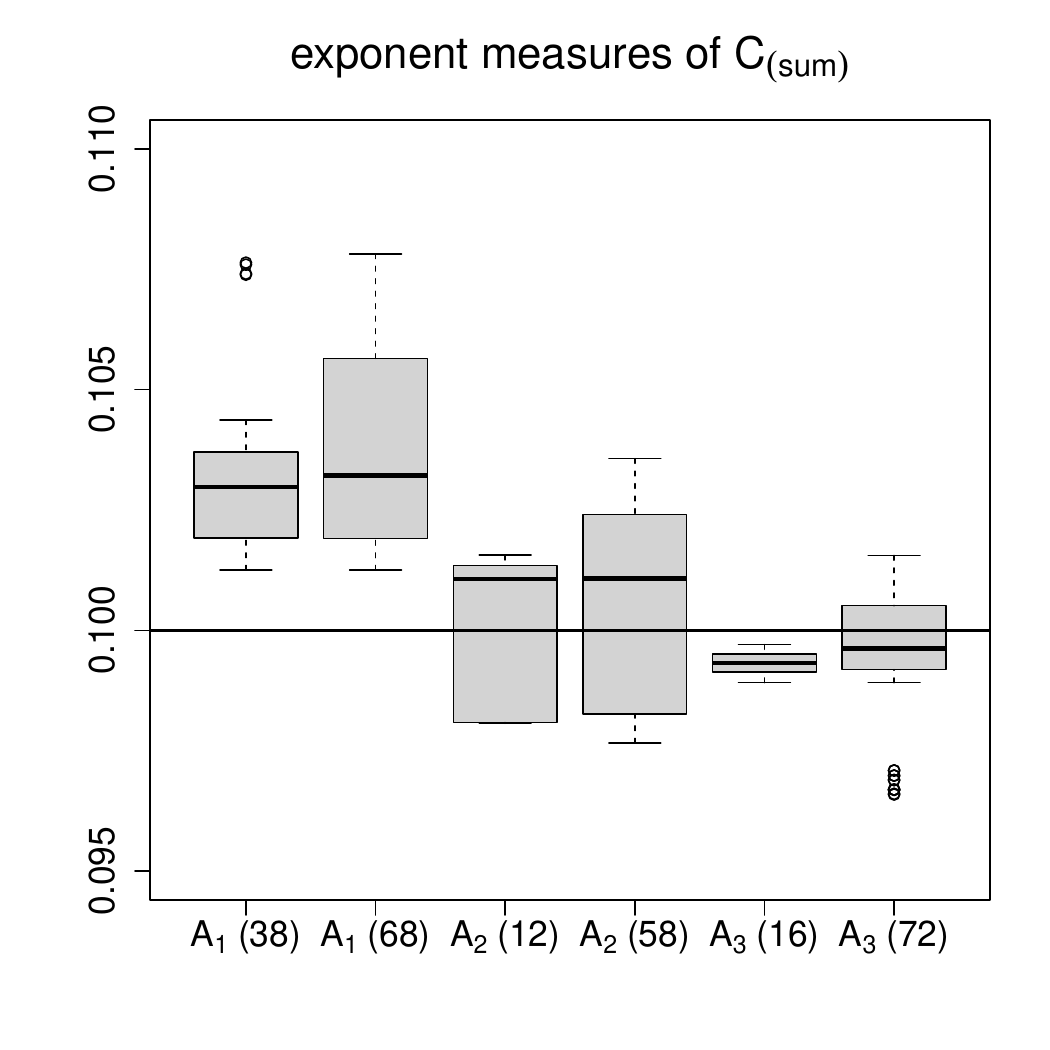} 
\includegraphics[width=0.49\textwidth]{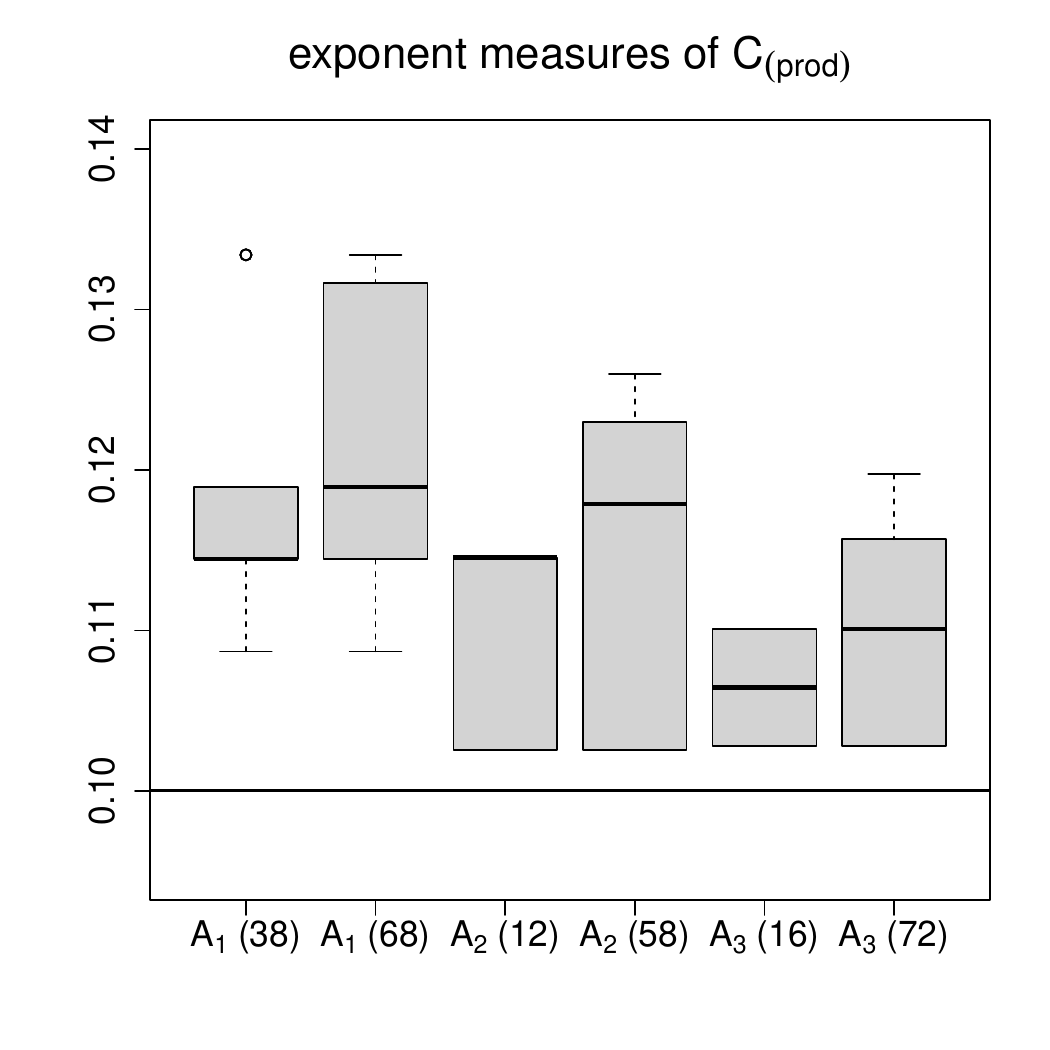} \\
\includegraphics[width=0.49\textwidth]{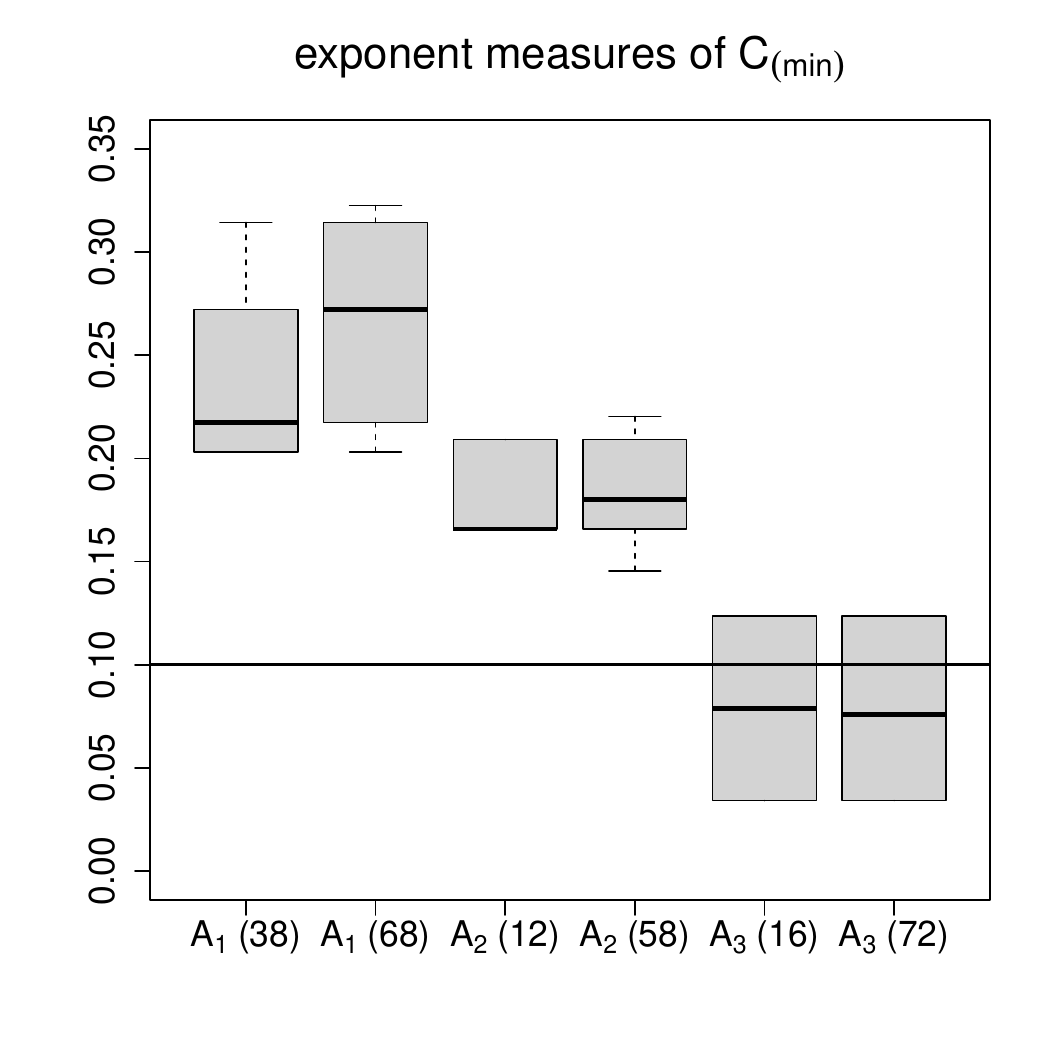}
\includegraphics[width=0.49\textwidth]{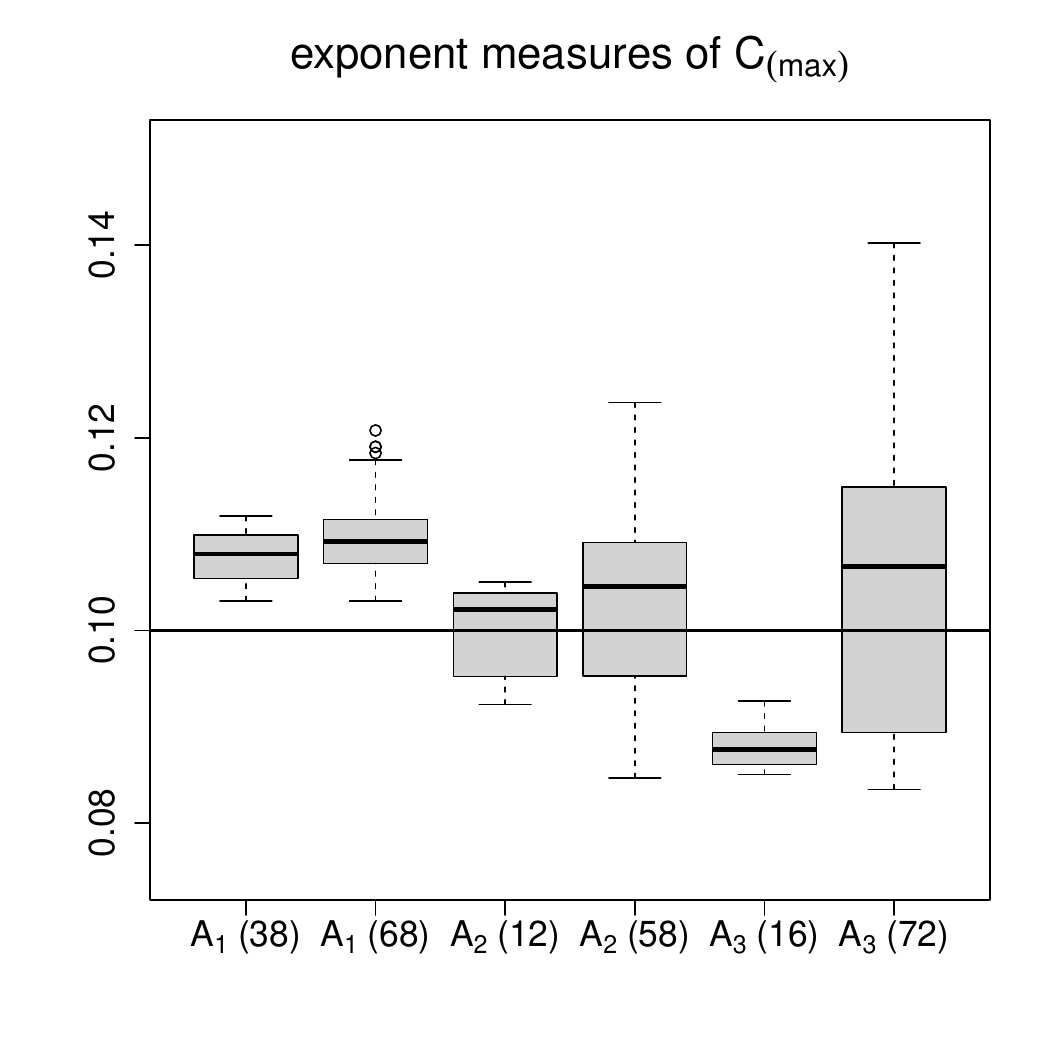} 
\caption{Boxplots of limiting exponent measures of exact and approximate decompositions \change{for $C_{(\textnormal{sum})}
$, $C_{(\textnormal{prod})}$, $C_{(\textnormal{min})}$, and $C_{(\textnormal{max})}$,} and $\alpha = 4$. The numbers in parentheses represent the number of decompositions on which the boxplot is based. The true value based on the full $5$-dimensional dependence model is equal to $0.1$ for all cases.}
\label{fig:expmeasures}
\end{figure}

We assess the limiting exponent measure of $C_f  (x) = \{ \vc{y} \in \EE_0: f(\vc{y}) > x\}$; see Section \ref{sec:failure}. We use the functions \change{$f_{(\textnormal{sum})}(\vy) = \frac{1}{d} \sum_{j=1}^d y_j$, $f_{(\textnormal{prod})} (\vy) = \prod_{j=1}^d y_j$, $f_{(\textnormal{min})} (\vy) = \min(y_1,\ldots,y_d)$ and $f_{(\textnormal{max})} (\vy) = \max(y_1,\ldots,y_d)$, and we write the associated limiting exponent measures as $C_{(\textnormal{sum})}
$, $C_{(\textnormal{prod})}$, $C_{(\textnormal{min})}$, and $C_{(\textnormal{max})}$ respectively.} We consider $\alpha = 4$ to be known and we fix $x$ such that $\nu_i(C_f) = 0.1$.  Figure \ref{fig:expmeasures} shows boxplots for the measures $\nu_i$, $i=1,2,3$, of the sets $C_{(\textnormal{sum})},C_{(\textnormal{prod})}$, $C_{(\textnormal{min})}$ and $C_{(\textnormal{max})}$, obtained through the different paths. 
We show both the results when taking only exact decompositions into account (38 for $A_1$, 12 for $A_2$, and 16 for $A_3$) and when using approximate decompositions, i.e., with Frobenius norm $\leq 5$ (68 for $A_1$,  58 for $A_2$,  and 72 for $A_3$).  %For $A_3$, 24 decompositions contain only 3 columns and are exactly equal to $A_3$, while 40 other decompositions contain 4 columns. 
Note that for \change{$C_{(\textnormal{prod})}$ and $C_{(\min)}$}, only the first column of the resulting coefficient matrices is used in the calculation of $\nu_i$, since all subsequent columns contain at least one zero; see also Section \ref{sec:failure}.

Overall, the TPDM \change{provides a reasonable summary of tail dependence: exponent measures calculated based on the decompositions fluctuate around the intended value. We observe that the tail dependence properties of the failure regions $C_{(\textnormal{sum})}$ and $C_{(\max)}$ are better summarized through the TPDM than those of $C_{(\textnormal{prod})}$ and $C_{(\min)}$}. 
This can be explained by the fact that the regions $C_{(\textnormal{sum})}$ and $C_{(\textnormal{max})}$ 
``cut'' the axes and hence concern more scenarios driven by a single extreme event, compared to the regions
$C_{(\textnormal{prod})}$ and $C_{(\textnormal{min})}$ that only concern scenarios involving simultaneous extremes.  In general, variability \change{often increases when using approximate rather than exact decompositions, but the estimates do not deviate more from the true value (on average)}.

\section{Statistical estimation procedure}\label{sec:estim}

%\ak{I've reorganized this section a little bit, with one subsection describing the estimator of the TPDM elements, $\widehat{\sigma}_{jk}$, and its theoretical properties, and a second subsection which is more ``practical''. But the second sub-section is a bit short, so maybe we should just make one big section.}

\change{In this section, we describe the statistical procedure to estimate the probability of a failure set based on the decomposition algorithm in Section \ref{sec:alg}. We start with the multivariate regular variation case in Section~\ref{sec:estim1}, where we also study the theoretical properties of the estimator. In Section~\ref{sec:estim2}, we discuss the practical implementation of estimating failure probabilities, including the case where multivariate regular variation cannot be assumed.} 

%Section \ref{sec:simu} illustrates that, in practice, $\Sigma_{\vX} = \Sigma_{\vY}$ for moderate dimensions ($d \leq 25$). 

\subsection{Estimation of the TPDM}\label{sec:estim1}
Let $\vc{X}_i = (X_{i1},\ldots,X_{id})$ for $i = 1,\ldots,n$ be iid copies of $\vX$. First, we estimate the tail indices $\alpha_j$, or, equivalently, the extreme value indices $\gamma_j = 1/\alpha_j$, for each $j= 1,\ldots,d$ using the sample $X_{1j},\ldots,X_{nj}$. \change{Recall that multivariate regular variation implies that all margins have equal tail indices, $\alpha_1 = \ldots = \alpha_d := \alpha > 0$.}
%\sout{, which is a common assumption, since different tail indices will lead to the {smallest} one dominating the others  and masking tail dependence.}

Let $X_{(1),j} \leq \ldots \leq X_{(n),j}$ denote the order statistics of $X_{1j},\ldots,X_{nj}$ and let $k = k_n$ be an intermediate sequence such that $k \rightarrow \infty$ and $k/n \rightarrow 0$ as $n \rightarrow \infty$. The \emph{Hill estimator},
\begin{equation}\label{eq:hill}
\widehat{\gamma}_j = \frac{1}{k} \sum_{i=1}^k \log \left( \frac{X_{(n-i+1),j}}{X_{(n-k),j}} \right),
\end{equation}
is the best-known estimator of the extreme-value index \citep{hill1975}, and will be used to obtain the estimated tail indices $\widehat{\alpha}_j = 1/\widehat{\gamma}_j$. 
%Because it is asymptotically biased and appropriate for iid samples only, we also consider  \citet{dehaan2016} proposed an alternative estimator which is unbiased and allows for the serial dependence, noted $\widehat{\gamma}_j = \widetilde{\gamma}_j  - \widehat{\textnormal{Bias}} (\widetilde{\gamma}_j )$. See \citet[equation (4.2)]{dehaan2016} for its formal definition. 
% For each $j \in \{1,\ldots,d\}$, we follow the estimation procedure described in \citet[Section 6.2]{dehaan2016} to obtain $\widehat{\alpha}_j = 1/\widehat{\gamma}_j$ (see also \citet{gomes2002}  for the estimation of the second-order parameter).
%\comment{I also tried the estimator in \citet{dehaan2016} but it doesn't lead to a big difference: moreover, if we use this estimator, we should combine it with an automated selection procedure for $k$}

\change{A single tail index estimate $\widehat{\alpha}$ might be justified: equality of the tail indices can be tested formally using the methods in \citet{chen2024high}. If the null hypothesis $H_0: \alpha_1 = \ldots = \alpha_d$ cannot be rejected,} the entire sample $X_{11},\ldots,X_{n1}, X_{12},\ldots,X_{n2},\ldots,X_{1d},\ldots,X_{nd}$ of length $nd$ will be used to obtain the estimate $\widehat{\alpha}$. \change{The theoretical properties of such an estimator can be derived using the results in \cite{einmahl2022}}.

\change{Let $r_0 = r_0(n)$ be a threshold sequence such that $r_0 \rightarrow \infty$ and $n \, \PP(R > r_0) \rightarrow \infty$ as $n \rightarrow \infty$, and recall from~\eqref{eq:tail of R} that} 
% More specifically, we take a high quantile of the empirical distribution of $R_1,\ldots,R_n$, denoted by \change{$r_0 = r_0(n)$}. 
$m=\lim_{n\to\infty} nx^{\alpha} \PP (R>n^{1/\alpha}x)$. For $i  = 1,\ldots,n$, set $R_i =  \norm{\vc{X}_i}_{\widehat{\alpha}}$ and $\vc{W}_i = \vc{X}_i/ R_i$.
% Let $m = H_{\vX}(\mathbb{S})$ denote the total mass of the angular measure. If the data are standardized to ensure equal scales for all $j$ (see Section \ref{sec:estim2}), then $m = d$. 
% Otherwise, {recall that $m =  \lim_{n \rightarrow \infty} n \, \PP[R _i> b_n]$}.
%m =  \nu_{\vX} (C_{1, \mathbb{S}}) = \lim_{n \rightarrow \infty} n \, \PP [b_n^{-1} \vX_i \in C_{1,\mathbb{S}}] = \lim_{n \rightarrow \infty} n \, \PP[R _i> b_n].
% Following \cite{cooley2019}, we estimate $b_n$ by a high quantile of the empirical distribution of $R_1,\ldots,R_n$, denoted as $r_0$. 
{By Statement 1 in Proposition \ref{prop:properties}, we estimate the TPDM by focusing on \change{the angles $\vc{W}_i$ corresponding to observations with high values of the radius component $R_i$}. By replacing $n^{1/\alpha}x$ with $r_0$,} $m$ can be estimated by  $\widehat{m} = r_0^{\widehat{\alpha}}  (n_{\textnormal{exc}}/n)$ for $n_{\textnormal{exc}} = \sum_{i=1}^n \mathbbm{1} \left\{ R_i > r_0 \right\}$.  
% Finally, since $\sigma_{jk} = m \int_{\mathbb{S}} w_j^{\alpha/2} w_k^{\alpha/2} \diff \widetilde{H}_{\vc{X}} (\vc{w})$ where $\widetilde{H}_{\vX} ( \, \cdot \,) := m^{-1} H_{\vX} (\, \cdot \,)$ is a probability measure, 
{Since $
\sigma_{jk} = m \lim_{x \rightarrow \infty} \EE \left[ W_j^{\alpha/2} W_k^{\alpha/2} \mid R > x \right]$, replacing $x$ by $r_0$,} the elements of the TPDM can be estimated by
\begin{equation}\label{eq:sigmahat}
\widehat{\sigma}_{jk} = \frac{\widehat{m}}{n_{\textnormal{exc}}} \sum_{i=1}^n W_{ij}^{\widehat{\alpha}/2} W_{ik}^{\widehat{\alpha}/2} \mathbbm{1}\left\{ R_i > r_0 \right\} = \frac{r_0^{\widehat{\alpha}}}{n} \sum_{i: R_i > r_0} \left( W_{ij} W_{ik} \right)^{\widehat{\alpha}/2},
\end{equation}
giving $\widehat{\Sigma}_{\vX} =  (\widehat{\sigma}_{jk})_{j,k=1,\ldots,d}$. %\cite{larsson2012} show the asymptotic normality of $\widehat{\sigma}_{jk}$ when $\alpha = 2$ is known. 

\change{For $i = 1,\ldots,n$, let $\widetilde{R}_i =  \norm{\vc{X}_i}_{\alpha}$ and $\widetilde{\vc{W}}_i = \vc{X}_i/ \widetilde{R}_i$. In the remainder of this subsection, we omit the tildes and write
$R_i$ and $\vc{W}_i$ instead. Before studying the asymptotic behavior of $\hat{\sigma}_{jk}$ in \eqref{eq:sigmahat}, we start by proving the (joint) asymptotic normality of 
\begin{equation}\label{eq:tpdmalpha}
\widetilde{\sigma}_{jk} = \frac{r_0^{\alpha}}{n} \sum_{i: {R}_i > r_0} \left( {W}_{ij} {W}_{ik} \right)^{\alpha/2},
\end{equation}
for all  $1\leq j, k\leq d$, i.e., the case where $\alpha$ is known.} %\ak{Does \cite{larsson2012} prove a similar result to Theorem 4.1?} \cz{This is difficult to say: the estimator in the paper is in our form with $\alpha=2$, but they do not assume that the random vector is MRV with $\alpha=2$. In fact, they define the EDF without an $\alpha$ (i.e. taking the format with $\alpha=2$), while allowing the MRV to have a general $\alpha$. So there is no direct comparison. Let's say when $\alpha=2$, this theorem is the same as that in their paper. Maybe this is a fair statement. I leave it after this comments.} 
\change{A special case of Theorem~\ref{thm:asymptotic_normality}, using $\alpha=2$, has been proven in \citet[Theorem 1]{larsson2012}.}

\change{
\begin{theorem}\label{thm:asymptotic_normality}
    Let $\alpha$ be known and let $r_0=r_0(n)$ be such that $r_0\to\infty$ and $n \, \PP(R>r_0)\to\infty$ as $n\to\infty$. Assume that the following two deterministic relations hold for all $j,k \in \{1,\ldots,d\}$:
\begin{align*}
    \sqrt{n\PP(R>r_0)} \left( r_0^\alpha \, \PP(R>r_0) - m \right) &\to 0,\\
    \sqrt{n\PP(R>r_0)} \left( \mathbb{E}\left[\left( W_{j} W_{k} \right)^{\alpha/2} \mid R>r_0\right] - \frac{\sigma_{jk}}{m} \right) &\to 0.
\end{align*}    
Then, as $n\to\infty$, jointly for all $1\leq j, k\leq d$,
    \[
        \left( \sqrt{n\PP(R>r_0)}\{\widetilde\sigma_{jk} - \sigma_{jk}\}\right)_{1 \leq j,k \leq d} \dto \bm{\zeta},
    \]
where $\bm{\zeta} = (\zeta_{jk})_{1 \leq j,k \leq d}$ is a $d^2$-dimensional mean-zero Gaussian random vector with covariance matrix $S$ indexed by pairs $(j,k)$, with entries
\[
\textnormal{Cov}(\zeta_{j_1,k_1},\zeta_{j_2,k_2})
= s_{(j_1,k_1),(j_2,k_2)}
= m \int_{\mathbb{S}}
(w_{j_1} w_{k_1})^{\alpha/2}
(w_{j_2} w_{k_2})^{\alpha/2}
\, \diff H_{\vX}(\vc{w}).
\]
\end{theorem}
%We have the following remark on the two assumptions in Theorem~\ref{thm:asymptotic_normality}. 
Under multivariate regular variation, as $n\to\infty$ (equivalently, as $r_0 \to \infty$),
\begin{align*}
   r_0^\alpha \, \PP(R>r_0) \to m, \qquad
    \mathbb{E}\left[\left( W_{j} W_{k} \right)^{\alpha/2} \mid R>r_0\right] \to \frac{\sigma_{jk}}{m}.
\end{align*} 
Hence, the assumptions in Theorem~\ref{thm:asymptotic_normality} control the convergence rates of the two limits, ensuring that the corresponding bias terms are negligible compared to the stochastic fluctuations at rate $\{n\PP(R>r_0)\}^{-1/2}$.  Such second-order assumptions are standard in extreme-value analysis and constrain the choice of $r_0$ through the usual bias--variance tradeoff.

In practice, it is natural to choose $r_0$ as a high empirical quantile of
$R_1,\ldots,R_n$. Let $R_{(1)}\leq \cdots \leq R_{(n)}$ denote the order
statistics of $R_1,\ldots,R_n$ and set $r_0=R_{(n-n_{\textnormal{exc}})}$,
where $n_{\textnormal{exc}}\in\{1,\ldots,n-1\}$, i.e.,
there are exactly $n_{\textnormal{exc}}$ observations satisfying $R_i>r_0$.
The next theorem shows that asymptotic normality is preserved under this
choice of $r_0$. 
\begin{theorem}\label{thm:empirical_quantile}
    Let $\alpha$ be known and let $n_{\text{exc}}$ be an intermediate sequence such that $n_{\text{exc}} \to \infty $ and $n_{\text{exc}}/n\to 0$ as $n\to\infty$. Let $r_0= R_{(n-n_{\text{exc}})}$ and let $q(n_{\text{exc}}/n)$ denote the $(1-n_{\text{exc}}/n)$-quantile of $R$. Assume that the following two deterministic relations hold for all $j,k \in \{1,\ldots,d\}$:
\begin{align*}
    \sqrt{n_{\text{exc}}} \left( \frac{n_{\text{exc}}}{n} 
    \{q(n_{\text{exc}}/n)\}^\alpha - m \right) &\to 0,\\
    \sqrt{n_{\text{exc}}} \left( \mathbb{E}\left[\left( W_{j} W_{k} \right)^{\alpha/2} \mid R>q(n_{\text{exc}}/n)\right] - \frac{\sigma_{jk}}{m} \right) &\to 0.
\end{align*}    
Then, as $n\to\infty$, jointly for all $1\leq j, k\leq d$,
    \[
        \left( \sqrt{n_{\text{exc}}}\{\widetilde\sigma_{jk} - \sigma_{jk}\}\right)_{1 \leq j,k \leq d} \dto \bm{\zeta},
    \]
    where $\bm{\zeta}$ is the Gaussian random vector with covariance matrix $S$ given in Theorem~\ref{thm:asymptotic_normality}.    
\end{theorem}
Again, the two assumptions in Theorem~\ref{thm:empirical_quantile} control the bias, and they can be satisfied by choosing $n_{\text{exc}}$ to be compatible with the second-order conditions on the tail of $R$ and the limit behavior of $\vc{W}$.

Finally, under additional conditions, we establish consistency of the estimator $\widehat{\sigma}_{jk}$ in \eqref{eq:sigmahat} for each fixed $(j,k)$. Since $d$ is fixed, this implies uniform consistency over all pairs $(j,k)$. 

\begin{theorem}\label{thm:consistency_estimated_alpha}
    Assume that $\widehat\alpha$ is a consistent estimator of $\alpha$, satisfying
    $\widehat\alpha -\alpha = O_P(s_n)$ for some deterministic sequence $s_n\to 0$. In addition, assume that the threshold $r_0$ is chosen such that $s_n\log r_0\to 0$ as $n\to\infty$. Then, for each fixed pair $(j, k)$, as $n\to\infty$,
    \[
        \widehat{\sigma}_{jk} \xrightarrow{p} \sigma_{jk}.
    \]
\end{theorem}
The condition $s_n \log r_0 \to 0$ ensures that the stochastic error from replacing $\alpha$ by $\widehat{\alpha}$ in the factor $r_0^{\widehat{\alpha}}$ is asymptotically negligible. In practice, we take $\widehat{\alpha}$ as the reciprocal of the Hill estimator \eqref{eq:hill}, which is consistent; see, e.g., \citet[Theorem 3.2.5]{dehaanferreira2006} %\ak{So of course the standard Hill estimator is consistent, but in practice, we use the pooled estimator, and we justify it using the reference \cite{einmahl2022}. But do you actually show consistency of the pooled estimator?}

The consistency result in Theorem~\ref{thm:consistency_estimated_alpha}
could in principle be strengthened to asymptotic normality of
$\widehat{\sigma}_{jk}$ by combining a delta-method argument with the
asymptotic normality of $\widehat{\alpha}$. This would require additional
conditions relating the threshold $r_0$ to the number of order statistics
used to estimate $\alpha$. Since the resulting expression is cumbersome and
depends on tuning choices that are difficult to verify in practice, we restrict
attention to consistency.}

\subsection{Estimating failure probabilities}\label{sec:estim2}
Once the TPDM is estimated, we calculate the $d \times d$ matrix $\widehat{A}_*$ obtained by applying the decomposition algorithm outlined in Section \ref{sec:alg} to $\widehat{\Sigma}_{\vc{X}}$.
\change{We remark that there exists a natural benchmark with which we can compare our estimate $\widehat{A}_*$. Consider} the empirical $(d \times n_{\textnormal{exc}})$ estimate 
\[
\widetilde{A}_*= \left(\frac{\widehat{m}}{n_{\textnormal{exc}}} \right)^{1/2} \widehat{W}^{\widehat{\alpha}/2},
\]
where $\widehat{W}^{\widehat{\alpha}/2}$ is the matrix whose columns are the vectors $\vc{W}_i^{\widehat{\alpha}/2}$ for which $R_i > r_0$. Recall that $r_0$ is chosen as a high quantile of the empirical distribution of $R_1,\ldots,R_n$; in practice, we plot different quantiles against the estimates of $m$ and search for a region where estimates are stable. The matrix $\widetilde{A}_* $ is ``inefficient'' in the sense that it has much more columns than $\widehat{A}_*$, especially when $n_{\textnormal{exc}}$ is large.  However, it is easy to calculate and satisfies $\widehat{\Sigma}_{\vX} = \widetilde{A}_*  \widetilde{A}_* ^T$.
%\sout{ and}.\change{Therefore, } we will use it as a benchmark.  

\change{After obtaining the two estimators} $\widetilde{A} = ( \widetilde{A}_*)^{2/\widehat{\alpha}}$ and $\widehat{A}= (\widehat{A}_*)^{2/\widehat{\alpha}}$, \change{we use them} to estimate the probability of falling into one of the failure regions presented in Section~\ref{sec:failure}. 
We will denote the estimators of these probabilities by $\widehat{p}$ for $\widehat{A}$ and 
$\widetilde{p}$ for $\widetilde{A} $.
Let $\lVert \, \cdot \, \rVert_F$ denote the Frobenius norm.  
In \change{Section~\ref{sec:appl}}  we will %usually take the pragmatic approach described in Section \ref{sec:alg} to generate $N$ different matrices 
compare \emph{exact decompositions} of $\widehat{\Sigma}_{\vX}$, satisfying
%$\widehat{A}_{*1}, \ldots, \widehat{A}_{*N}$ that correspond to 
%\emph{exact} decompositions of $\widehat{\Sigma}_{\vX}$. By an exact decomposition, we mean 
$\lVert \widehat{\Sigma}_{\vX}  -  \widehat{A}_{*} \widehat{A}^T_{*} \rVert_F \leq 10^{-12}$ with \emph{approximate decompositions} of $\widehat{\Sigma}_{\vX}$, satisfying
$\lVert \widehat{\Sigma}_{\vX}  -  \widehat{A}_{*} \widehat{A}^T_{*} \rVert_F \leq 5$.

%Finally,  \sout{note that for certain failure regions we could standardize the margins of $\vc{X}_1,\ldots,\vc{X}_n$ without losing tractability of the shape of the failure set.} 
\change{When the estimated tail indices $\widehat{\alpha}_1,\ldots,\widehat{\alpha}_d$ differ too much or whenever (some) tail indices are negative, the multivariate regular variation assumption is not valid. In such cases,} standardizing the margins is necessary. As shown in Proposition~\ref{prop:properties}, the TPDM will be the same regardless of the choice of $\alpha$. %\ak{Should we then also set $\alpha = 2$ in Section 2.4? Note that we did want to illustrate the effect of different $\alpha$ on the failure region so we took $\alpha = 4$ in Section 3.2.} 
We set \change{$\alpha=2$ and use the transformation}
\begin{equation}\label{eq:standard}
X_{ij}^{*} = \left( - \log \widehat{F}_j^{P} (X_{ij}) \right)^{-1/2}, \qquad j = 1,\ldots,d, \, i = 1,\ldots,n,
\end{equation}
%X_{ij}^* & = g_j (X_{ij}) =   X_{ij}^{\widehat{\alpha}_j / 2} , \qquad j = 1,\ldots,d, i = 1,\ldots,n \\
where $\widehat{F}_j^{P}$ denotes the semi-parametric estimate of $F_j$,
\begin{equation*}
\widehat{F}_j^P (y) = \begin{dcases}
\widehat{F}_j (y) & \text{ if } y \leq u_j ; \\
1 - \left(1 - \widehat{F}_j (u_j) \right) \overline{H}_{\widehat{\sigma}_j, \widehat{\gamma}_j} (y - u_j) & \text{ if } y > u_j; 
\end{dcases}
\end{equation*}
where $\widehat{F}_j$ denotes the empirical distribution function of $X_{1j}, \ldots,X_{nj}$ and $\overline{H}_{\widehat{\sigma}_j,{\widehat{\gamma}}_j}$ the survival function of a generalized Pareto distribution (GPD) with estimated scale $\widehat{\sigma}_j$ and shape $\widehat{\gamma}_j$, obtained by fitting $(X_{ij} - u_j \mid X_{ij} > u_j)_{i=1,\ldots,n}$ to the GPD for some high threshold $u_j$.
The distribution of $X^{*}_{1j}, \ldots, X^{*}_{nj}$ is {approximately} \change{Fr\'echet(2)}. Hence, we have $\sigma_{jj} = 1$ for $j \in \{1,\ldots,d\}$ and $m = H_{\vX} (\mathbb{S}) = d$ does not need to be estimated.  Subsequently, we can decompose the TPDM $\widehat{\Sigma}_{\vc{X}^{*}}$ and construct the associated max-linear model to proceed with estimating the probability of the transformed failure set, \change{either through an analytic formula or via a fast Monte Carlo procedure; see Section~\ref{sec:general}.}

%\sout{Note that part of the marginal transformation relies on the empirical distribution function which is not continuous. Therefore, even if the original failure set is measurable, the transformed failure set might not be measurable. Recalling the failure sets introduced in Section 2.3 for example, if the original set is $C_{(\text{max})}(\vc{x})$ or $C_{(\text{min})}(\vc{x})$, the transformed set is still measurable. However, for $C_{(\text{sum})}(\vc{x})$, the transformed set might not be measurable. For that reason, marginal transformations should only be considered for the former type of failure region.}

\change{We assess the proposed methodology through an end-to-end simulation study in order to assess whether our procedure yields accurate estimates of probabilities of failure sets. We consider three data-generating processes (DGPs), all in dimension $d = 20$:
\begin{enumerate}
    \item A max-linear model with a square coefficient matrix, obtained by estimating and decomposing the TPDM of the first $d = 20$ industry portfolios discussed in Section~\ref{sec:fip}.
    \item A logistic model with parameter $\theta = 1.7$.
    \item A Brown--Resnick model for $d$ spatial locations, generated uniformly over $[0,1]^2$, with range parameter equal to $0.25$ and smoothness parameter $1$. 
\end{enumerate}
The parameter values for the logistic and Brown--Reesnick model are chosen such that the average strength of tail dependence between pairs (i.e. the average $\chi_{jk}$ over all $j,k = 1,\ldots,d$, $j \neq k$) is the same for all DGPs, around $0.5$.

The failure sets we consider are $C_{(\max)} (x_{\max})$ and $C_{(\textnormal{sum})}(d^{-1} \bm{1}, x_{\textnormal{sum}})$, as defined in Section~\ref{sec:failure}, and
\[
C_{(\geq m)} = \{\vy \in \EE_0 : \sum_{j=1}^d \mathbbm{1} \{y_j > x_{\geq m} \} \geq m \},
\]
i.e., at least $m$ components exceed the high threshold $x_{\geq m}$. We take $m = 5$ throughout. The values of $x_{\max}$, $x_{\textnormal{sum}}$, and $x_{\geq m}$ are chosen such that the failure probability of interest always equals $p = 0.01$. They are obtained by Monte Carlo simulation from the corresponding DGP.

Next, we generate $N_{\textnormal{sim}} = 200$ samples of size $n = 2000$ from each DGP. For each sample, we apply the complete estimation procedure as follows:
\begin{enumerate}
    \item Standardize the margins of the sample to $\alpha = 2$ using~\eqref{eq:standard} and estimate the TPDM using~\eqref{eq:tpdmalpha} with $\alpha = 2$.
    \item Decompose the TPDM following the algorithm in Section~\ref{sec:algorithm}, using the ``pragmatic approach'' described in Section~\ref{sec:practical}, to obtain $\hat{A}$.
    \item Simulate a large sample (of size $10^6$) from a max-linear model with coefficient matrix $\hat{A}$ and apply the inverse of~\eqref{eq:standard} to transform the simulated data back to the original marginal distributions.
    \item Estimate the probability of the failure set empirically as described in  Section~\ref{sec:general}.
\end{enumerate}
Note that we always find an exact decomposition (for all DGPs).

Figure~\ref{fig:simstudy} shows boxplots of the failure probability estimates for each DGP and each failure region. The estimates show good correspondence with the true values, even when the DGP is logistic or Brown--Resnick.

\begin{figure}[ht]
\centering
\includegraphics[width=0.9\textwidth]{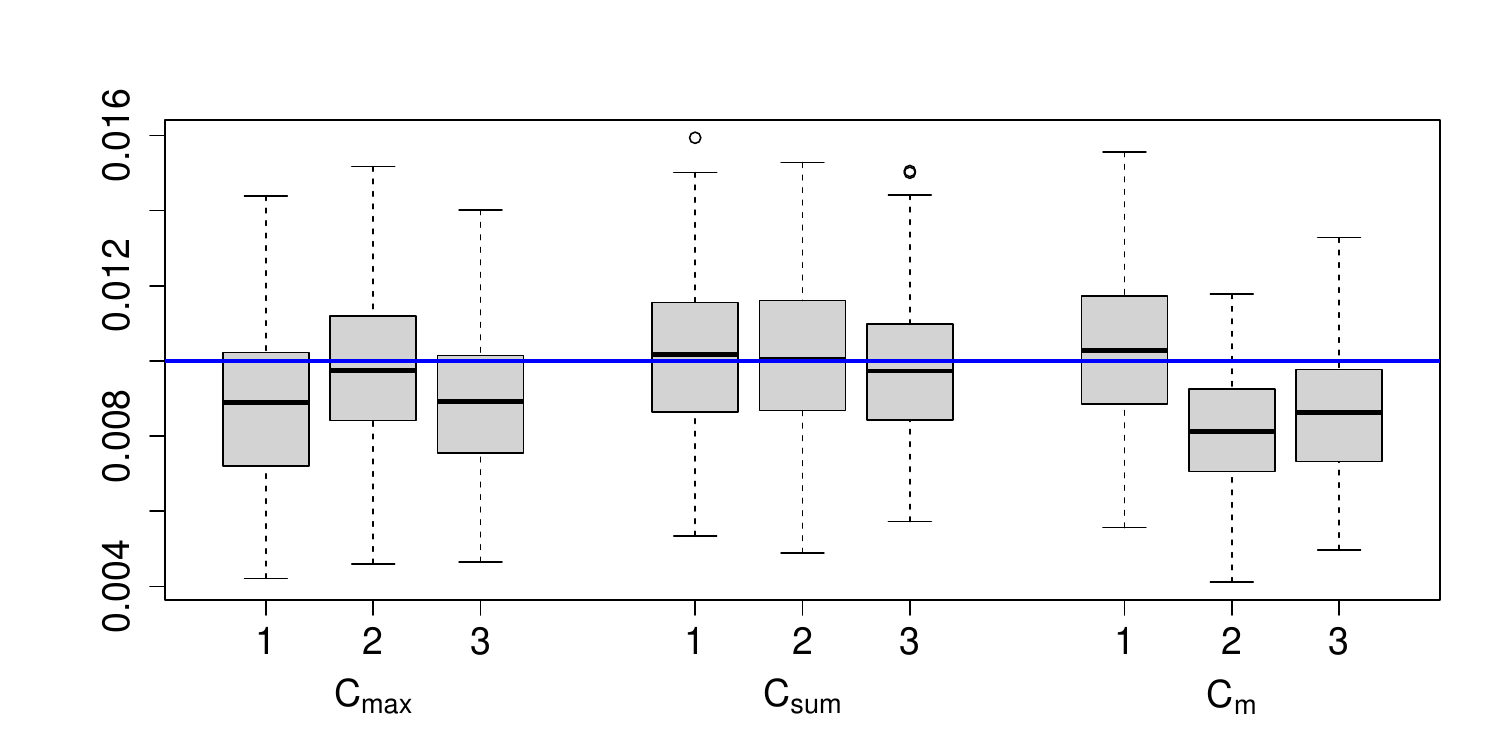}
\caption{Boxplots of estimates of failure probabilities for the regions $C_{(\max)}$, $C_{(\textnormal{sum})}$, and $C_{(\geq m)}$, based on simulations from a max-linear, a logistic, and a Brown--Resnick model.}
\label{fig:simstudy}
\end{figure}

}

\section{Applications}\label{sec:appl}

% \subsection{Failure probabilities computed from repeated decompositions}\label{sec:expmeas}
%\ak{I've added a fourth matrix and discovered an interesting fact while experimenting with different values: the more zeroes there are in the true matrix $A$, the more exact decompositions we have; but results are better (i.e. we don't overestimate) when less zeroes are present in the original matrix.}

\subsection{Financial industry portfolios}\label{sec:fip}
We focus on the daily negative value-averaged returns of $d = 30$ industry portfolios,  downloaded from the Kenneth French Data Library.  Data is available for the period January 1970 until December 2019,  leading to a sample size $n = 12 \, 613$. 

We start by estimating the marginal tail indices $\alpha_j$ using the standard Hill estimator with automated threshold selection \citep{danielsson2016}. Figure \ref{fig:FIP} (left) shows the estimates, with upper and lower bootstrap confidence limits for a $95 \%$ confidence level. Overlapping confidence intervals suggest that the hypothesis of a single index of regular variation $\alpha$ is not rejected. The marginal estimates are rather similar, and the straight line corresponding to the estimate $\widehat{\alpha} = 3.478$, based on all data combined, is within all confidence intervals. %Hence, in the following, we use $\widehat{\alpha}  = 3.479$. 
\change{We apply the multiplier bootstrap test of \cite{chen2024high}, designed for the null hypothesis $H_0: \alpha_1 = \ldots = \alpha_d$, based on the marginal threshold values selected above. We find a $p$-value of $0.985$.} %\ak{I also tried for equal $k$'s in each margin, we cannot reject for any $k \in \{15, \ldots,100\}$, since $p$-values lie between $0.132$ and $0.978$.}
The next step concerns the choice of $r_0$. Figure \ref{fig:FIP} (right) shows estimates of $m$, the total mass of the angular measure, as a function of $r_0$. We continue with $r_0$ corresponding to the $97.5 \%$ quantile and hence $\widehat{m} = 30$, corresponding to using the 316 largest values.

\begin{figure}[ht]
\centering
\includegraphics[width=0.49\textwidth]{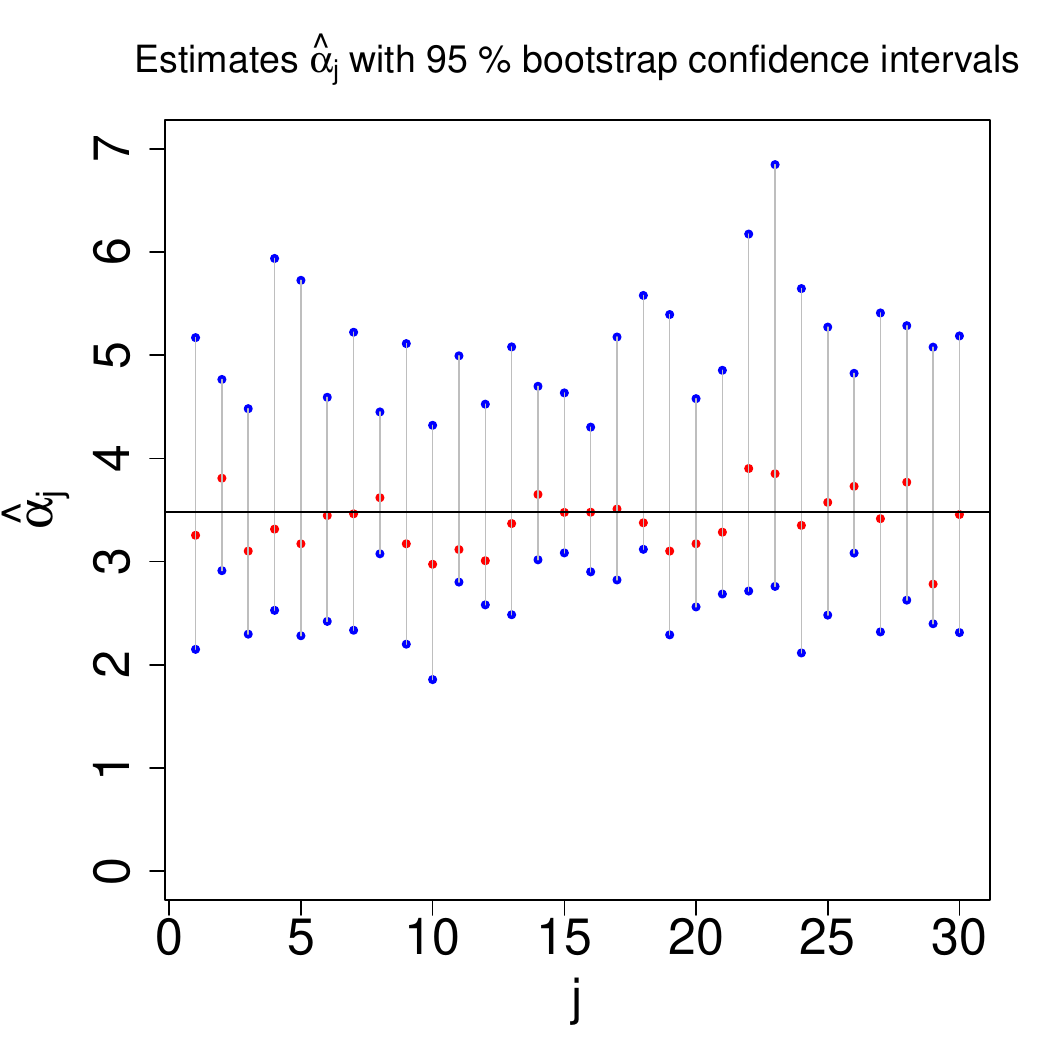}
\includegraphics[width=0.49\textwidth]{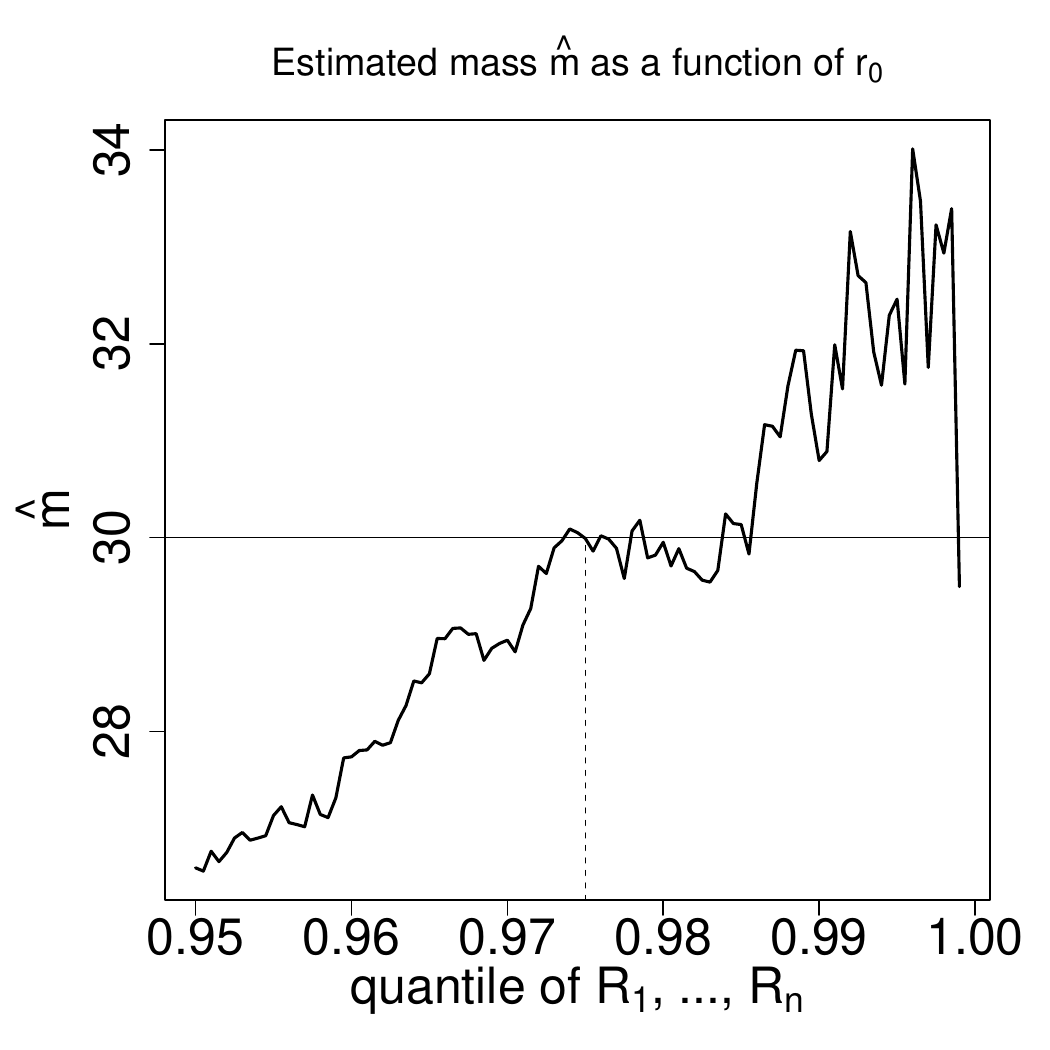}
\caption{Financial industry portfolios. Left: estimates $\widehat{\alpha}_j$ with upper and lower $95 \%$ confidence limits; the horizontal line represents the estimate $\widehat{\alpha}$. Right: estimated mass $\widehat{m}$ of the angular measure as a function of $r_0$, a high quantile of $R_1,\ldots,R_n$; the horizontal line represents the estimate for $r_0 = 0.975$.}
\label{fig:FIP}
\end{figure}

{Finally, we obtain the estimates $\widetilde{A}$ and $\widehat{A}$ based on a (single) decomposition of $\widehat{\Sigma}_{\vX}$, as described in Section \ref{sec:estim}. In order to characterize the uncertainty stemming from the estimation of $\alpha$ and $\Sigma$, we bootstrap the data 100 times and calculate $\widetilde{A}$ and $\widehat{A}$ for each bootstrapped sample (such that $\alpha$ is re-estimated for each sample using the Hill estimator with automated threshold selection, and $m$ is re-estimated for each sample based on a $97.5 \%$ threshold). %in total, we obtain $N = 200$ exact decompositions of $\widehat{\Sigma}_{\vX}$. 
As in Section~\ref{sec:practical}, we present results based on both exact and approximate decompositions. Note that the restriction to approximate decompositions allows for a major time gain (i.e. two minutes for the 100 approximate decompositions versus 24 hours for the exact ones on a standard laptop).
Using $\widetilde{A}_1,\ldots,\widetilde{A}_{100}$ and $\widehat{A}_1, \ldots, \widehat{A}_{100}$, we estimate the following quantities:}
\begin{enumerate}
\item the probability $p_{\summ} = \PP[\vc{X} \in C_{(\textnormal{sum})} (\vc{v}, x)] = \PP[\vc{v}^T \vc{X} > x]$, where $x$ corresponds to the empirical $0.995$ quantile of $(\vc{v} ^T \vX_i)_{i=1,\ldots,n}$; in other words, we validate the empirical $99.5\%$ value-at-risk (VaR) associated with a weighted sum of industry portfolio returns. 
\item the probability $p_{\textnormal{minsum}} = \PP[\vc{X} \in C_{f_1} (x)]$  associated with $C_{f_1} (x) = \{ \vc{y} \in \mathbb{E}_0: f_1(\vc{y}) > x \}$ and
\[
f_1(\vc{y}) = \min \left( \sum_{j=1}^{10} v_j y_j, \sum_{j=11}^{20} v_j y_j, \sum_{j=21}^{30} v_j y_j \right),
\] 
where $x$ is the empirical $0.995$ quantile of $f_1(\vc{X}_1), \ldots, f_1(\vc{X}_n)$. %; i.e., we validate the empirical $99.5\%$ VaR associated with the minimum of three weighted sums of industry portfolio returns.  
\item the probability $p_{\textnormal{maxsum}} = \PP[\vc{X} \in C_{f_2} (x)]$  
associated with $C_{f_2} (x) = \{ \vc{y} \in \mathbb{E}_0: f_2(\vc{y}) > x \}$ and
\[
f_2(\vc{y}) = \max \left( \sum_{j=1}^{10} v_j y_j, \sum_{j=11}^{20} v_j y_j, \sum_{j=21}^{30} v_j y_j \right),
\] 
where $x$ is the empirical $0.995$ quantile of $f_2(\vc{X}_1), \ldots, f_2(\vc{X}_n)$.
%; i.e., we validate the empirical $99.5$ \% value-at-risk associated with the maximum of three weighted sums of industry portfolio returns.  
\end{enumerate}
{For all three failure regions, we compare equal weights, $\vc{v} = (1/d,\ldots,1/d)$, and unequal weights, $\vc{v} = (0.02, 0.05, 0.03, 0.02, 0.05, 0.03, \ldots, 0.02, 0.05, 0.03)$. All probabilities can be calculated analytically from the estimates of $A$; see Section~\ref{sec:failure}.

Figure \ref{fig2} shows boxplots of the estimates based on $\widehat{A}_1, \ldots, \widehat{A}_{100}$ (exact),  $\widehat{A}_1, \ldots, \widehat{A}_{100}$ (approximate), and $\widetilde{A}_1, \ldots, \widetilde{A}_{100}$. The red dots represent empirical estimates.
%Some deviation is visible with respect to the empirical VaR, which could also be due to the empirical VaR suffering from estimation uncertainty. 
\change{Variability is lowest for the sum-region, and highest for the minsum-region. The approximate decompositions yield estimates that are nearly identical to those obtained from exact decompositions.} 

%In general, $N$ point estimates of failure probabilities (based on $N$ exact decompositions of $\widehat{\Sigma}_{\vc{X}}$) can be used to validate a VaR as follows. Let $\widehat{\textnormal{VaR}}$ denotes a $(1-\alpha)100 \%$ VaR estimate from an external model (for $\alpha$ small) and let $\widehat{p}_1, \ldots, \widehat{p}_{N}$ be the estimated failure probabilities of $\PP[\vc{v}^T \vc{X} > \widehat{\textnormal{VaR}}]$ (or a region defined via $f_1$, $f_2$ or similar as above). Then the initial VaR estimate is validated if $\alpha$ falls into the empirical confidence interval computed from $\widehat{p}_1, \ldots, \widehat{p}_{N}$.

\begin{figure}[t]
\centering
\includegraphics[width=0.33\textwidth]{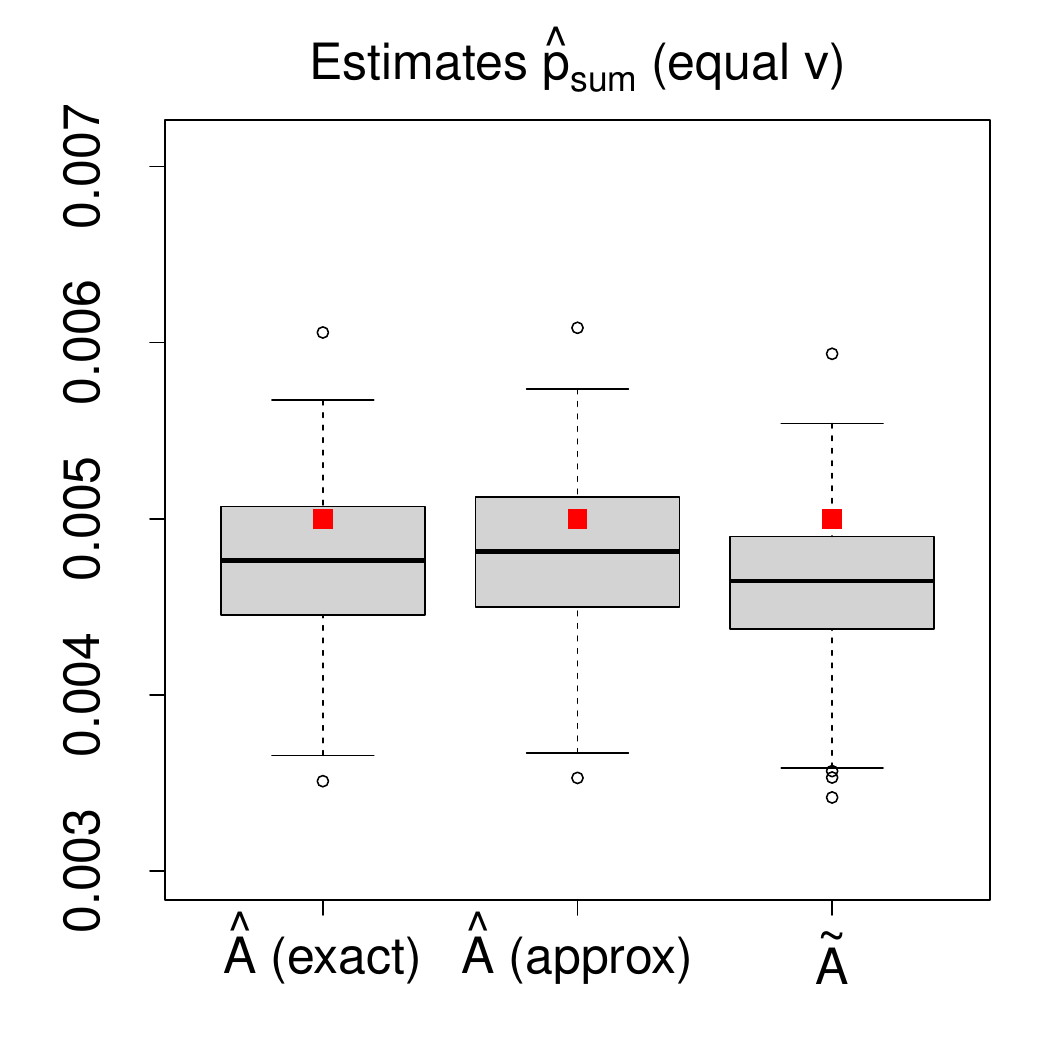} \hspace{-0.2cm}
\includegraphics[width=0.33\textwidth]{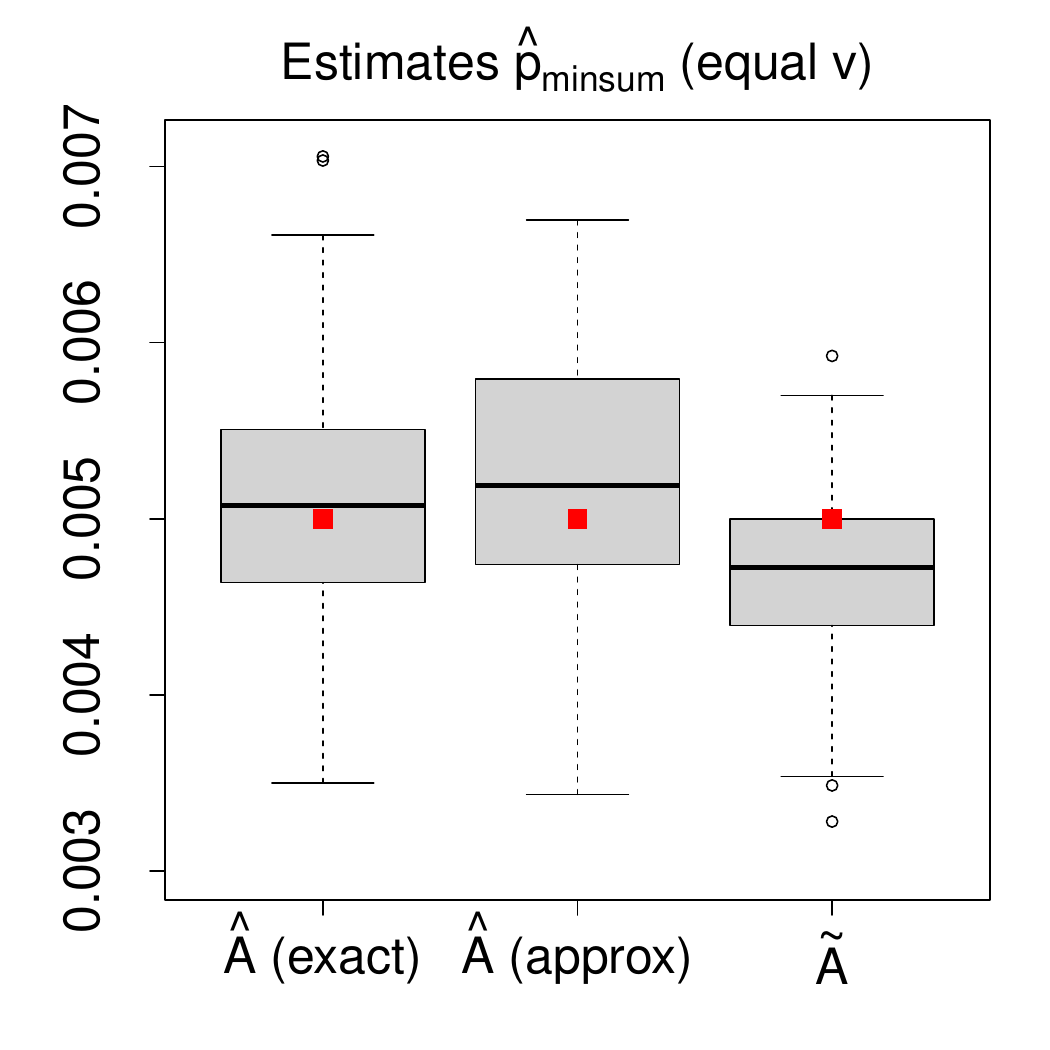}  \hspace{-0.2cm}
\includegraphics[width=0.33\textwidth]{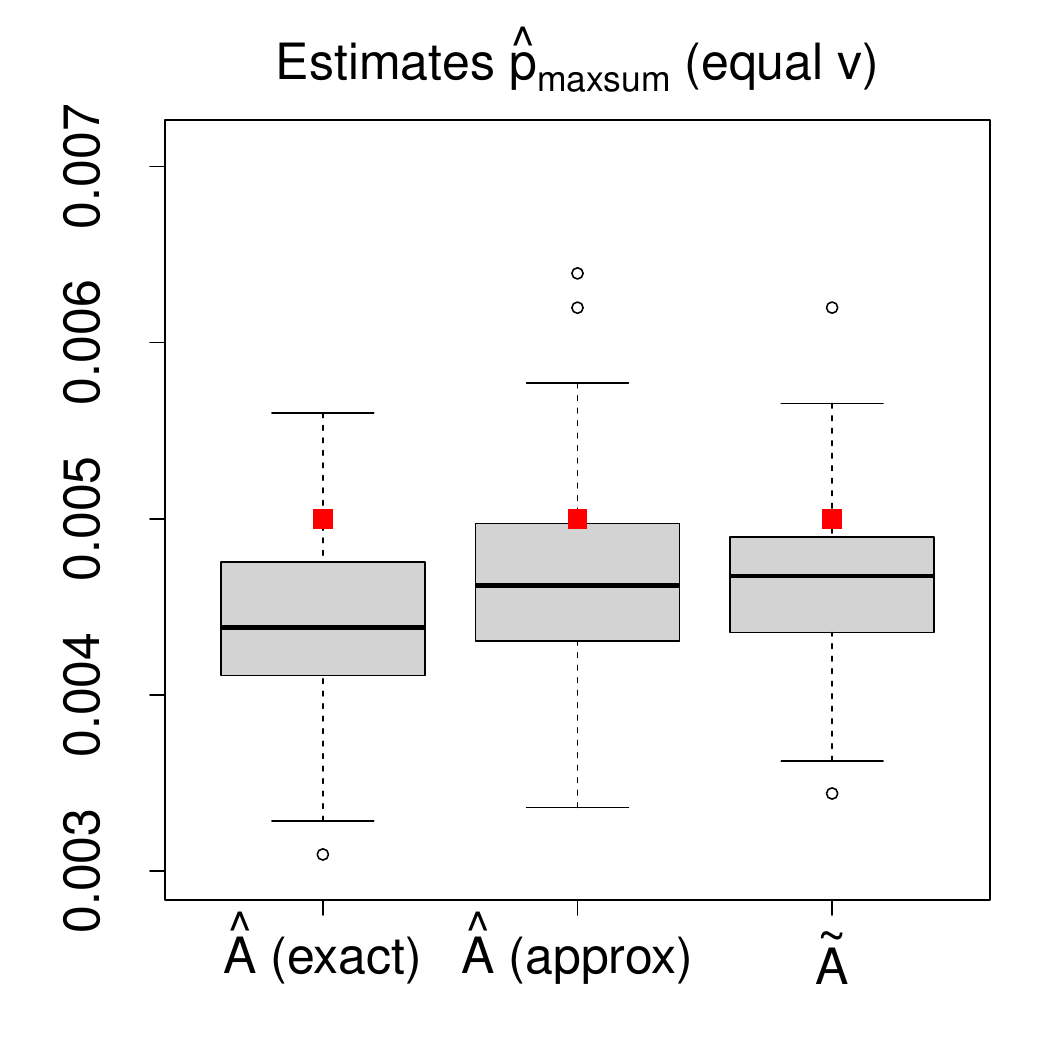}  \\
\includegraphics[width=0.33\textwidth]{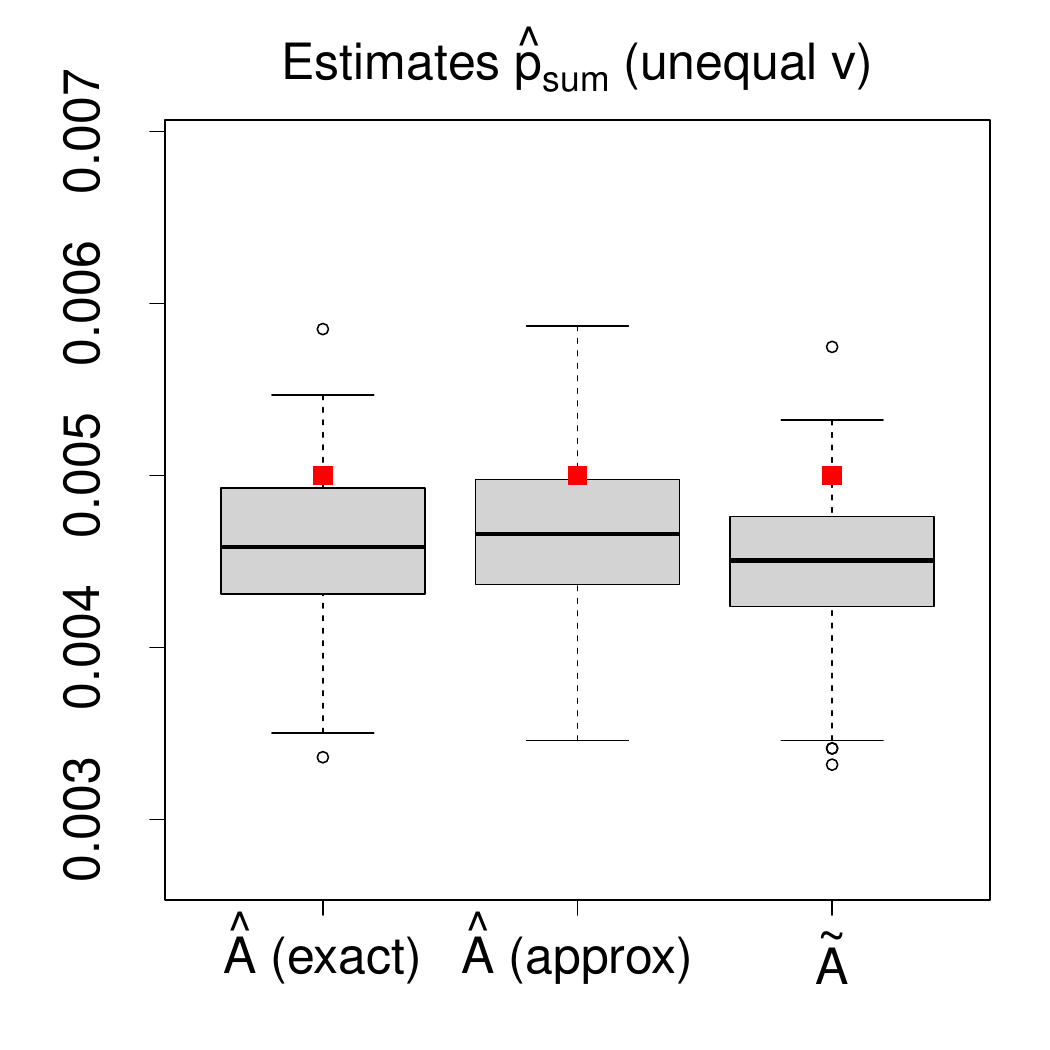} \hspace{-0.2cm}
\includegraphics[width=0.33\textwidth]{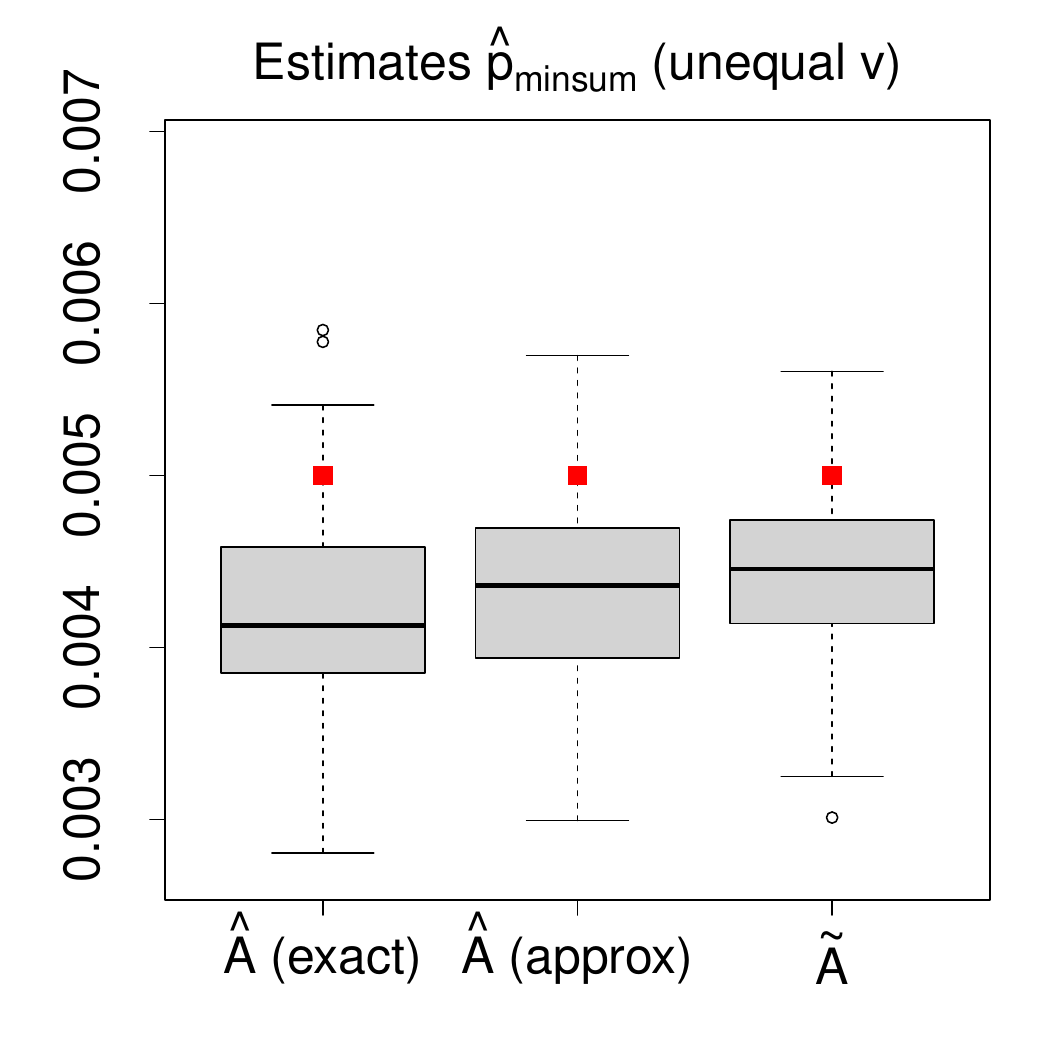}  \hspace{-0.2cm}
\includegraphics[width=0.33\textwidth]{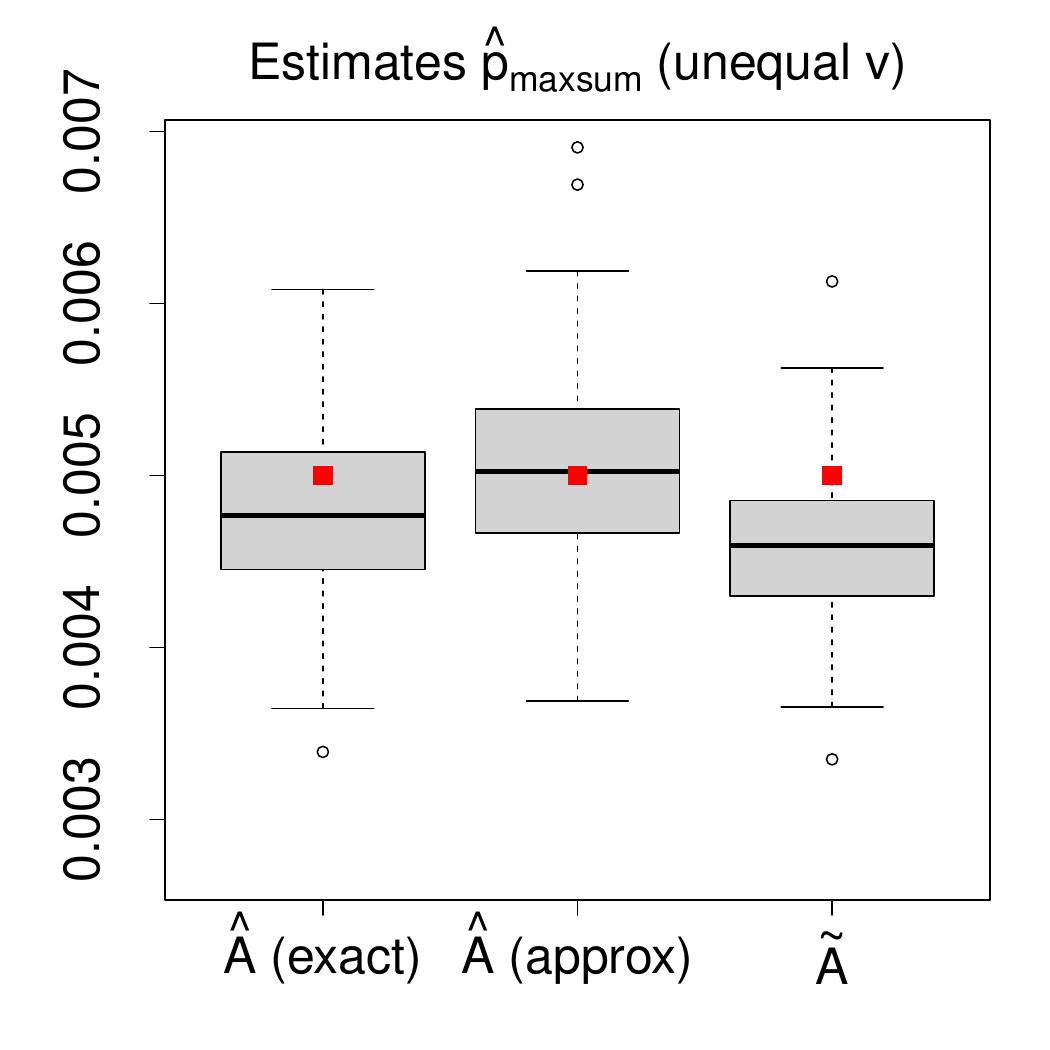} 
\caption{Boxplots of estimates of failure probabilities for the regions $C_{(\textnormal{sum})}$, $C_{f_1}$, and $C_{f_2}$ for the financial industry portfolios, based on equal (top panels) and unequal (bottom panels) weights. Red dots correspond to empirical estimates.}
\label{fig2}
\end{figure}

\subsection{Wind gusts in the Netherlands}\label{sec:wind}
Extreme windstorms are one of the important natural hazards affecting Europe. Impacts are not primarily caused by wind speed itself but by the gusts, i.e., the maxima of the wind speed during a few seconds. 
%As an example, the International Electrotechnical Commission requires that wind turbines are able to withstand a ``once in 50 year'' gust. 
%In February 2022, storm Eunice caused massive damages in Western, Central, and Northern Europe. In the Netherlands, the maximum coastal wind gust measured during storm Eunice was 137 km/h. %at station Cabauw.
%which set the record for harshest inland wind to have ever been measured in the Netherlands. 
%In Vlissingen, 137 km/h was recorded.  Eunice is the third worst storm sinds 1970. 
We consider daily maximal speeds of wind gusts, measured in km/h, observed at $d = 35$ weather stations in the Netherlands during extended winter (October--March), from October 2001 up to and including March 2022.  The data set is freely available from the Royal Netherlands Meteorological Institute (KNMI), see \url{https://climexp.knmi.nl/}.  Please note that up to 50 stations are available in the Netherlands, but 15 were removed because of a large amount of missing data. The remaining 35 stations have between $0$ and $15$ missing data points out of $n = 3827$; these do not occur around clusters of extreme wind gusts and hence they do not impact the following analysis.  
%We focus on the extended winter season because this is when extreme windstorms occur in general. 
%During the extended winter season, the KNMI issues a code red alarm (for any Dutch province) when there is at least 90 \% certitude that wind gusts will exceed 120 km/h.
%(between April and September, the threshold is at 100 km/h for inland stations, while remaining at 120 km/h for coastal stations). 
%the Dutch coastal region is defined by the area up to 50 km from the North Sea, or, to make sure this is well into the mainland for land bordering the Wadden Sea, at least 30 km from the Wadden Sea. The inland region (about 55 \%) is defined to be the remainder of the country.  
The $35$ weather stations are divided into coastal stations (17) and inland stations (18) based on the same criteria used in \citet{lenderink2009}, because most environmental variables are known to behave differently over land or close to the sea; see Figure \ref{fig:wind2} (left) for a map of the stations used in this study.

\begin{figure}[ht]
\centering
\includegraphics[width=0.33\textwidth]{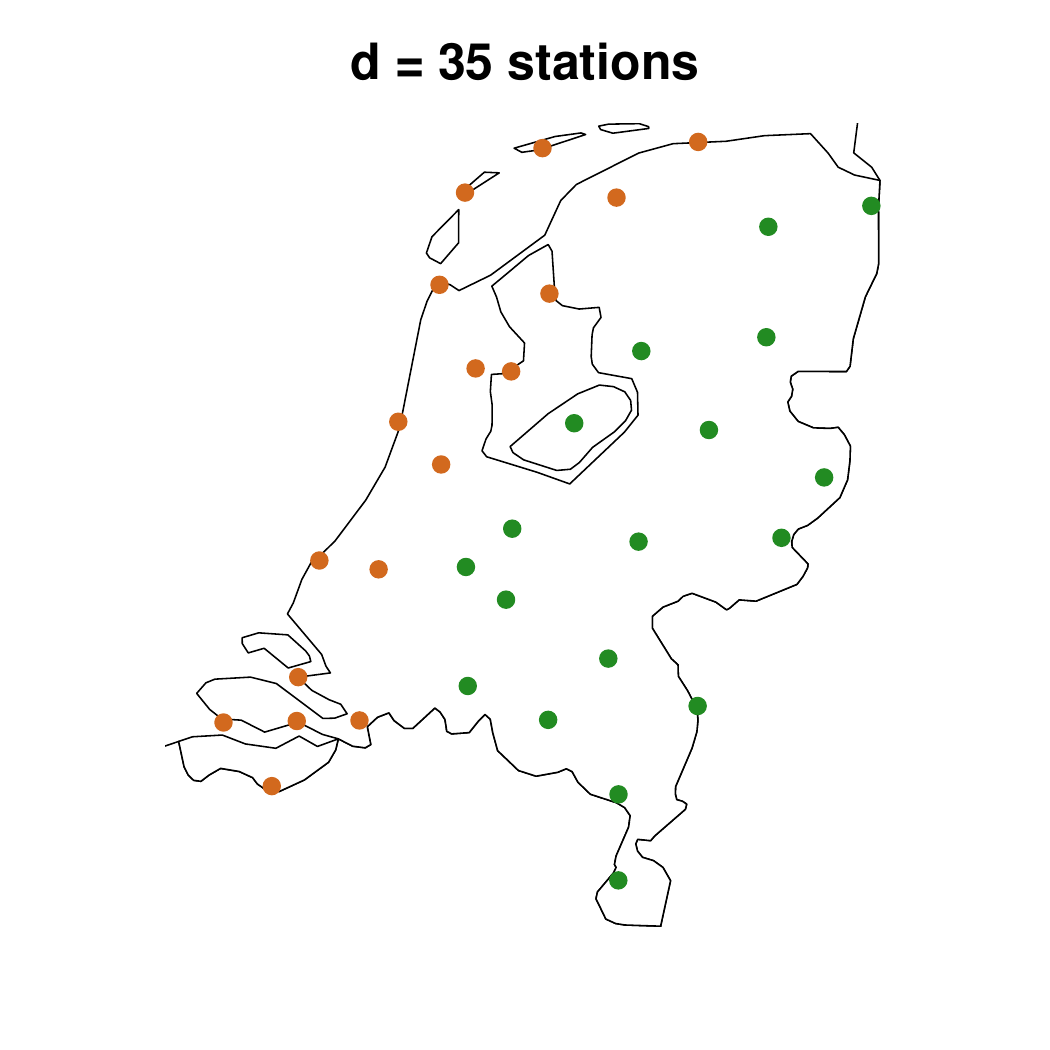} 
\hspace{-0.5cm}
\includegraphics[width=0.33\textwidth]{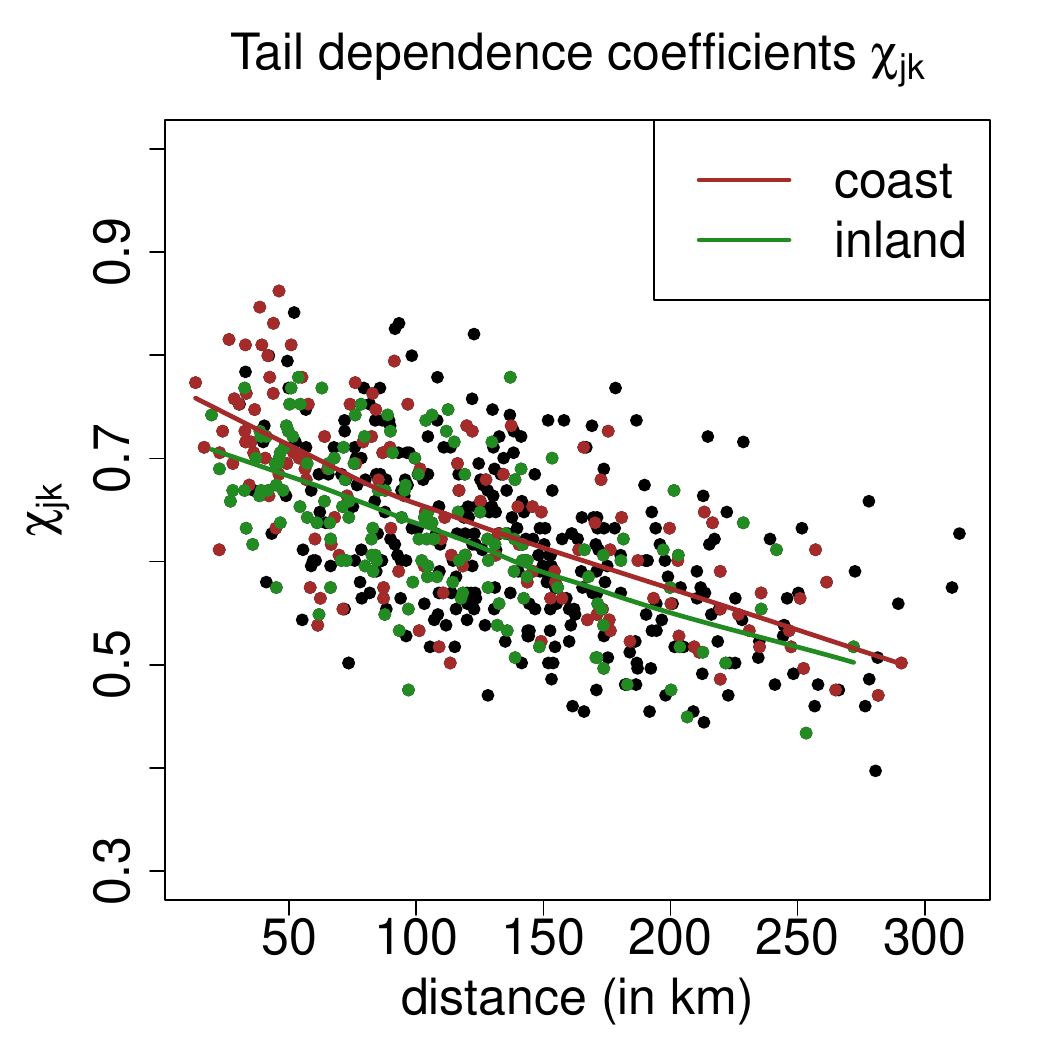} 
\includegraphics[width=0.33\textwidth]{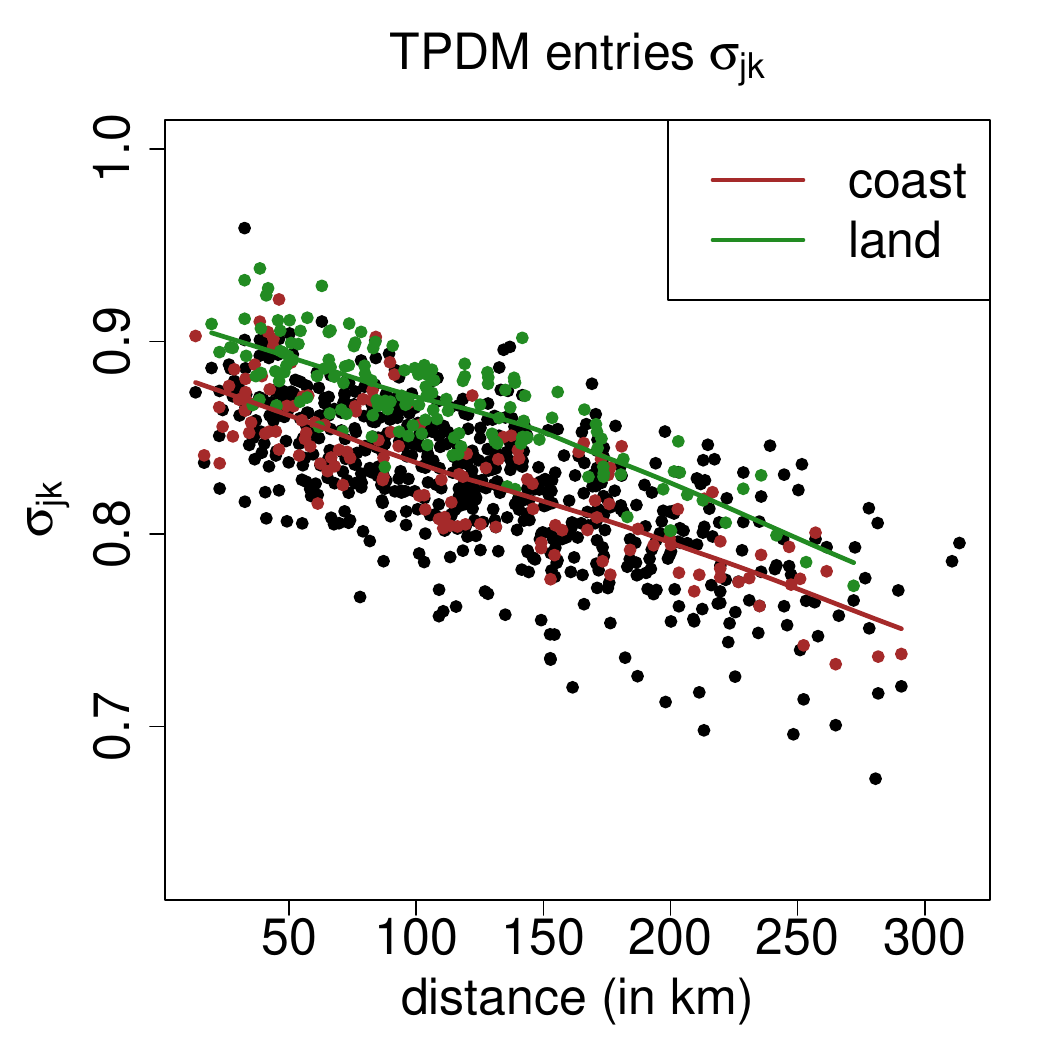}  
\caption{Left: map of the Netherlands with 17 coastal stations in brown and 18 inland stations in green. Middle: empirical coefficients of tail dependence $\chi_{jk}$ for all pairs (black), pairs of coastal stations (brown), and pairs of inland stations (green), with loess curves. Right: same but for the TPDM entries $\sigma_{jk}$.}
\label{fig:wind2}
\end{figure}

{Inspection of the data suggest that the assumption of positive tail indices is not justified. \change{Hence, we standardize the data to Fréchet$(2)$ margins as described in Section~\ref{sec:estim}, and we continue with $\vX^*_1,\ldots,\vX^*_n$.} %Recall that the choice of $\alpha$ in the standardization procedure does not matter.}
Next, we estimate pairwise tail dependence coefficients non-parametrically to examine the difference in pairwise dependence structure between the coastal and the inland stations. Figure \ref{fig:wind2} shows both the pairwise tail dependence coefficients $\chi_{jk}$ (middle) and the tail pairwise dependence matrix entries $\sigma_{jk}$ (right), for $j,k =1,\ldots,d$, {both calculated based on a 95 \% threshold}. The plot suggests that there is no significant difference between pairs of inland or pairs of coastal stations, and that the wind gusts are asymptotically dependent, even for large distances.

Finally, we obtain the estimates $\widetilde{A}$ and $\widehat{A}$ based on a (single) decomposition of \change{the TPDM estimated based on $\vX^*_1,\ldots,\vX^*_n$}, as described in Section \ref{sec:estim}. In order to characterize the uncertainty stemming from the \change{marginal} standardization and the estimation of $\Sigma$, we bootstrap the data \change{$\vX_1,\ldots,\vX_n$} (on the original scale) 100 times, \change{standardize each bootstrapped sample to $\alpha = 2$, and calculate $\widetilde{A}$ and $\widehat{A}$}. %in total, we obtain $N = 200$ exact decompositions of $\widehat{\Sigma}_{\vX}$.

We present results for exact decompositions only because the dimensions are sufficiently low to ensure quick calculations.
Then, using $\widetilde{A}_1,\ldots,\widetilde{A}_{100}$ and $\widehat{A}_1, \ldots, \widehat{A}_{100}$, we estimate
\change{
\begin{enumerate}
\item the probabilities $p_{(\max)} = \PP[\vc{X} \in C_{\max} (x \vc{1}_d)]$ for $x \in \{100, 120, 130\}$. The first two values are the thresholds for a code orange and a code red alarm respectively, issued by the KNMI during winter season.
%During the extended winter season, the KNMI issues a code red alarm (for any Dutch province) when there is at least 90 \% certitude that wind gusts will exceed 120 km/h.
\item the probabilities $p_{(\textnormal{sum})} =
\PP [d^{-1} \sum_j X_j > x]$ for $x \in \{80, 90, 100\}$, quantifying the risk that averaged maximum wind gusts (inland or coastal) reach extreme levels.
\end{enumerate}
Note that for $p_{(\max)}$,  we can find $\bm{x}^*$ such that 
$\left\{\bm{X}\in C_{\max} (x \vc{1}_d)\right\} = \left\{\bm{X}^*\in C_{\max} (\bm{x}^* )\right\}$
% $p_{(\max)} = \PP[\vc{X}^* \in C_{\max} (\bm{x}^*)]$ 
by component-wise marginal transformation. On the other hand, to calculate $p_{(\textnormal{sum})}$, we simulate $N^* = 100 \, 000$ observations from max-linear models with coefficient matrices $\widetilde{A}_1,\ldots,\widetilde{A}_{100}$ and $\widehat{A}_1, \ldots, \widehat{A}_{100}$, transform the margins back to the original scale, and estimate $p_{(\textnormal{sum})}$ empirically; see also Section~\ref{sec:general}}.

%, while the last value corresponds to the record wind gusts measured during storm Eunice. 
% During the extended winter season, the KNMI issues a code red alarm (for any Dutch province) when there is at least 90 \% certitude that wind gusts will exceed 120 km/h .

Figure \ref{fig:wind3} shows boxplots of the estimates based on {$\widehat{A}_1, \ldots, \widehat{A}_{100}$ and $\widetilde{A}_1, \ldots, \widetilde{A}_{100}$} for the three thresholds (from left to right). \change{The top two rows present $p_{(\max)}$ (coastal and inland), the bottom two rows $p_{(\textnormal{sum})}$ (coastal and inland).} Red dots represent empirical estimates. Even though no apparent difference between coastal and inland stations was visible based on pairwise coefficients, when estimating failure probabilities based on the $17$- and $18$-dimensional datasets, we observe that values for coastal stations are up to \change{six times as high as for inland stations, for both failure regions}. In general, the estimates based on $\tilde{A}$ show a slightly larger discrepancy with the empirical estimates than the estimates based on the completely positive decomposition.} %Finally, we see larger discrepancy with the empirical estimates for the 144 km/h threshold, which is due to the estimation uncertainty associated to such a high threshold. In conclusion, the max-linear models based on $\widehat{A}_1, \ldots, \widehat{A}_{200}$ result in 200 point estimates of the failure probability of interest. An overall point estimate can be obtained by taking the empirical median of the $N$ estimates. The variation across the 200 estimates provides guidance for the model risk associated with using the max-linear model in the estimation of the failure probability.}

\begin{figure}[p]
\centering
\includegraphics[width=0.33\textwidth]{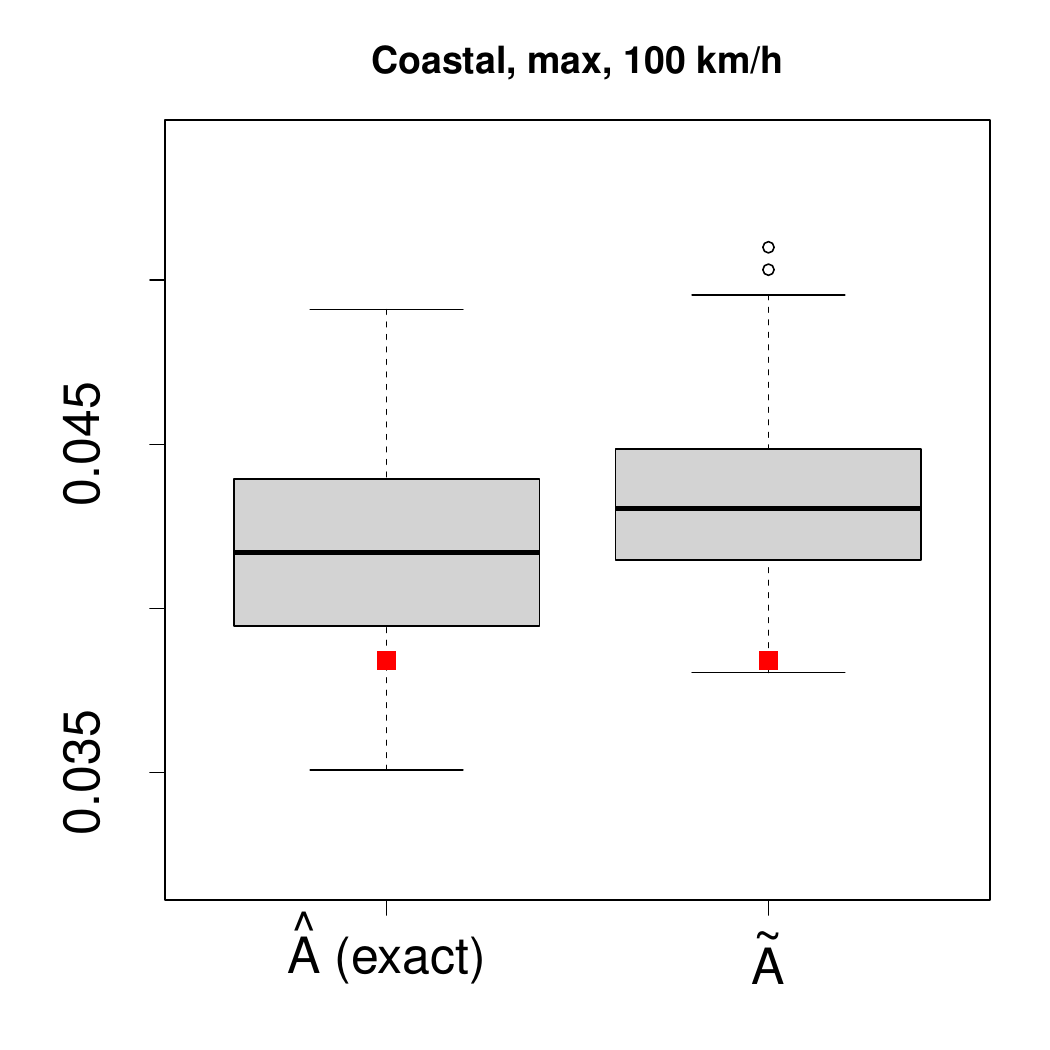} \hspace{-0.25cm}
\includegraphics[width=0.33\textwidth]{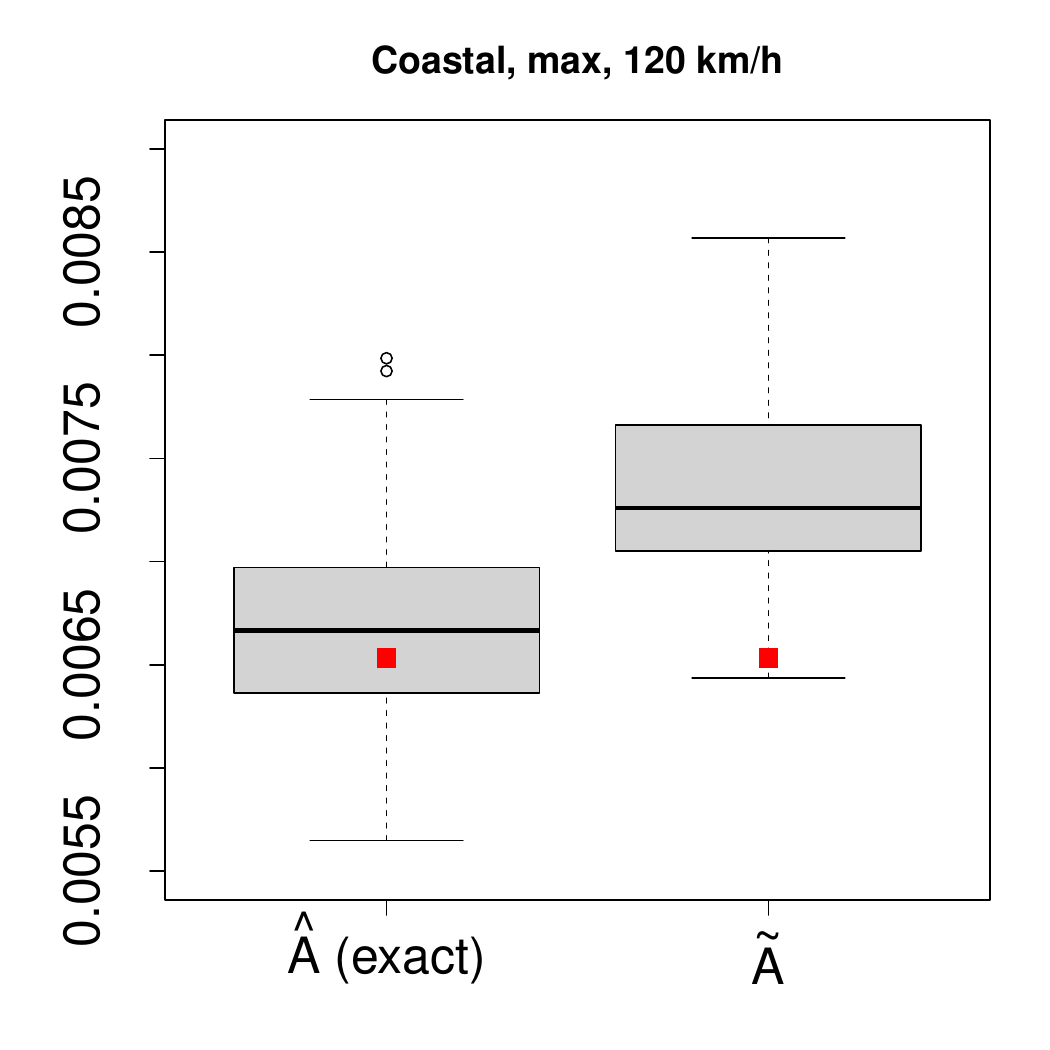}  \hspace{-0.25cm}
\includegraphics[width=0.33\textwidth]{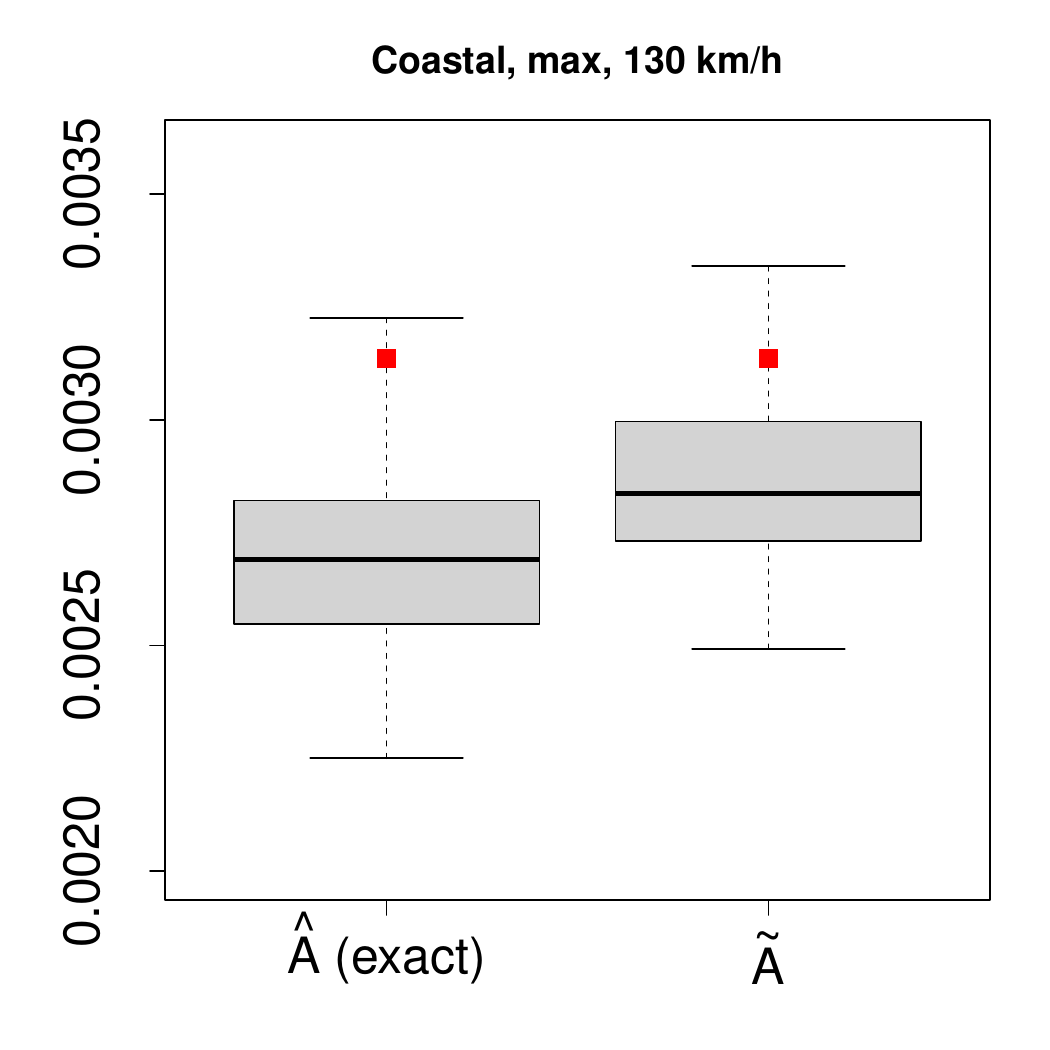}  \\
\vspace{-0.25cm}
\includegraphics[width=0.33\textwidth]{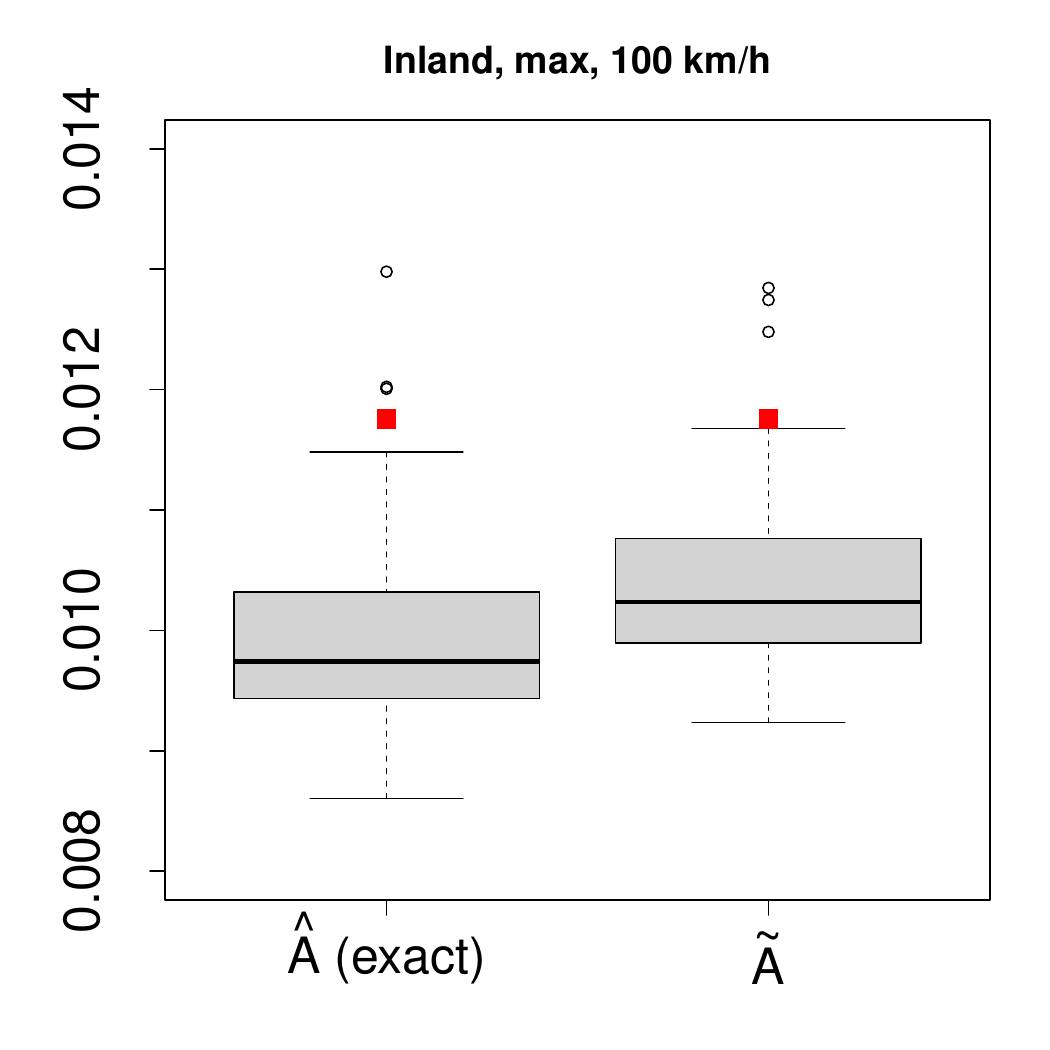} \hspace{-0.25cm}
\includegraphics[width=0.33\textwidth]{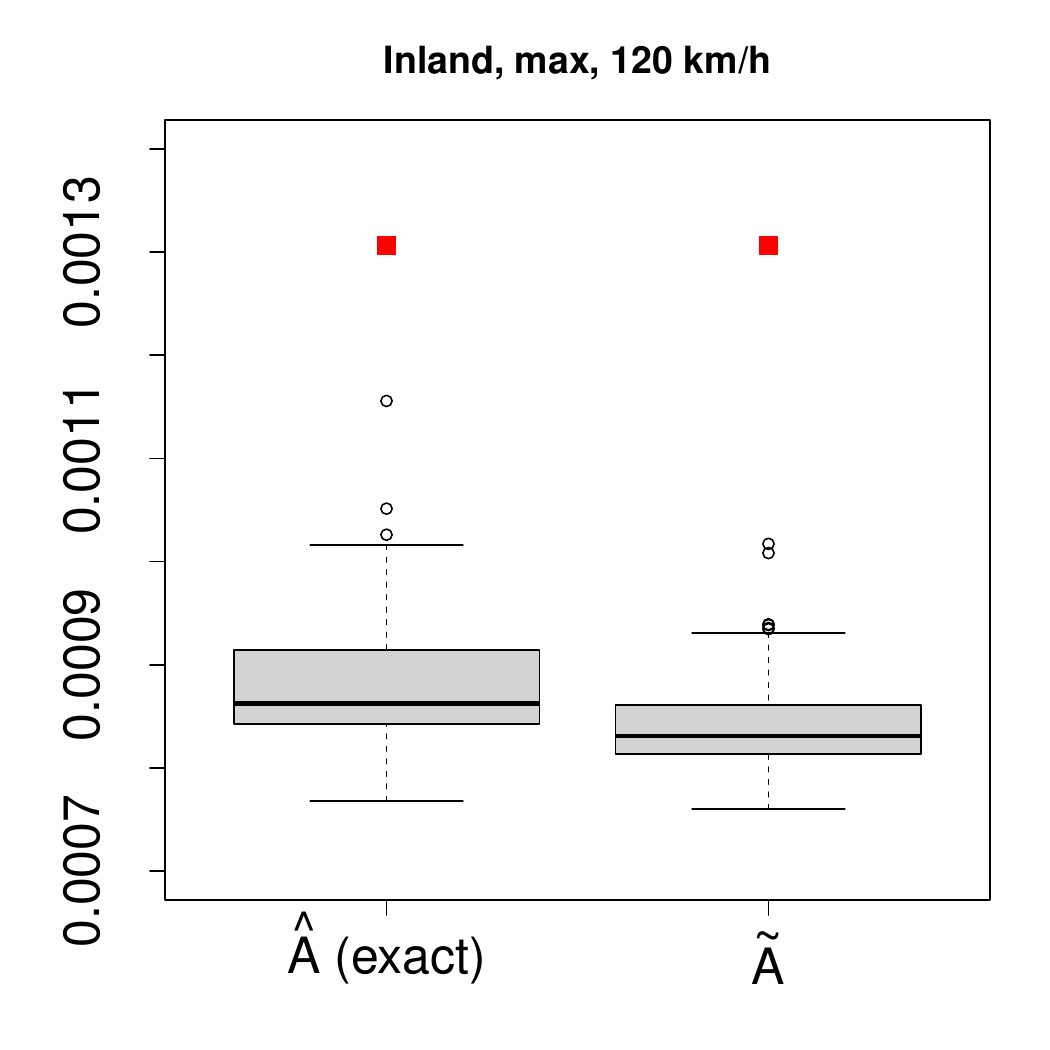}  \hspace{-0.25cm}
\includegraphics[width=0.33\textwidth]{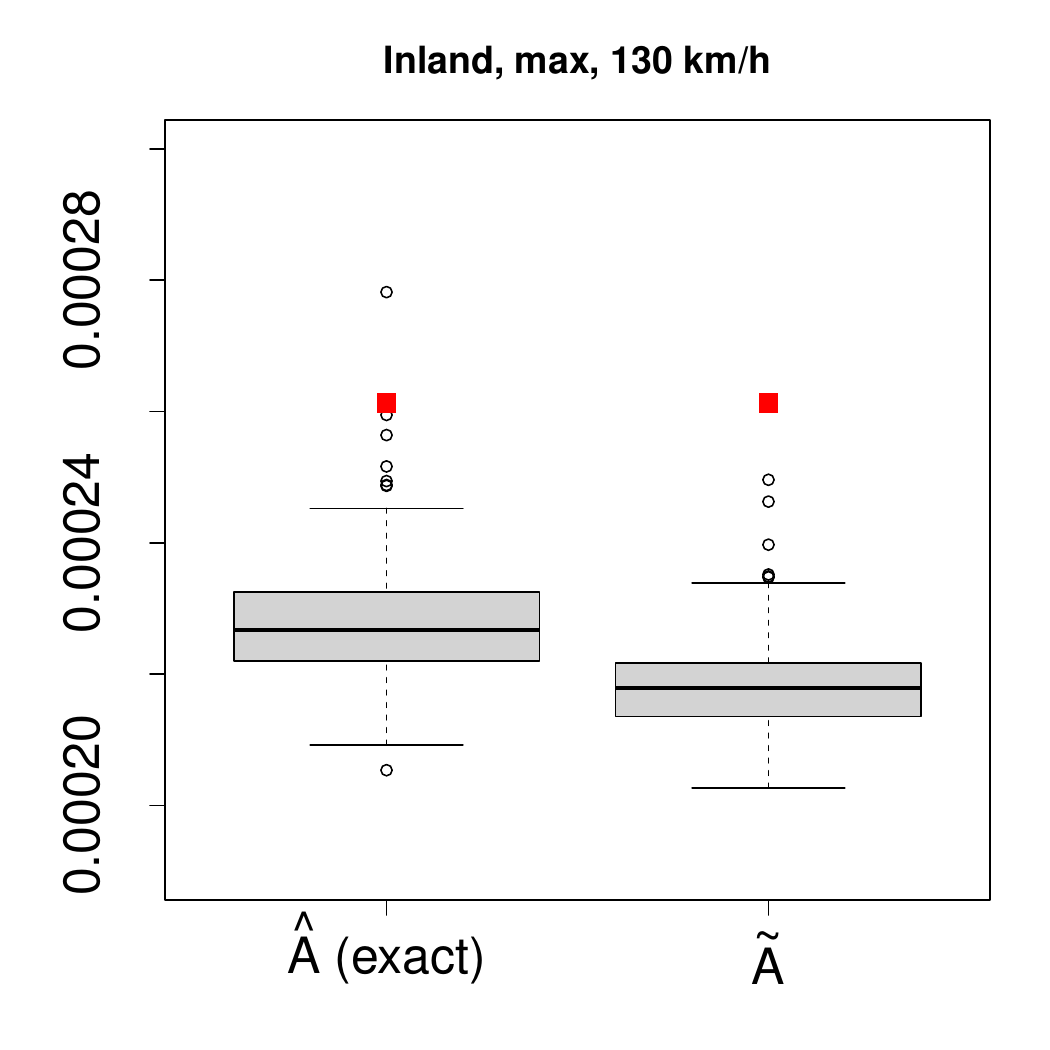}  \\
\vspace{-0.25cm}
\includegraphics[width=0.33\textwidth]{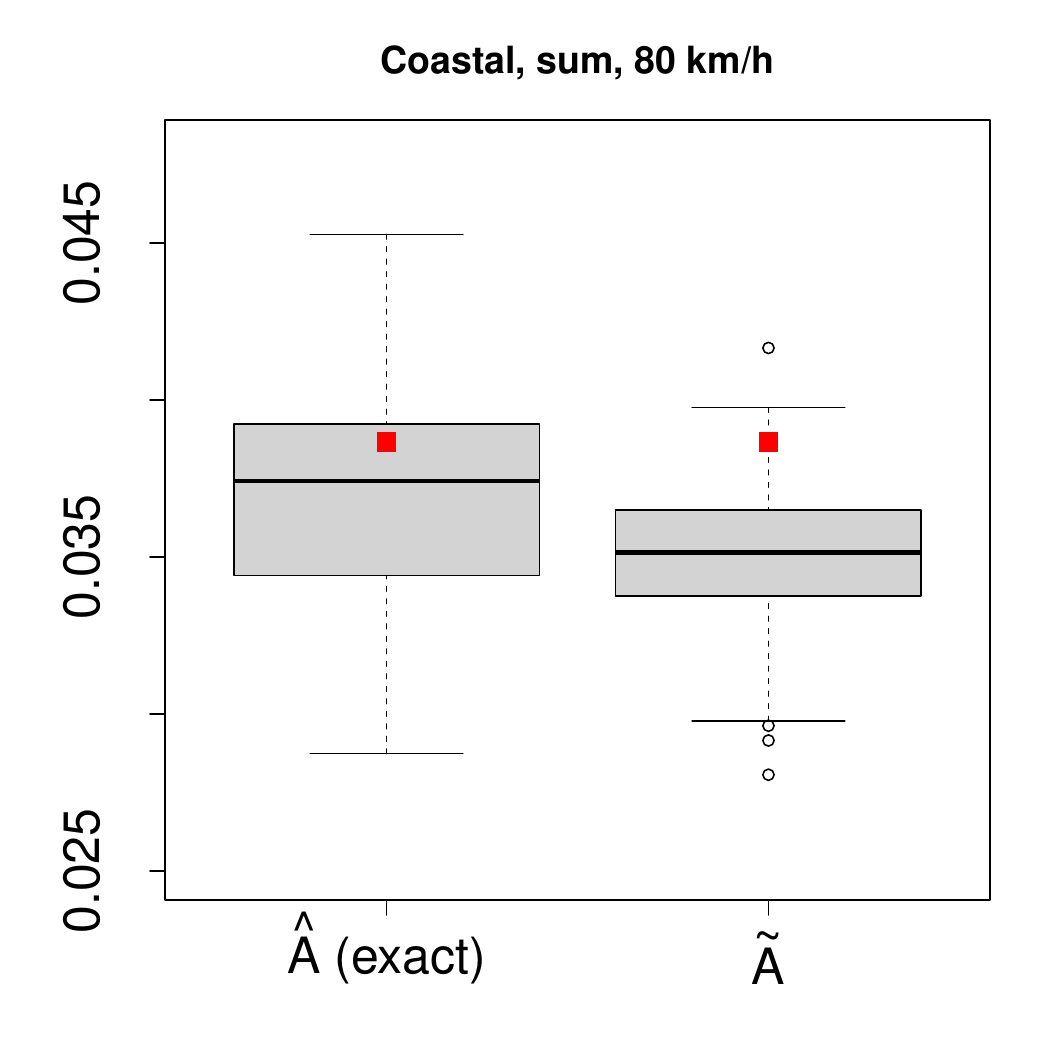} \hspace{-0.25cm}
\includegraphics[width=0.33\textwidth]{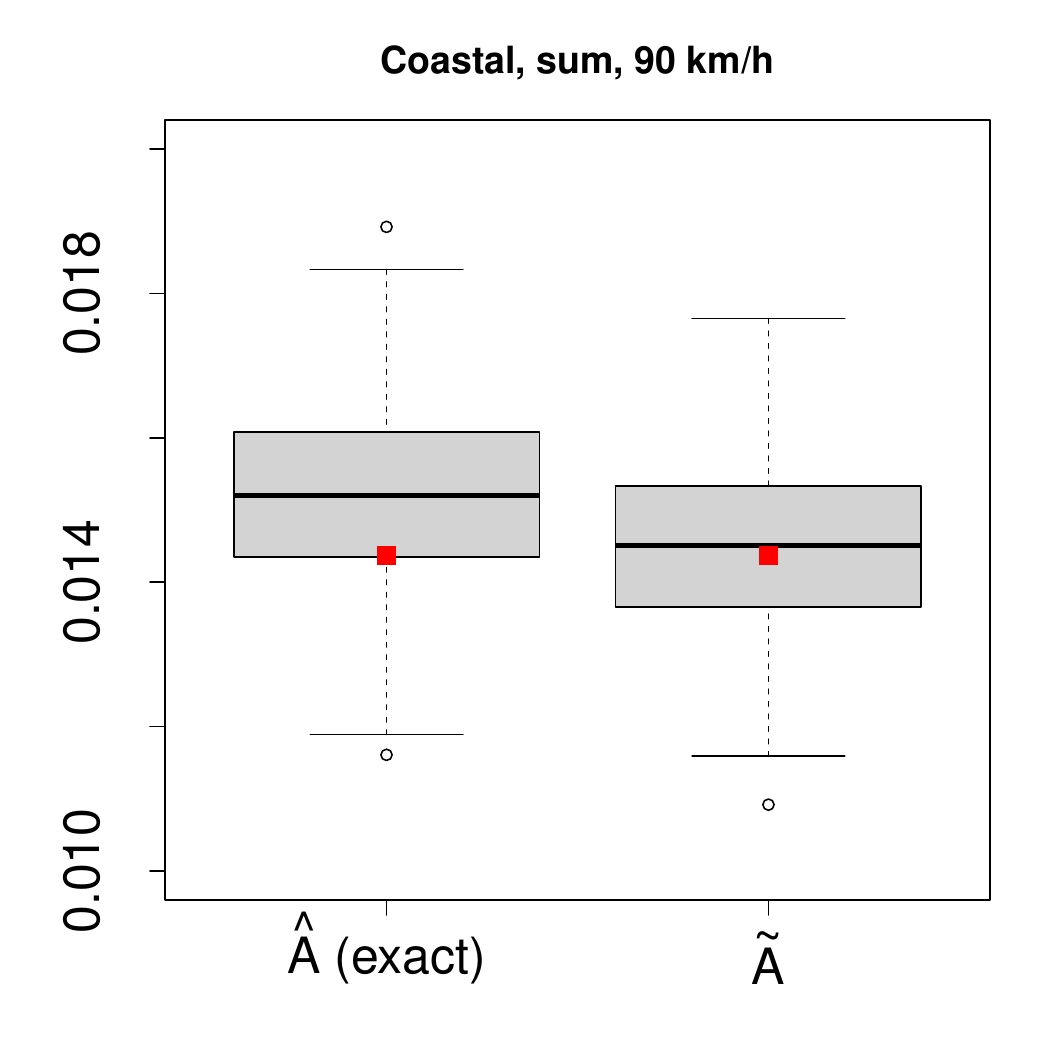}  \hspace{-0.25cm}
\includegraphics[width=0.33\textwidth]{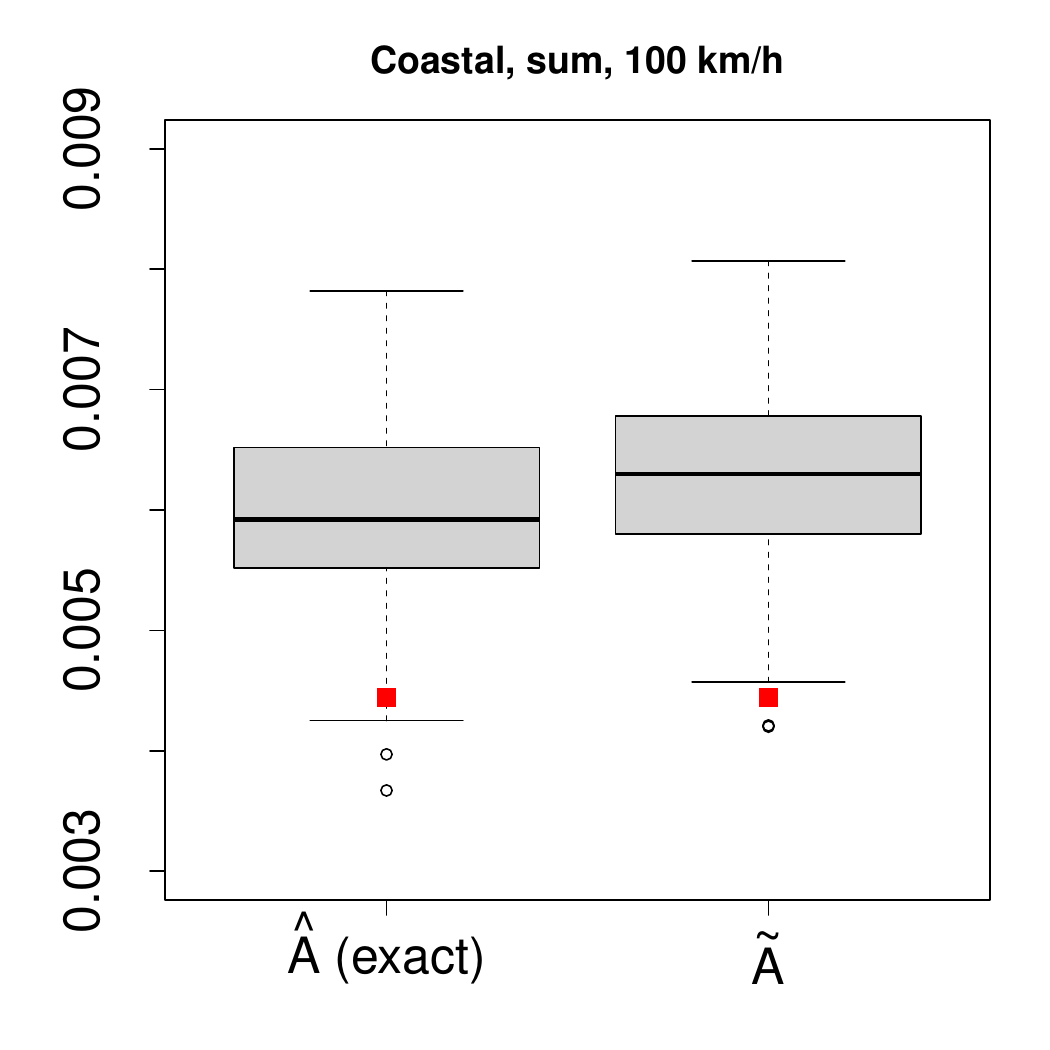}  \\
\vspace{-0.25cm}
\includegraphics[width=0.33\textwidth]{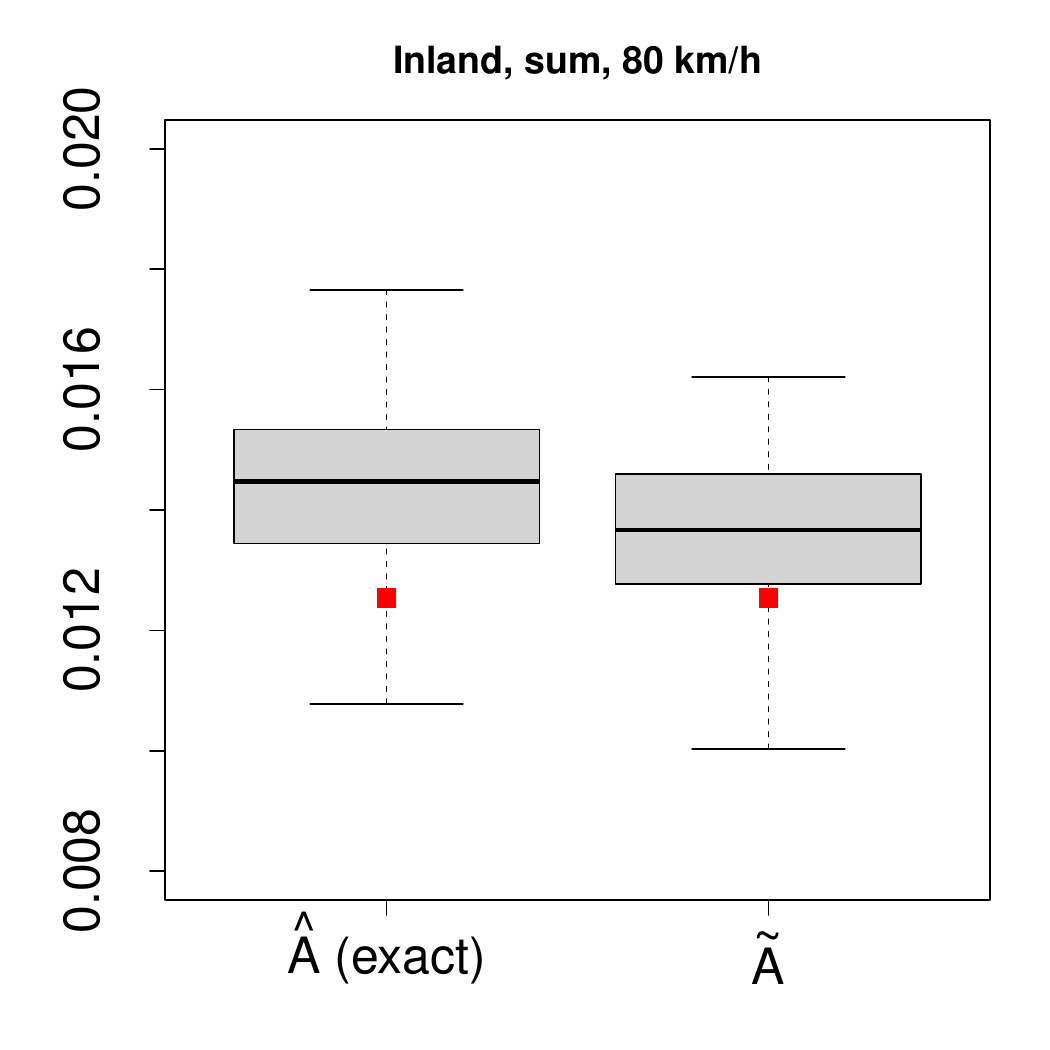} \hspace{-0.25cm}
\includegraphics[width=0.33\textwidth]{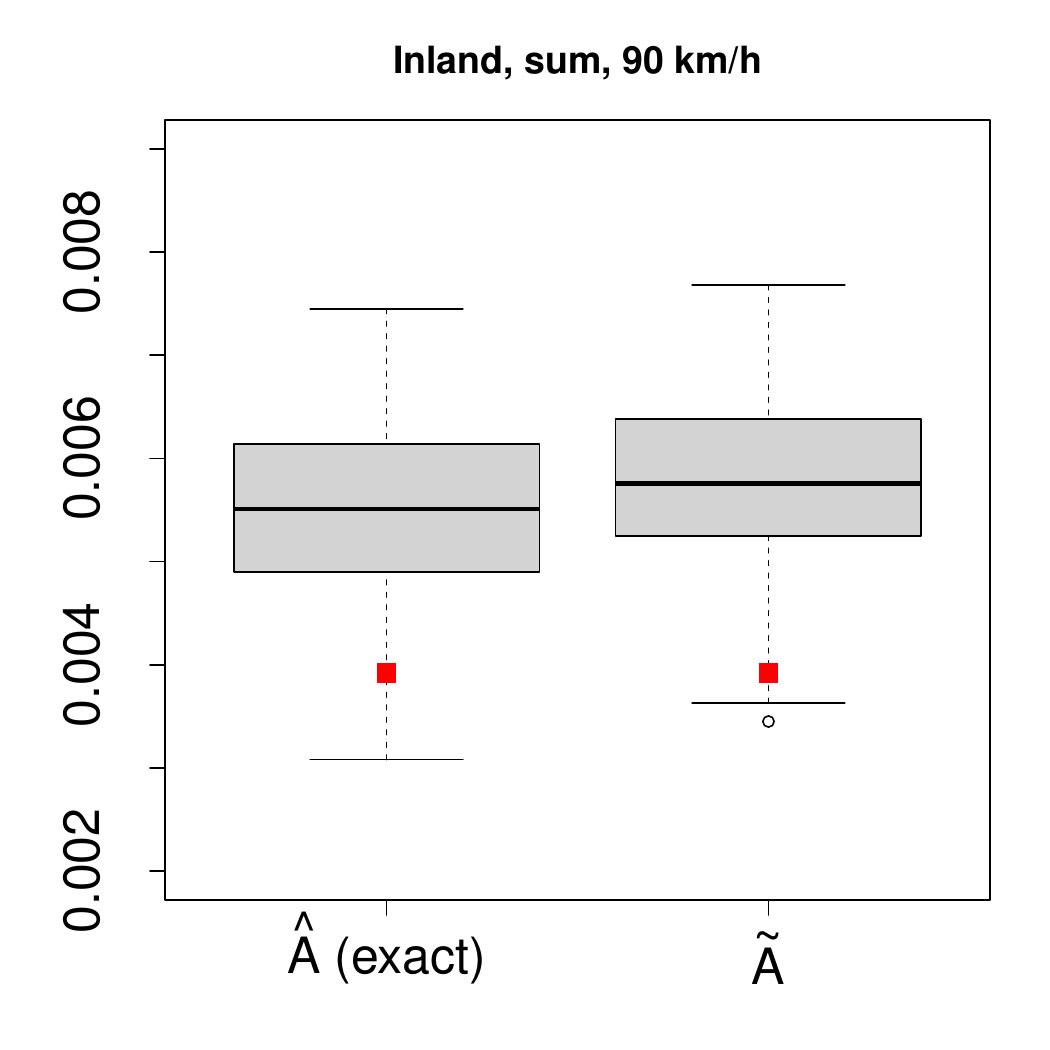}  \hspace{-0.25cm}
\includegraphics[width=0.33\textwidth]{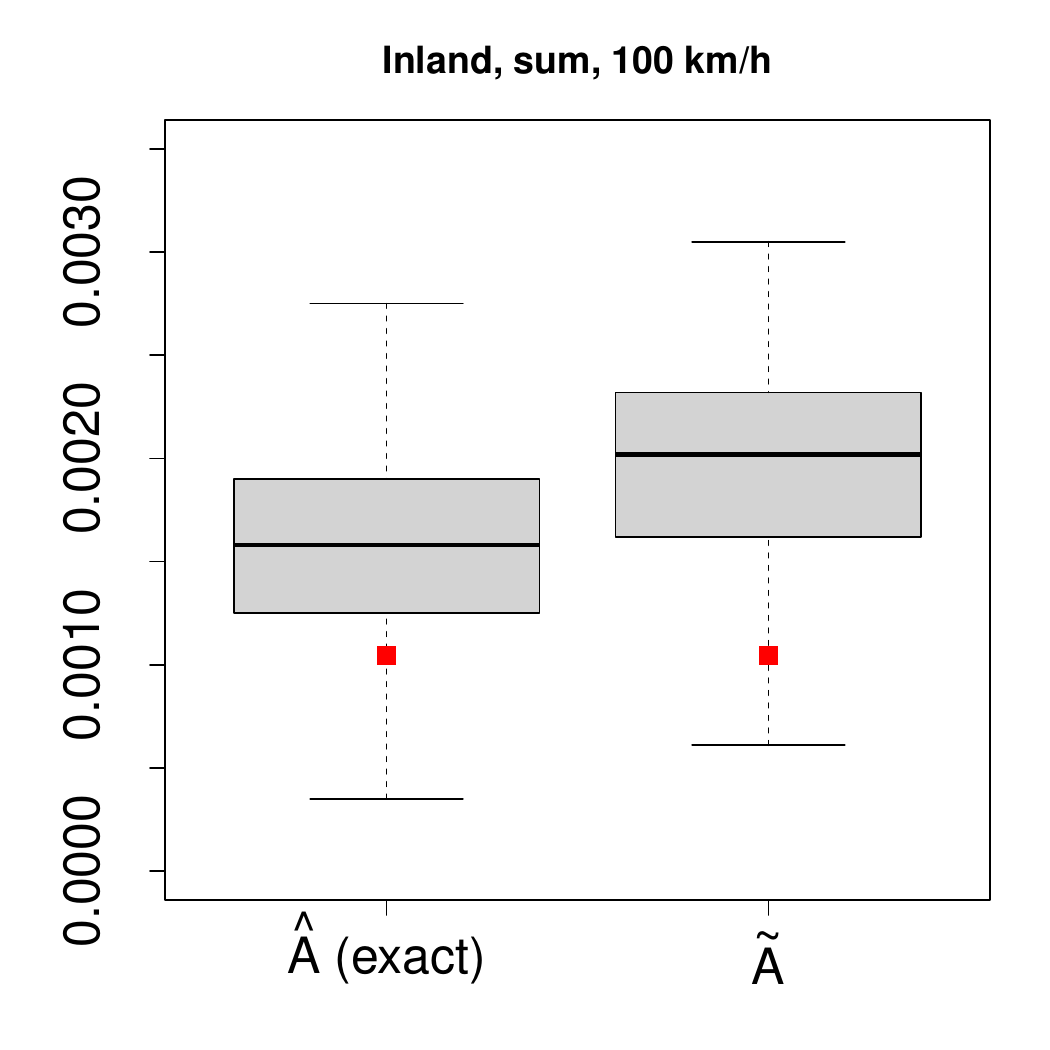}  
\caption{Boxplots of estimates of failure probabilities for the regions $C_{\textnormal{max}}$ (first and second rows) and $C_{\textnormal{sum}}$ (third and fourth rows) for the 17 coastal stations (first and third rows) and the 18 inland stations (second and fourth rows). Red dots correspond to empirical estimates.}
\label{fig:wind3}
\end{figure}

\section{Discussion}\label{sec:disc}
\change{We proposed a general method for the estimation of failure probabilities based on pairwise measures of tail dependence summarized in the TPDM. Our approach is fast and convenient for moderately large $d$, e.g., $d\leq 30$. For higher dimensions, exact decompositions become increasingly difficult to obtain. As a consequence, the diagonal elements of the TPDM (i.e., the asymptotic scales of the variables) may be overestimated, which restricts the applicability of the approach to certain classes of failure regions.

Summarizing the tail dependence structure through bivariate measures inevitably entails a loss of information. Nevertheless, our numerical experiments demonstrate that this loss has a limited impact on the estimated failure probabilities. In particular, the application in Section~\ref{sec:wind} shows that the proposed method remains effective even when the data are not multivariate regularly varying and when the failure region becomes intractable after marginal standardization.

An additional source of uncertainty arises from the estimation of the tail index $\alpha$ and the tail pairwise dependence matrix $\Sigma$. Although a multitude of estimators exist for $\alpha$, estimation of $\Sigma$ has yet to be investigated further and a better estimator could potentially improve the accuracy of the proposed failure probability estimates.}

\section{Disclosure statement}\label{disclosure-statement}

The authors have no conflicts of interest to declare.

\section{Data Availability Statement}\label{data-availability-statement}

All data and code are available from \url{https://github.com/akiriliouk/FailureProbTPDM}.

\phantomsection\label{supplementary-material}
\bigskip

\begin{center}

{\large\bf SUPPLEMENTARY MATERIAL}

\end{center}

The supplement contains all proofs and an additional data application for heavy rainfall in Switzerland.

\bgroup %% <<<<<<< begin a group
\singlespacing

  \bibliography{reference.bib}

\begin{thebibliography}{}

\bibitem[Barioli and Berman, 2003]{barioli2003}
Barioli, F. and Berman, A. (2003).
\newblock The maximal cp-rank of rank k completely positive matrices.
\newblock {\em Linear Algebra and its Applications}, 363:17--33.

\bibitem[Beirlant et~al., 2004]{beirlant2004}
Beirlant, J., Goegebeur, Y., Segers, J., and Teugels, J. (2004).
\newblock {\em Statistics of {E}xtremes: {T}heory and {A}pplications}.
\newblock Wiley.

\bibitem[Bo{\c{t}} and Nguyen, 2021]{bot2021}
Bo{\c{t}}, R.~I. and Nguyen, D.-K. (2021).
\newblock Factorization of completely positive matrices using iterative
  projected gradient steps.
\newblock {\em Numerical Linear Algebra with Applications}, 28(6):e2391.

\bibitem[Cai et~al., 2011]{cai2011}
Cai, J.-J., Einmahl, J.~H., and De~Haan, L. (2011).
\newblock Estimation of extreme risk regions under multivariate regular
  variation.
\newblock {\em Annals of Statistics}, 39(3):1803--1826.

\bibitem[Cai et~al., 2013]{cai2013environmental}
Cai, J.~J., Foug{\`e}res, A.-L., and Mercadier, C. (2013).
\newblock Environmental data: multivariate extreme value theory in practice.
\newblock {\em Journal de la Soci{\'e}t{\'e} Fran{\c{c}}aise de Statistique},
  154(2):178--199.

\bibitem[Chen and Zhou, 2024]{chen2024high}
Chen, L. and Zhou, C. (2024).
\newblock High dimensional inference for extreme value indices.
\newblock arXiv preprint arXiv:2407.20491.

\bibitem[Coles et~al., 1999]{coles1999}
Coles, S., Heffernan, J., and Tawn, J. (1999).
\newblock Dependence measures for extreme value analyses.
\newblock {\em Extremes}, 2(4):339--365.

\bibitem[Cooley and Thibaud, 2019]{cooley2019}
Cooley, D. and Thibaud, E. (2019).
\newblock Decompositions of dependence for high-dimensional extremes.
\newblock {\em Biometrika}, 106(3):587--604.

\bibitem[Cui and Zhang, 2018]{cui2018}
Cui, Q. and Zhang, Z. (2018).
\newblock Max-linear competing factor models.
\newblock {\em Journal of Business \& Economic Statistics}, 36(1):62--74.

\bibitem[Danielsson et~al., 2016]{danielsson2016}
Danielsson, J., Ergun, L.~M., de~Haan, L., and de~Vries, C.~G. (2016).
\newblock Tail index estimation: Quantile driven threshold selection.
\newblock {\em Available at SSRN 2717478}.

\bibitem[de~Haan and Ferreira, 2006]{dehaanferreira2006}
de~Haan, L. and Ferreira, A. (2006).
\newblock {\em Extreme Value Theory: an Introduction}.
\newblock Springer-Verlag Inc.

\bibitem[de~Valk, 2016]{valk2016}
de~Valk, C. (2016).
\newblock Approximation and estimation of very small probabilities of
  multivariate extreme events.
\newblock {\em Extremes}, 19(4):687--717.

\bibitem[Drees and de~Haan, 2015]{drees2015}
Drees, H. and de~Haan, L. (2015).
\newblock Estimating failure probabilities.
\newblock {\em Bernoulli}, 21(2):957--1001.

\bibitem[Einmahl et~al., 2013]{einmahl2013}
Einmahl, J.~H., de~Haan, L., and Krajina, A. (2013).
\newblock Estimating extreme bivariate quantile regions.
\newblock {\em Extremes}, 16(2):121--145.

\bibitem[Einmahl et~al., 2022]{einmahl2022}
Einmahl, J.~H., Ferreira, A., de~Haan, L., Neves, C., and Zhou, C. (2022).
\newblock Spatial dependence and space--time trend in extreme events.
\newblock {\em The Annals of Statistics}, 50(1):30--52.

\bibitem[Einmahl et~al., 2018]{einmahl2018}
Einmahl, J.~H., Kiriliouk, A., and Segers, J. (2018).
\newblock A continuous updating weighted least squares estimator of tail
  dependence in high dimensions.
\newblock {\em Extremes}, 21:205--233.

\bibitem[Embrechts et~al., 2016]{embrechts2016bernoulli}
Embrechts, P., Hofert, M., and Wang, R. (2016).
\newblock Bernoulli and tail-dependence compatibility.
\newblock {\em The Annals of Applied Probability}, 26(3):1636--1658.

\bibitem[Engelke et~al., 2019]{engelke2019nr2}
Engelke, S., De~Fondeville, R., and Oesting, M. (2019).
\newblock Extremal behaviour of aggregated data with an application to
  downscaling.
\newblock {\em Biometrika}, 106(1):127--144.

\bibitem[Fiebig et~al., 2017]{fiebig2017}
Fiebig, U.-R., Strokorb, K., and Schlather, M. (2017).
\newblock The realization problem for tail correlation functions.
\newblock {\em Extremes}, 20:121--168.

\bibitem[Fomichov and Ivanovs, 2023]{fomichov2020}
Fomichov, V. and Ivanovs, J. (2023).
\newblock Spherical clustering in detection of groups of concomitant extremes.
\newblock {\em Biometrika}, 110(1):135--153.

\bibitem[Fougeres et~al., 2013]{fougeres2013}
Fougeres, A.-L., Mercadier, C., and Nolan, J.~P. (2013).
\newblock Dense classes of multivariate extreme value distributions.
\newblock {\em Journal of Multivariate Analysis}, 116:109--129.

\bibitem[Gissibl and Kl{\"u}ppelberg, 2018]{gissibl2015}
Gissibl, N. and Kl{\"u}ppelberg, C. (2018).
\newblock Max-linear models on directed acyclic graphs.
\newblock {\em Bernoulli}, 24(4A):2693--2720.

\bibitem[Groetzner and D{\"u}r, 2020]{groetzner2020}
Groetzner, P. and D{\"u}r, M. (2020).
\newblock A factorization method for completely positive matrices.
\newblock {\em Linear Algebra and its Applications}, 591:1--24.

\bibitem[He and Einmahl, 2017]{he2017}
He, Y. and Einmahl, J.~H. (2017).
\newblock Estimation of extreme depth-based quantile regions.
\newblock {\em Journal of the Royal Statistical Society. Series B (Statistical
  Methodology)}, pages 449--461.

\bibitem[Hentschel et~al., 2025]{hentschel2025statistical}
Hentschel, M., Engelke, S., and Segers, J. (2025).
\newblock Statistical inference for h{\"u}sler--reiss graphical models through
  matrix completions.
\newblock {\em Journal of the American Statistical Association},
  120(550):909--921.

\bibitem[Hill, 1975]{hill1975}
Hill, B.~M. (1975).
\newblock A simple general approach to inference about the tail of a
  distribution.
\newblock {\em The Annals of Statistics}, pages 1163--1174.

\bibitem[Huser and Wadsworth, 2022]{huser2022advances}
Huser, R. and Wadsworth, J.~L. (2022).
\newblock Advances in statistical modeling of spatial extremes.
\newblock {\em Wiley Interdisciplinary Reviews: Computational Statistics},
  14(1):e1537.

\bibitem[Jan{\ss}en et~al., 2023]{janssen2023tail}
Jan{\ss}en, A., Neblung, S., and Stoev, S. (2023).
\newblock Tail-dependence, exceedance sets, and metric embeddings.
\newblock {\em Extremes}, 26(4):747--785.

\bibitem[Jan{\ss}en and Wan, 2020]{janssen2020}
Jan{\ss}en, A. and Wan, P. (2020).
\newblock $ k $-means clustering of extremes.
\newblock {\em Electronic Journal of Statistics}, 14(1):1211--1233.

\bibitem[Kim and Kokoszka, 2022]{kim2022}
Kim, M. and Kokoszka, P. (2022).
\newblock Extremal dependence measure for functional data.
\newblock {\em Journal of Multivariate Analysis}, 189:104887.

\bibitem[Kiriliouk, 2020]{kiriliouk2020}
Kiriliouk, A. (2020).
\newblock Hypothesis testing for tail dependence parameters on the boundary of
  the parameter space.
\newblock {\em Econometrics and Statistics}, 16:121--135.

\bibitem[Kiriliouk and Naveau, 2020]{kiriliouk2020nr2}
Kiriliouk, A. and Naveau, P. (2020).
\newblock Climate extreme event attribution using multivariate
  peaks-over-thresholds modeling and counterfactual theory.
\newblock {\em The Annals of Applied Statistics}, 14(3):1342--1358.

\bibitem[Kl{\"u}ppelberg and Krali, 2021]{kluppelberg2019}
Kl{\"u}ppelberg, C. and Krali, M. (2021).
\newblock Estimating an extreme {B}ayesian network via scalings.
\newblock {\em Journal of Multivariate Analysis}, 181:104672.

\bibitem[Krali et~al., 2023]{krali2023heavy}
Krali, M., Davison, A.~C., and Kl{\"u}ppelberg, C. (2023).
\newblock Heavy-tailed max-linear structural equation models in networks with
  hidden nodes.
\newblock {\em arXiv preprint arXiv:2306.15356}.

\bibitem[Larsson and Resnick, 2012]{larsson2012}
Larsson, M. and Resnick, S.~I. (2012).
\newblock Extremal dependence measure and extremogram: the regularly varying
  case.
\newblock {\em Extremes}, 15(2):231--256.

\bibitem[Lenderink et~al., 2009]{lenderink2009}
Lenderink, G., Van~Meijgaard, E., and Selten, F. (2009).
\newblock Intense coastal rainfall in the {N}etherlands in response to high sea
  surface temperatures: analysis of the event of {A}ugust 2006 from the
  perspective of a changing climate.
\newblock {\em Climate Dynamics}, 32(1):19--33.

\bibitem[Mainik and Embrechts, 2013]{mainik2013}
Mainik, G. and Embrechts, P. (2013).
\newblock Diversification in heavy-tailed portfolios: properties and pitfalls.
\newblock {\em Annals of Actuarial Science}, 7(1):26--45.

\bibitem[Resnick, 1987]{resnick1987}
Resnick, S.~I. (1987).
\newblock {\em Extreme Values, Regular Variation, and Point Processes}.
\newblock Springer, New York.

\bibitem[Resnick, 2007]{resnick2007}
Resnick, S.~I. (2007).
\newblock {\em Heavy-tail phenomena: probabilistic and statistical modeling}.
\newblock Springer Science \& Business Media.

\bibitem[Samorodnitsky and Taqqu, 1994]{samorodnitsky1994}
Samorodnitsky, G. and Taqqu, M.~S. (1994).
\newblock {\em Stable Non-Gaussian Random Processes: Stochastic Models with
  Infinite Variance}.
\newblock Chapman \& Hall.

\bibitem[Shaked-Monderer and Berman, 2021]{shaked2021copositive}
Shaked-Monderer, N. and Berman, A. (2021).
\newblock {\em Copositive and completely positive matrices}.
\newblock World Scientific.

\bibitem[Shyamalkumar and Tao, 2020]{shyamalkumar2020tail}
Shyamalkumar, N.~D. and Tao, S. (2020).
\newblock On tail dependence matrices: The realization problem for parametric
  families.
\newblock {\em Extremes}, 23(2):245--285.

\bibitem[Strokorb et~al., 2015]{strokorb2015}
Strokorb, K., Schlather, M., et~al. (2015).
\newblock An exceptional max-stable process fully parameterized by its extremal
  coefficients.
\newblock {\em Bernoulli}, 21(1):276--302.

\end{thebibliography}

\egroup

\end{document}